\documentclass[lettersize,journal]{IEEEtran}
\usepackage[T1]{fontenc}
\usepackage[a4paper]{geometry}
\usepackage{amsmath}
\usepackage{color}
\usepackage{graphicx}
\usepackage{amssymb}
\usepackage{amsthm}
\usepackage[normalem]{ulem}
\makeatletter

\newtheorem{theorem}{Theorem}
\newtheorem{lemma}{Lemma}



\def\cA{{\cal A}}
\def\cB{{\cal B}}
\def\cC{{\cal C}}
\def\cD{{\cal D}}

\def\cU{{\cal U}}

\def\cI{{\cal I}}

\def\mR{ \mathbb{R}}

\def\mC{{\mathbb{C}}}
\def\cA{{\mathcal{A}}}
\def\cB{{\mathcal{B}}}
\def\cC{{\mathcal{C}}}

\newtheorem{assumption}{Assumption}

\makeatother

\usepackage{babel}

\ifCLASSINFOpdf
\else
\fi
\usepackage[caption=false,font=footnotesize]{subfig}

\usepackage{url}

\begin{document}
	
	\title{Reverse-Time Diffusion Processes for Discrete  Time Linear and Nonlinear Systems with Non-Gaussian Noise}
	\author{Soura Dasgupta, \IEEEmembership{Life Fellow, IEEE}, Brian D. O. Anderson, \IEEEmembership{Life Fellow, IEEE},
		and Raghuraman Mudumbai, \IEEEmembership{Member, IEEE}
		\thanks{Supported in part by   NSF Grant  2540120. Dasgupta and Mudumbai are with the Department of Electrical \& Computer Engineering, University of Iowa, Iowa City, IA 52242. Anderson is with the School of Engineering, The Australian National University, ACT 2601, Australia.  email:\{dasgupta,rmudumbai\}@engineering.uiowa.edu, brian.anderson@anu.edu.au.} 
	}
\maketitle

\begin{abstract}
 Generative AI  relies on finding  reverse time models for a  discrete-time forward diffusion with non-Gaussian initial state, but uses indirect approaches  as there is no  theory for direct reversal in discrete time. This paper develops a theory for directly finding  reverse diffusions for  discrete time nonlinear processes with non-Gaussian states and  process noise. We also give a necessary and sufficient condition for the reverse model to be input-affine when the forward process is linear and the process noise Gaussian, and show that for a wide variety of state densities an input-affine reverse diffusion does not exist. This is among several differences between the reversal of stochastic difference equations and their continuous time counterparts.
\end{abstract}

\begin{IEEEkeywords}
Stochastic Difference Equation, Reverse Diffusion, Generative AI, Nonlinear, Non-Gaussian.
\end{IEEEkeywords}

%
\IEEEpeerreviewmaketitle

\section{Introduction}\label{sintro}

Reversing a forward stochastic discrete-time diffusion is fast becoming a core component of diffusion based Generative AI, \cite{Sazara}. Examples include image generation and enhancement \cite{NEURIPS2020_DDPM}, \cite{song2019generative}, \cite{song2021scorebased}, \cite{DhariwalNEURIPS2021_49ad23d1}, \cite{CroitoruSurvey},   audio synthesis, \cite{kong2021diffwave}, and natural language processing, \cite{LiText2020}. There is compelling evidence that this technology outperforms  generative adversarial networks (GANs), \cite{DhariwalNEURIPS2021_49ad23d1}.

The main idea  is as follows. The forward diffusion progressively adds noise until for example an image becomes completely blurred, typically through a linear time-varying (LTV) discrete-time process like
\begin{equation}\label{linearnoising}
	x_{k+1}=\sqrt{1-\epsilon_{k}}x_k+\sqrt{\epsilon_{k}}u_k
\end{equation}
where $u_k\sim N(0,I)$ is white and independent of the current and past states $x_i$ and $0<\epsilon_{k}\leq 1$ is progressively increased from a small  number to one, \cite{NEURIPS2020_DDPM}. The
reverse diffusion then performs a denoising operation, \cite{song2021scorebased}.

We are motivated by the way in which this reverse diffusion  is obtained. Instead of a direct reversal, one first obtains a stochastic  differential equation (SDE)  by taking the limit (as the associated discretization interval tends to zero) of  the forward difference equation. A reverse continuous-time diffusion is then obtained using theory developed  in \cite{anderson1982reverse}. Finally, the reverse difference equation is obtained by discretizing the reverse SDE. 

This approach of using an approximation of another approximation is fraught with issues. First, approximations always induce errors. Second, discretizing a reverse SDE to sufficient fidelity  is computationally onerous. A state space of as modest a dimension as 256 may take as many as 48 hours to accomplish reversal using standard packages like \cite{Package}. More fundamentally, \cite{clements} presents SDEs with bounded solutions whose discretizations always lead to  unbounded solutions.

\textit{Evidently, this indirect approach is compelled by the absence of a theory for directly reversing a stochastic difference equation.} The only exception, barring \cite{cdc25} (the preliminary version of this paper),  is \cite{Kailath} (see \cite{Vergehesecorrection} for a correction) which  reverses a  discrete-time LTV system with Gaussian input and \textit{initial Gaussian state}. Applications of Generative AI generally prohibit  a Gaussian initial state. Accordingly, in this paper we provide a theory for directly obtaining a discrete-time reverse process for a forward nonlinear time-varying (NLTV) diffusion in which  with $x_k\in \mR^n$, a \textit{potentially non-Gaussian sequence} $u_k\in \mR^m$ and  a continuously differentiable $f_k(\cdot,\cdot):\mR^{n\times m}\rightarrow \mR^n$, there holds
\begin{equation}\label{nonforward}
	x_{k+1}=f_k(x_k,u_k), ~\forall~k_0\leq k\leq k_f.
\end{equation}
We further assume that $u_k$, is white, i.e. its samples  are jointly independent, and  the $u_k$ and $x_{k_0}$ sequence  are jointly independent. This is in keeping with the notion of a forward process and ensures $u_k$ and $x_i$ are independent for all $i\leq k$. This model thus significantly generalizes and goes well beyond \cite{cdc25} that gave preliminary results without proofs for reversing an LTV forward diffusion
\begin{equation}\label{vecforward}
	x_{k+1}=F_k x_k+G_k u_k,, ~\forall~k_0\leq k\leq k_f,
\end{equation}
where at each time $k$, $F_k\in\mathbb R^{n\times n},~ G_k\in\mathbb R^{n\times m}$ with $u_k\sim N(0,I)$. \textit{Crucially, in keeping with the exigencies of Generative AI, unlike \cite{Kailath}, neither \cite{cdc25} nor this paper  requires  the initial state to be Gaussian.}  This makes the problem nontrivial, as successive states may have different types of  densities, and even a time-invariant version of \eqref{vecforward} will induce a nonlinear time-varying reverse diffusion. Separately, we note that even for \eqref{vecforward}, the results in \cite{cdc25} \textit{are  a subset of those  here}.

In  \cite{anderson1982reverse}, a method is given for the construction from
\begin{equation}\label{eq:forward}
	dx=f(x)dt+g(x)du
\end{equation}
where  $x\in\mathbb R^n$ and $du$ is an increment of an $m$-dimensional Wiener process, a \textit{reverse-time} diffusion equation of the form
\begin{equation}\label{eq:backward}
	dx=f_r(x)dt+g_r(x)dv.
\end{equation}
The distinction between the forward-time equation and the reverse-time equation is that for the former, an increment $du(t)$ is independent of $x(s)$ for all $t>s$, so that the equation can  be seen as evolving forward in time, whereas for the reverse-time equation, an increment $dv(t)$ is independent of $x(s)$ for all $t<s$, so that the equation can naturally be thought of as evolving backward in time. Both equations produce the same sample functions.

This paper, develops a theory for directly reversing (\ref{nonforward}) i.e. obtaining a reverse discrete-time process:
\begin{equation}\label{nonbackward}
	x_{k}=a_{k+1}(x_{k+1},v_{k+1}), ~\forall~k_0\leq k\leq k_f
\end{equation} 
together with a second equation constituting a formula for the reverse time noise 
\begin{equation}\label{genv}
	v_k=h_k(x_k,u_{k-1}), ~\forall~k_0< k\leq k_f+1,
\end{equation}
with $a_k(\cdot,\cdot):\mR^{n\times m}\rightarrow \mR^n$ and  
$h_k(\cdot,\cdot):\mR^{n\times m}\rightarrow \mR^m$ continuously differentiable.
The sample paths in (\ref{nonbackward}) must match those in (\ref{vecforward}). To justify designating $v_k$ as the noise in a reverse-time diffusion process, we require that $v_{k}$ and $x_i$ be jointly independent for all $k\leq i$ and  $v_{k}$  be white, though \textit{not necessarily Gaussian}. 

Stochastic difference equations do not always inherit  properties of SDEs.  The example of \cite{clements} provides an SDE whose solution is always bounded, but for which any standard time-discretization is unbounded. This is because there exists for this SDE no stepsize applicable to  the \textit{semi-infinite} interval $[t_0,\infty)$ that guarantees stability of an Euler approximation. 
Unsurprisingly, many properties of reverse SDEs do not hold in discrete time.   

 One may ask if the conditional probability $p_{X_k|X_{k+1}}(x_k|x_{k+1})$ in the reverse model for \eqref{vecforward} is Gaussian is when $u_k\in N(0,I)$. This  likely requires that for some $A_{k+1}(\cdot): \mR^n\rightarrow \mR^n$, $B_{k+1}(\cdot): \mR^n\rightarrow \mR^{n\times m}$, 
\begin{equation}\label{vectreverse}
	x_k=A_{k+1}(x_{k+1})+B_{k+1}(x_{k+1})v_{k+1}
\end{equation}
with $v_{k+1}$ Gaussian. Indeed   (\ref{eq:backward}) is \textit{input-affine} (in $dv$) and $dv$ is also the increment of a Wiener process. Thus, we explore if \eqref{vectreverse} can be the reverse model  for \eqref{vecforward}. While we show that if in \eqref{vecforward} the initial state and $u_k$ are Gaussian, the reverse model is indeed as in (\ref{vectreverse}), and in fact linear with Gaussian $v_{k+1}$, we prove that for a wide class of initial state distributions the reverse model for \eqref{vecforward} cannot be input affine   even when $u_k\sim N(0,I)$. This includes  states with  
Gaussian mixture densities, which are of great interest as they   approximate most densities, \cite{parzen1962estimation}, 
\cite{li1999mixture}, \cite{van2024mixtures}.

Section \ref{sprel} defines the problem. Section \ref{sind} shows that the independence of $v_k$ and $x_k$ ensures that $v_k$ is white and independent of all $x_{k+i}$, $i\geq 0.$ It also gives a  necessary and sufficient condition for this independence. Section \ref{sbij} shows how to ensure that  forward and reverse models generate the same sample paths. Section \ref{sreverse} shows that a reverse diffusion exists if in (\ref{genv}) every $x_{k+1}$, $v_{k+1}$ pair uniquely determines $u_k$, and also $v_k$ and $x_k$ are independent. It shows that the reverse model for \eqref{vecforward} will be input-affine only if \eqref{genv} is affine in $v_{k+1}$.  Section \ref{ssuff} provides  methods for meeting these conditions guaranteeing the existence of a reverse model (though it need not be input-affine), and gives a family of $h_{k}(\cdot,\cdot)$  that ensure that $v_{k}$ and $x_{k}$ are independent. Section \ref{sneg}, deals with \eqref{vecforward} and is the first section that assumes a Gaussian $u_k$. It shows that our results recover those of \cite{Kailath} and also  that a wide class of forward LTV processes do not have input-affine reversals. Section \ref{sex} gives the important example when the state densities are Gaussian Mixtures. Section \ref{sconc}, the conclusion,  includes a discussion on how the ideas might be made  directly applicable to Generative AI.

\section{Problem Statement and Preliminaries}\label{sprel}

We make the following assumptions.
\begin{assumption}\label{ass:two}
	\begin{enumerate}
		\item The  function $f_k:\mR^{n\times m}\rightarrow \mR^n$ is measurable and continuously differentiable.
        \item
		Every pair $x_{k+1}$ and $u_k$ uniquely determines $x_k$. More precisely, there  is a  continuously differentiable implicit function  $g_k:\mR^{n\times m}\rightarrow \mR^n$ such that
		\begin{equation}\label{implicitforward}
			x_{k}=g_{k+1}(x_{k+1}, u_k).
		\end{equation} 
		
		For \eqref{vecforward} this means $F_k$ is nonsingular.
		\item 
		Every pair $x_{k+1}$ and $x_k$ uniquely determines $u_k$ through a continuously differentiable implicit function. For \eqref{vecforward} this means rank$[G_k]=m$.
		\item 
		The probability density function (pdf) $p_{X_k}(x)$ of $x_{k}$ exists as does $p_{U_k}(u)$ of $u_{k}$. Moreover, the joint pdf $p_{X_{k+1}U_k}(x,u)\neq 0$ for all $x\in\mR^n$ and  $u\in\mR^m$. 
        \item The state is independent with current and future process noise samples, or formally, $x_{k_0}, u_{k_0}, u_{k_0+1}, \cdots, u_{k_f+1} $ are jointly independent.
	\end{enumerate}
\end{assumption}
Note \textit{we do not need either the initial state nor $u_k$ to be  Gaussian or zero mean.} The measurability of $f_k(\cdot,\cdot)$ is a standard technical condition.
The second and third requirements enable the sample paths in the reverse and forward models to match. In fact for \eqref{vecforward} their violation  precludes the possibility of recovering certain sample paths \cite{Kailath}. The third requirement also precludes degenerate situations where  the state is unaffected by the input process.  The condition on the joint pdfs also holds in most reasonable settings. One of its consequences is that $p_{X_{k}}(x) \neq 0 $ and  $p_{U_{k}}(u) \neq 0 $ everywhere. It is noteworthy that Generative AI algorithms work with the score function $\nabla \log p_{X_k}(x)$ which is undefined if $p_{X_{k}}(x) = 0 $.

Finally, 5) has been motivated in the introduction and is fundamental to the notion of a forward diffusion. It also implicitly affirms the whiteness of the process noise sequence $u_k$. Strictly speaking, the sample $u_{k_f+1}$ is not needed either in the forward or the reverse diffusion, but its inclusion simplifies the  narrative to follow.

We now define the  problem. 

\noindent
\textbf{Problem Statement:} Consider the forward diffusion  (\ref{nonforward}) with  Assumption \ref{ass:two} in force. Find a reverse time diffusion \eqref{nonbackward} and a backward noise sequence defined by (\ref{genv})   so that:
1) $x_k$  generated by \eqref{nonforward} obeys \eqref{nonbackward} under (\ref{genv}) and
	 the mapping between $(x_k,u_k)$ and $(x_{k+1},v_{k+1})$ is a bijection;
	2) 
	$v_k$ and $x_i$ are independent for all $i\geq k$;
	3)
	 $v_{k_0+1}, \cdots, v_{k_f+1}$ are jointly independent.
     
     \vspace{1 mm}
     
The second  property is motivated in the Introduction and the last captures the requirement that $v_k$  be white.  \textit{We do not require either $v_k$ or $u_k$ to be Gaussian.}  Due to the Markovian nature of the forward system \eqref{nonforward} (or the specialization (\ref{vecforward})), and separately of the reverse-time system (\ref{nonbackward}) and (\ref{genv}), requirement 1) ensures the following. Suppose we choose $x_{k_f+1}$ in   (\ref{nonbackward}) and (\ref{genv}) to be as generated by \eqref{nonforward} or (\ref{vecforward}), and choose $v_{k_f+1}=h_{k_f+1}(x_{k_f+1},u_{k_f}) $. Then the state sequences generated by \eqref{nonforward} or (\ref{vecforward}) and    (\ref{nonbackward}) under (\ref{genv}) are the same sequence of random vectors, making (\ref{nonbackward}) under (\ref{genv}) a  reverse diffusion for the forward process for $k_0\leq k\leq k_f$. 
\textit{We  say that \eqref{nonbackward} is a  reverse diffusion  if 1)-3) in the problem statement are met.}

\section{Whiteness of $v_k$ and  independence with future states}\label{sind}
We now show that under \eqref{nonforward} and \eqref{genv}, the requirements that  $v_k$ and $x_i$ be independent for all $i\geq k$ and  $v_k$ be  white, are  both satisfied if  the random variables $v_{k}$ and $x_k$ are independent. The existence and construction of \eqref{genv} meeting this requirement is given later. 

\begin{lemma} \label{lwhite}
	Under Assumption \ref{ass:two} suppose \eqref{nonforward} holds and for some measurable $h_k(\cdot,\cdot):\mR^{n\times m}\rightarrow \mR^m$,
	(\ref{genv}) holds.
	Assume the following apply: (i)  $x_k, u_k, u_{k+1}, \cdots, u_{k_f+1}$  are jointly independent  for all $k_0\leq k\leq k_f+1$; and (ii) for all $k_0\leq k\leq k_f$, $v_{k+1}$ and  $x_{k+1}$ are  independent. Then the following hold: (A) 	 $v_{k+1}$ and $x_{k+i+1}$,   are   independent  for all $0\leq i\leq k_f-k$ and $k_0\leq k\leq k_f$
	 and (B) $v_{k_0+1},\cdots,v_{k_f+1}$ are jointly independent.
\end{lemma}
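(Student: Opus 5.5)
Both claims follow from one observation. For each $k$, the pair $(x_{k+1},v_{k+1})$ is a measurable function of $(x_k,u_k)$. Indeed, $x_{k+1}=f_k(x_k,u_k)$, and then by \eqref{genv} we have $v_{k+1}=h_{k+1}(f_k(x_k,u_k),u_k)$.

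Combining this with (i), $(x_{k+1},v_{k+1})$ is independent of $(u_{k+1},\dots,u_{k_f+1})$. This is because functions of disjoint subfamilies of a jointly independent family remain jointly independent. Hypothesis (ii) says that $x_{k+1}$ and $v_{k+1}$ are independent. From these two facts we obtain the key step:
\[
v_{k+1},\; x_{k+1},\; u_{k+1},\dots,u_{k_f+1}\ \text{are jointly independent.}
\]
The proof of this is a factorization of the joint law. By the first independence, the law of $(v_{k+1},x_{k+1},u_{k+1},\dots)$ is the product of the law of $(v_{k+1},x_{k+1})$ and the law of $(u_{k+1},\dots,u_{k_f+1})$. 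By (ii), the law of $(v_{k+1},x_{k+1})$ factors further. By (i) applied at index $k+1$, the noise block is itself a product. Multiplying these factorizations gives joint independence.

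For (A), iterate the forward map. For $i\ge 1$, $x_{k+i+1}$ is a measurable function of $(x_{k+1},u_{k+1},\dots,u_{k+i})$, obtained by composing the $f_j$. By the key step, $v_{k+1}$ is independent of this whole vector, so $v_{k+1}$ is independent of $x_{k+i+1}$. The case $i=0$ is exactly (ii).

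For (B), I will prove by downward induction on $k$ the stronger statement that $v_{k+1},v_{k+2},\dots,v_{k_f+1}$ are jointly independent. The induction also carries a second fact: the tuple $(v_{k+2},\dots,v_{k_f+1})$ is a measurable function of $(x_{k+1},u_{k+1},\dots,u_{k_f})$. This holds because each $v_{j+1}=h_{j+1}(f_j(x_j,u_j),u_j)$ and each $x_j$ is obtained from $x_{k+1}$ through the $u$'s.

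By the key step, $v_{k+1}$ is independent of $(x_{k+1},u_{k+1},\dots,u_{k_f+1})$, and hence of $(v_{k+2},\dots,v_{k_f+1})$. The induction hypothesis says the later $v$'s are mutually jointly independent. Combining these gives joint independence of the full tuple, again by factoring the joint distribution as the law of $v_{k+1}$ times the product of the marginals of the later $v$'s. The base case $k=k_f$ is trivial, and the case $k=k_0$ gives (B).

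The main obstacle is the key step. Pairwise independence of $x_{k+1}$ and $v_{k+1}$, together with independence of the pair from the future noise, must be upgraded to joint independence. The point to check carefully is that $v_{k+1}$ depends on $u_k$ but not on $u_{k+1}$ or later noise, so the functional dependence really does respect the partition $\{x_k,u_k\}$ versus $\{u_{k+1},\dots\}$ used in (i). The rest is bookkeeping with measurable compositions, which is why measurability of $h_k$ and $f_k$ is assumed.
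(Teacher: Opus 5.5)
Your proof is correct and follows essentially the same route as the paper. Both arguments first show that $(x_{k+1},v_{k+1})$, being a measurable function of $(x_k,u_k)$, is independent of $(u_{k+1},\dots,u_{k_f+1})$, then factor the joint law using (ii) to get $v_{k+1}$ independent of $(x_{k+1},u_{k+1},\dots,u_{k_f+1})$, and finally push this through the measurable maps that generate the later states and the later $v_j$. The only differences are presentational: you record the slightly stronger full joint independence in your key step, and you turn the paper's ``repeated factorization'' for (B) into an explicit downward induction.
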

\begin{proof}
	Because  \eqref{genv} is a measurable function of $x_k$ and $u_{k-1}$, from (i) of the lemma hypothesis, $(v_k,x_k)$ and $(u_k,  \cdots, u_{k_f+1})$ are  independent. For Borel sets $\cA,\cB, \cC$
	\begin{align*}
		&\mbox{Pr}\left (v_k\in \cA, x_k\in \cB, (u_k,\cdots, u_{k_f+1})\in \cC\right )\notag\\
		&=\mbox{Pr}\left (v_k\in \cA, x_k\in \cB\right ) \mbox{Pr}\left ((u_k,\cdots, u_{k_f+1})\in \cC\right )\notag\\
		&=\mbox{Pr}\left (v_k\in \cA\right ) \mbox{Pr}\left (x_k\in \cB\right ) \mbox{Pr}\left ((u_k,\cdots, u_{k_f+1})\in \cC\right )\notag\\
		&=\mbox{Pr}\left (v_k\in \cA\right )  \mbox{Pr}\left (x_k\in \cB, (u_k,\cdots, u_{k_f+1})\in \cC\right ).
	\end{align*}
Here the penultimate equation comes from (ii) and the last from (i) of the lemma hypothesis. Thus
\begin{equation}\label{vxind}
v_k \mbox{ and }(x_k, u_k,\cdots, u_{k_f+1}) \mbox{ are independent.}
\end{equation}
The repeated application of \eqref{nonforward} yields  measurable functions $\bar{f}_{ki}$ such that for all $i\geq 1,$
\begin{equation}\label{fki}
	x_{k+i}=\bar{f}_{ki}(x_k, u_k, \cdots, u_{k+i-1}).
\end{equation}
Consequently, $(x_k,x_{k+1},\cdots,x_{k_f+1})$ is a measurable function of the sigma algebra
$\sigma(x_k, u_k, \cdots, u_{kf+1}).$
 Thus from  \eqref{vxind}, $v_k$ and $(x_k, x_{k+1},\cdots, x_{k_f+1})$ are independent.
 This proves (A). 

Consider any $j>k$.  From \eqref{fki} 
  $v_j=h_j(x_j,u_{j-1})$ is measurable with respect to
  $\sigma(x_k, u_k, \cdots, u_{j-1}).$
 Thus from \eqref{vxind}, $v_k$ and $(v_{k+1}, \cdots, v_{j}) $ are independent.  For $k_1< \cdots <k_m $ and Borel sets $\cA_1,\cdots, \cA_m$, Pr$(v_{k_1}\in \cA_1, v_{k_2}\in \cA_2, \cdots, v_{k_m}\in \cA_m)$ equals Pr$(v_{k_1}\in \cA_1)$ Pr$( v_{k_2}\in \cA_2, \cdots, v_{k_m}\in \cA_m )$.
 Repeated factorization leads to
 Pr$(v_{k_1}\in \cA_1, v_{k_2}\in \cA_2, \cdots, v_{k_m}\in \cA_m)$  equaling  $\prod_{i=1}^{m}\mbox{Pr}(v_{k_i}\in \cA_i)$,
 proving (B).
 
\end{proof}

So 2) and 3) of the problem statement are met if $v_{k+1}$ and $x_{k+1}$ are independent. We now give a nonintuitive necessary and sufficient condition (which we unpack later) for $v_{k+1}$ and $x_{k+1}$ to be independent. It involves a conditional characteristic function of $h$.  

\begin{lemma}\label{lgvecind}
	Consider \eqref{genv} with the dimensions given in Lemma \ref{lwhite}.  Associate $x_{k+1}$ with $X$, $v_{k+1}$ with $V$, $u_k$ with $U$ and $h_{k+1}(\cdot,\cdot)$ with $h(\cdot,\cdot)$. Use the common notation of using small letters such as $x$ as a value of a corresponding capitalized random variable $X$. Assume $p_X(x)$ is the pdf of $X$.
	Then  $v_{k+1}$ and $x_{k+1}$ are independent  iff for some  $\Phi(\omega)$  functionally independent of $x$, 
	\begin{equation}\label{gcondchar}
		E\left[\left . e^{j\omega^{\top}h(x,u)}\right |x\right]=\Phi(\omega)
	\end{equation}
	for all $\omega\in \mR^m$ and $x$ for which $p_X(x)\neq 0$.
\end{lemma}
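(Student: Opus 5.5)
The plan is to characterize independence of $V=h(X,U)$ and $X$ through characteristic functions. Since $V$ and $X$ are independent iff their joint distribution factors, it suffices to show that factorization is equivalent to \eqref{gcondchar}. I will work with the joint pdf $p_{XU}(x,u)$, which exists by Assumption \ref{ass:two}. The conditional expectation in \eqref{gcondchar} is then well defined wherever $p_X(x)\neq 0$:
\begin{equation*}
E\left[\left. e^{j\omega^{\top}h(x,u)}\right|x\right]=\int e^{j\omega^{\top}h(x,u)}\frac{p_{XU}(x,u)}{p_X(x)}\,du .
\end{equation*}
This function of $\omega$ is exactly the conditional characteristic function of $V$ given $X=x$.

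\textbf{Necessity.} Suppose $V$ and $X$ are independent. Then the conditional law of $V$ given $X=x$ coincides, for $p_X$-almost every $x$, with the marginal law of $V$. Its characteristic function is therefore $\Phi(\omega)=E[e^{j\omega^{\top}V}]$, which does not depend on $x$, and this gives \eqref{gcondchar}. To obtain this pointwise for every $x$ with $p_X(x)\neq 0$, rather than only almost everywhere, I will use the regular version of the conditional law given by the density formula above. The continuity of $h$ (and of the densities, as implicitly assumed) then upgrades the almost-everywhere statement to an everywhere one on the support.

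\textbf{Sufficiency.} Suppose \eqref{gcondchar} holds. For any bounded measurable $\psi$ and any $\omega$, the tower property gives
\begin{equation*}
E\left[\psi(X)e^{j\omega^{\top}V}\right]=E\left[\psi(X)\,E\left[\left. e^{j\omega^{\top}h(X,U)}\right|X\right]\right]=\Phi(\omega)E[\psi(X)].
\end{equation*}
Taking $\psi\equiv 1$ shows that $\Phi$ is the characteristic function of $V$. Taking $\psi(x)=e^{j\nu^{\top}x}$ shows that the joint characteristic function of $(X,V)$ factors as $E[e^{j\nu^{\top}X}]\,\Phi(\omega)$. By the uniqueness theorem for characteristic functions, the joint law of $(X,V)$ is the product of the marginals, so $V$ and $X$ are independent.

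The main obstacle is the measure-theoretic gap in the necessity direction, namely going from ``for almost every $x$'' to ``for all $x$ with $p_X(x)\neq 0$.'' I will handle it by fixing the density-based version of the conditional expectation and appealing to continuity. If this proves too delicate, I will interpret \eqref{gcondchar} as holding $p_X$-almost everywhere, which is all that the sufficiency direction uses.
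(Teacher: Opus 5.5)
Your proposal is correct. Its sufficiency half uses the same mechanism as the paper: condition on $X$, apply the tower property, and factor the joint characteristic function. You finish more directly, though. Taking $\psi\equiv 1$ (i.e.\ setting the $x$-frequency to zero and using $\Phi_X(0)=1$) identifies $\Phi$ with $\Phi_V$ at once, and uniqueness of characteristic functions then gives independence. The paper instead goes through the inverse Fourier transform $q$ of $\Phi$ and a factorization $p_{VX}=q(v)p_X(x)$. That detour tacitly assumes $(V,X)$ has a joint density and that $\Phi$ inverts to a function; your route needs neither.

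For necessity, the paper writes $\Psi(\omega,x)$ for the left side of \eqref{gcondchar}. It shows $\int(\Phi_V(\omega)-\Psi(\omega,x))e^{j\nu^{\top}x}p_X(x)\,dx=0$ for all $\nu$ and invokes Fourier uniqueness in $x$. You instead cite the disintegration fact that, under independence, the conditional law of $V$ given $X=x$ equals the marginal law for almost every $x$. The paper's computation is essentially a hands-on proof of that fact.

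Both arguments give the identity only for almost every $x$, and the paper passes silently to every $x$ with $p_X(x)\neq 0$, so you are right to flag this. Your continuity upgrade works if $x\mapsto p_{XU}(x,u)$ and $p_X$ are continuous. However, continuity of the integrand alone does not let you move the limit through $\int du$. Scheff\'e's lemma (using $\int p_{XU}(x_n,u)\,du=p_X(x_n)\to p_X(x)$) gives $L^1$ convergence of $p_{XU}(x_n,\cdot)$, and dominated convergence handles the factor $e^{j\omega^{\top}h(x_n,u)}$. Then $\Psi(\omega,\cdot)$ is continuous on the open set $\{p_X>0\}$, so almost-everywhere equality there becomes equality everywhere there. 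Without such regularity the pointwise claim depends on which versions of the densities are chosen, and the almost-everywhere reading you offer as a fallback is the one that holds in general.
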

\begin{proof}
	The joint characteristic function \cite{papoulis1965probability} of $v_{k+1}$ and $x_{k+1}$ can be rewritten as follows:
	\begin{align}
		\Phi_{VX}(\omega_1,\omega_2)&=E\left[e^{j\omega_1^{\top}v+j\omega_2^{\top}x}\right]\notag\\
		&=E\left[e^{j\omega_2^{\top}x}e^{j\omega_1^{\top}h(x,u)}\right]\;\;\mbox{(by \eqref{genv})}\notag\\
		&=E\left[e^{j\omega_2^{\top}x}E\left[\left .e^{j\omega_1^{\top}h(x,u)}\right|x\right]\right]\notag\\
		&=E\left[e^{j\omega_2^{\top}x}\right]\Phi(\omega_1)\;\;\mbox{(by \eqref{gcondchar})}\notag\\
		&=\Phi(\omega_1)\Phi_X(\omega_2)\notag
	\end{align}
	It is standard that $V$ and $X$ are independent iff
	\[ 
	\Phi_{VX}(\omega_1,\omega_2)=\Phi_{V}(\omega_1)\Phi_{X}(\omega_2)
	\]
	which is equivalent to $p_{VX}(v_{k+1},x_{k+1})$$=p_V(v_{k+1})p_X(x_{k+1})$.  However, note that we have simply shown to this point that $p_{VX}(v_{k+1},x_{k+1})=q(v_{k+1}) p_X(x_{k+1})$ for some function $q$, the inverse Fourier transform of $\Phi$, which has not been identified as the probability density of $v_{k+1}$ nor actually even identified as a probability density. 
	
Nevertheless, as $p_{VX}(v_{k+1},x_{k+1})=p_{V\vert X}(v_{k+1}\vert x_{k+1})$ $p_X(x_{k+1})$, for those values of $x_{k+1}$ for which $p_X(x_{k+1})$ is nonzero, it is immediate that $q(v_{k+1})= p_{V|X}(v_{k+1}|x_{k+1})$, so that the conditional probability is independent of $x_{k+1}$ on the set in question.  The values of $x_{k+1}$ when this might not hold are
	by definition a set of measure zero, the probability density being zero on the set.  Therefore the conditional probability is independent of $x_{k+1}$ almost everywhere, i.e. independence holds. 
	Hence we can conclude the independence of $V$ and $X$.
	
	For the converse, 
    we shall use the quantity 
	\begin{equation}\label{GQdef}
		\Psi(\omega,x)=E\left[\left .e^{j\omega^{\top}h(x,u)}\right|x\right].
	\end{equation}

	Suppose $V$ and $X$ are independent. Then their joint characteristic function must obey
	
	\begin{align*}
		&\Phi_{V}(\omega_1)\Phi_{X}(\omega_2)=\Phi_{VX}(\omega_1,\omega_2)\\
		&=E\left[e^{j\omega_2^{\top}x}E\left[\left .e^{j\omega_1^{\top}h(x,u)}\right|x\right]\right]
		=E\left[e^{j\omega_2^{\top}x}\Psi(\omega_1,x)\right].
	\end{align*}
	Thus, $\Phi_{V}(\omega_1)E[e^{j\omega_2^{\top}x}]=E\left[e^{j\omega_2^{\top}x}\Psi(\omega_1,x)\right]$
	 i.e.,
	\[ 
	\int_{\mR^n} (\Phi_V(\omega_1)-\Psi(\omega_1,x))e^{j\omega_2^{\top} x}p_X(x)dx=0
	\]
	for all $\omega_{i}$.
	In other words, the multidimensional Fourier Transform of $(\Phi_V(\omega_1)-\Psi(\omega_1,x))p_X(x) $, with $\omega_{2}$ as the Fourier variable, is zero for all $\omega_{2}\in \mR^n.$ Thus, $(\Phi_V(\omega_1)-\Psi(\omega_1,x))p_X(x) =0$ for all $\omega_1$ and $x$ and  
	\[ 
	\Psi(\omega,x)=\Phi_V(\omega)
	\]
	for all  $x$ where $p_X(x)\neq 0$ and   $\omega$. Thus  from (\ref{GQdef}),
	(\ref{gcondchar}) holds with $\Phi(\omega)=\Phi_V(\omega)$ whenever $p_X(x)\neq 0$. 
\end{proof}

Thus, we have given  a necessary and sufficient condition for 2) and 3) of the problem statement to be met, namely that (\ref{gcondchar}) holds.  While it does not yet deal with  the existence and construction of a function $h$, it moves us a step closer to that goal. The next section  explores bijection.

\section{Satisfying bijection}\label{sbij}
We give two results. The first is a sufficient condition for the variables in the general reverse model with its defining equations (\ref{nonbackward}) and \eqref{genv} to be bijectively relate to those in the forward diffusion  \eqref{nonforward}. Second, as the existence of an input-affine reverse model like (\ref{vectreverse}) for linear forward processes is also of interest, we give a more specialized condition for such a reverse-time model to exist for a linear forward model \eqref{vecforward}. This section does not consider parts 2) and 3) of the problem statement, but the next section will draw this section together with them. 

\begin{lemma}\label{lsuff}
	 Under  Assumption \ref{ass:two}, consider  \eqref{nonforward}.
	Suppose
	there is a  continuously differentiable, $h_k(\cdot,\cdot):\mR^{n\times m}\rightarrow \mR^m$ in \eqref{genv} such that    there exists a continuously differentiable $\psi_{k+1}(\cdot, \cdot):\mR^{n\times m}\rightarrow \mR^m$ for which
	\begin{equation}\label{psi}
		u_k=\psi_{k+1}(x_{k+1}, v_{k+1}).
	\end{equation} 
	Then, with  $g_k(\cdot,\cdot):\mR^{n\times m}\rightarrow \mR^n$ the continuously differentiable implicit function given in \eqref{implicitforward},
	\begin{equation}\label{axu}
		x_k=	g_{k+1}(x_{k+1},\psi_{k+1}(x_{k+1}, v_{k+1})).
	\end{equation}
	Moreover, under (\ref{nonforward}), (\ref{axu}) and (\ref{genv}) the mapping between $(x_k,u_k)$ and $(x_{k+1},v_{k+1})$ is a bijection.
\end{lemma}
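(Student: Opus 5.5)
The plan has two parts. First we derive \eqref{axu} by direct substitution. Then we build explicit maps in both directions between $(x_k,u_k)$ and $(x_{k+1},v_{k+1})$ and check that they are mutual inverses. Throughout, the index conventions matter: \eqref{genv} at time $k+1$ reads $v_{k+1}=h_{k+1}(x_{k+1},u_k)$, and \eqref{psi} is the hypothesis that this relation can be solved uniquely for $u_k$ once $x_{k+1}$ is fixed.

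\textbf{Step 1.} By Assumption \ref{ass:two}, part 2), every sample pair of \eqref{nonforward} satisfies $x_k=g_{k+1}(x_{k+1},u_k)$, as in \eqref{implicitforward}. Substituting $u_k=\psi_{k+1}(x_{k+1},v_{k+1})$ from \eqref{psi} gives \eqref{axu} immediately. Since compositions of continuously differentiable maps are continuously differentiable, the map
\[
a_{k+1}(x,v):=g_{k+1}(x,\psi_{k+1}(x,v))
\]
is $C^1$, and hence a legitimate candidate for \eqref{nonbackward}.

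\textbf{Step 2.} Define the forward map
\[
\Gamma(x_k,u_k)=\bigl(f_k(x_k,u_k),\,h_{k+1}(f_k(x_k,u_k),u_k)\bigr)
\]
and the candidate inverse
\[
\Lambda(x_{k+1},v_{k+1})=\bigl(g_{k+1}(x_{k+1},\psi_{k+1}(x_{k+1},v_{k+1})),\,\psi_{k+1}(x_{k+1},v_{k+1})\bigr).
\]

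To show $\Lambda\circ\Gamma=\mathrm{id}$, take any $(x_k,u_k)$ and set $x_{k+1}=f_k(x_k,u_k)$ and $v_{k+1}=h_{k+1}(x_{k+1},u_k)$. Then \eqref{psi} returns $u_k$, and \eqref{implicitforward} returns $x_k$.

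To show $\Gamma\circ\Lambda=\mathrm{id}$, take any $(x_{k+1},v_{k+1})$ and set $u_k=\psi_{k+1}(x_{k+1},v_{k+1})$ and $x_k=g_{k+1}(x_{k+1},u_k)$. We must show $f_k(x_k,u_k)=x_{k+1}$ and $h_{k+1}(x_{k+1},u_k)=v_{k+1}$. A two-sided inverse yields injectivity and surjectivity together, which is the bijection claimed.

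\textbf{Main obstacle.} The difficulty is interpreting the hypotheses as genuine identities on all of $\mR^n\times\mR^m$, not merely along realized trajectories. For \eqref{implicitforward}, I will read ``uniquely determines'' as follows: for fixed $u$, the map $x\mapsto f_k(x,u)$ is a bijection of $\mR^n$ with inverse $g_{k+1}(\cdot,u)$. Surjectivity is plausible here because $p_{X_{k+1}U_k}>0$ everywhere, so every $x_{k+1}$ is reachable. Similarly, I will read \eqref{psi} as saying that, for each fixed $x$, the map $u\mapsto h_{k+1}(x,u)$ is a bijection of $\mR^m$ onto its range with inverse $\psi_{k+1}(x,\cdot)$.

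Under these readings, $f_k(g_{k+1}(x,u),u)=x$ and $h_{k+1}(x,\psi_{k+1}(x,v))=v$ hold identically, which settles $\Gamma\circ\Lambda=\mathrm{id}$. If only the one-sided statements are available, I will restrict the bijection to the range of $\Gamma$. That range is exactly the set of pairs $(x_{k+1},v_{k+1})$ the reverse model ever sees, so it is all the Problem Statement requires. I will state explicitly which of these two readings is being used.
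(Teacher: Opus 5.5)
Your proposal is correct and follows the paper's proof. The paper likewise gets \eqref{axu} by substituting \eqref{psi} into \eqref{implicitforward}, then argues bijectivity because \eqref{nonforward}, \eqref{genv} determine $(x_{k+1},v_{k+1})$ from $(x_k,u_k)$ while \eqref{psi}, \eqref{axu} determine $(x_k,u_k)$ from $(x_{k+1},v_{k+1})$; your explicit check that $\Lambda\circ\Gamma=\mathrm{id}$ makes precise the step the paper leaves implicit.

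Your fallback is the version to keep, since $\Lambda\circ\Gamma=\mathrm{id}$ alone makes your forward map $\Gamma$ a bijection onto its range with inverse $\Lambda$. By contrast, $h_{k+1}(x,\psi_{k+1}(x,v))=v$ cannot hold for all $v\in\mR^m$ in general (for example, the Rosenblatt choice of Theorem \ref{tspec} has range $[0,1]^m$), so the stronger readings are neither needed nor generally true.
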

\begin{proof}
	Clearly, \eqref{axu} follows from \eqref{implicitforward} and \eqref{psi}.
	Further,  $x_k$ and $u_k$ uniquely determine $x_{k+1}$ and $v_{k+1}$ from (\ref{nonforward}) and (\ref{genv}). On the other hand (\ref{psi}) and (\ref{axu})  uniquely determine $u_k$ and $x_k$, respectively,  from $x_{k+1}$ and $v_{k+1}$. Hence the mapping between $(x_k,u_k)$ and $(x_{k+1},v_{k+1})$ is a bijection.
\end{proof}
It should be noted that the existence of $\psi_{k+1}(\cdot,\cdot)$ implicitly  relies on 3)  of Assumption  \ref{ass:two}.

Now we give a necessary and sufficient condition for \eqref{vecforward} to have an input-affine reverse model.

\begin{lemma}\label{lind}
	Consider the special case \eqref{vecforward} of \eqref{nonforward} under Assumption \ref{ass:two}. Then the following  are equivalent:
	\begin{enumerate}
		\item      
		For a $v_{k+1}\in \mR^m$, $A_{k+1}(\cdot): \mR^n\rightarrow \mR^n$, $B_{k+1}(\cdot): \mR^n\rightarrow \mR^{n\times m}$ the input affine reverse model \eqref{vectreverse}
		holds
		with rank$[B_{k+1}(x)]=m$  for all $x$;
		\item 
		For some $v_{k+1}\in\mathbb R^m,~  C_{k+1}(\cdot):\mR^n\rightarrow \mR^{m}$ and $ D_{k+1}(\cdot):\mR^n\rightarrow \mR^{m\times m}$ that is nonsingular everywhere, the following specialization of the reverse-time noise definition \eqref{genv} holds:  
		\begin{equation}\label{vexpress}
			v_{k+1}=C_{k+1}(x_{k+1})-D_{k+1}(x_{k+1})u_k.
		\end{equation}
	\end{enumerate}
	If Condition 1 holds,  $C_{k+1}(\cdot)$ and $D_{k+1}(\cdot)$ (whose existence is guaranteed by Condition 2) are given by
	\begin{align}
		C_{k+1}(x_{k+1})&=\left(B_{k+1}^{\top}(x_{k+1})B_{k+1}(x_{k+1})\right)^{-1}\nonumber\\
		&B_{k+1}^{\top}(x_{k+1})[F_k^{-1}x_{k+1}-A_{k+1}(x_{k+1})]\label{eq:CDdef}\\
		D_{k+1}(x_{k+1})&=\left(B_{k+1}^{\top}(x_{k+1})B_{k+1}(x_{k+1})\right)^{-1}\nonumber\\
		&\times B_{k+1}^{\top}(x_{k+1})F_k^{-1}G_k. \label{Ddef}
	\end{align} 
	If Condition 2 holds,  $A_{k+1}(\cdot)$ and $B_{k+1}(\cdot)$ (whose existence is guaranteed by Condition 1) are given by
	\begin{align}
		A_{k+1}(x_{k+1})&=F_k^{-1}x_{k+1}-F_k^{-1}G_kD_{k+1}^{-1}(x_{k+1})\nonumber\\
		&\times C_{k+1}(x_{k+1})\label{eq:ABdef}\\
		B_{k+1}(x_{k+1})&= F_k^{-1}G_kD_{k+1}^{-1}(x_{k+1}).\label{Bx}
	\end{align}
	Further,   the mapping between $(x_k,u_k)$ and $(x_{k+1},v_{k+1})$ is 
	a bijection. 
	
\end{lemma}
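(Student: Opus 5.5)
The proof is direct algebraic manipulation of the forward equation \eqref{vecforward}. Since $F_k$ is nonsingular by Assumption \ref{ass:two}, \eqref{vecforward} can be rewritten as
\begin{equation*}
x_k=F_k^{-1}x_{k+1}-F_k^{-1}G_ku_k .
\end{equation*}
Both implications come from equating this expression with the candidate reverse model and solving for the missing noise variable.

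\emph{1) implies 2).} Suppose \eqref{vectreverse} holds with rank$[B_{k+1}(x)]=m$. Equating the two expressions for $x_k$ gives
\begin{equation*}
B_{k+1}(x_{k+1})v_{k+1}=F_k^{-1}x_{k+1}-A_{k+1}(x_{k+1})-F_k^{-1}G_ku_k .
\end{equation*}
Because $B_{k+1}$ has full column rank, $B_{k+1}^{\top}B_{k+1}$ is invertible. Premultiplying by $(B_{k+1}^{\top}B_{k+1})^{-1}B_{k+1}^{\top}$ isolates $v_{k+1}$ and produces exactly \eqref{vexpress}, with $C_{k+1}$ and $D_{k+1}$ as in \eqref{eq:CDdef}--\eqref{Ddef}. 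It remains to show $D_{k+1}(x)$ is nonsingular, and I expect this to be the main obstacle. The argument: since the equation is exact, $B_{k+1}v_{k+1}$ lies in the range of $B_{k+1}$. Substituting \eqref{vexpress} back into \eqref{vectreverse} and comparing with the rewritten forward equation, the coefficient of $u_k$ must agree for all $u_k$. Here we use that $u_k$ ranges over all of $\mR^m$ for fixed $x_{k+1}$, because the joint pdf is nonzero everywhere by Assumption \ref{ass:two}. This gives
\begin{equation*}
B_{k+1}(x)D_{k+1}(x)=F_k^{-1}G_k .
\end{equation*}
The right side has rank $m$ since rank$[G_k]=m$ and $F_k$ is nonsingular. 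A product of an $n\times m$ matrix and an $m\times m$ matrix can have rank $m$ only if the $m\times m$ factor is nonsingular, so $D_{k+1}(x)$ is invertible.

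\emph{2) implies 1).} Given \eqref{vexpress} with $D_{k+1}$ nonsingular, solve for $u_k$:
\begin{equation*}
u_k=D_{k+1}^{-1}(x_{k+1})\bigl(C_{k+1}(x_{k+1})-v_{k+1}\bigr).
\end{equation*}
Substituting into the rewritten forward equation yields \eqref{vectreverse} with $A_{k+1}$ and $B_{k+1}$ as in \eqref{eq:ABdef}--\eqref{Bx}. Since $F_k^{-1}G_k$ has rank $m$ and $D_{k+1}^{-1}$ is nonsingular, $B_{k+1}$ has rank $m$.

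\emph{Bijection.} Under Condition 2, the displayed formula for $u_k$ defines $\psi_{k+1}(x_{k+1},v_{k+1})$ as in \eqref{psi}. Lemma \ref{lsuff} then gives the bijection between $(x_k,u_k)$ and $(x_{k+1},v_{k+1})$, provided continuous differentiability of $C_{k+1}$ and $D_{k+1}$ is assumed as in that lemma. Alternatively, the bijection can be seen directly: the forward equation and \eqref{vexpress} compute $(x_{k+1},v_{k+1})$ from $(x_k,u_k)$, and \eqref{vectreverse} together with the formula for $u_k$ invert that map.
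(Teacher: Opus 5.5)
Your proposal is correct and follows the paper's proof essentially step for step. It uses the same substitution for 2)$\Rightarrow$1), the same left inverse $(B_{k+1}^{\top}B_{k+1})^{-1}B_{k+1}^{\top}$ for 1)$\Rightarrow$2), and the same appeal to Lemma \ref{lsuff} for the bijection; your step of matching the coefficients of $u_k$ to obtain $B_{k+1}D_{k+1}=F_k^{-1}G_k$ is the paper's computation of $\partial x_k/\partial u_k$ in two ways, with the useful extra justification that $u_k$ ranges over all of $\mR^m$ for fixed $x_{k+1}$.
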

\begin{proof}
	Recall that in \eqref{vecforward},   $F_k$ and $G_k$ have full column rank.
	Suppose Condition 2 holds, i.e. there is a $C_{k+1}(\cdot)$ and a $ D_{k+1}(\cdot)$ that is nonsingular everywhere such that (\ref{vexpress}) holds.
	Then  immediately
	\[ 
	-u_k=D_{k+1}^{-1}(x_{k+1})v_{k+1}-D_{k+1}^{-1}(x_{k+1})C_{k+1}(x_{k+1}).
	\]
	This expression in conjunction with \eqref{vecforward} then yields
	\begin{align*}
		x_k&= F_k^{-1}x_{k+1}-F_k^{-1}G_kD_{k+1}^{-1}(x_{k+1})C_{k+1}(x_{k+1})\\
		&+F_k^{-1}G_kD_{k+1}^{-1}(x_{k+1})v_{k+1}.
	\end{align*}
	With the definitions of \eqref{eq:ABdef} and (\ref{Bx}),  \eqref{vectreverse} holds. Further, $B_{k+1}$ has full column rank everywhere as $ F_k^{-1}$ is nonsingular, $G_k$ has full column rank and $D_{k+1}(x_{k+1})$ is nonsingular everywhere. Hence Condition 1 holds.

	Next, suppose that Condition 1 holds. Using \eqref{vecforward} and \eqref{vectreverse} yields
	\[ 
	A_{k+1}(x_{k+1})+B_{k+1}(x_{k+1})v_{k+1}=F_k^{-1}x_{k+1}-F_k^{-1}G_ku_k,
	\]
	or equivalently
	\begin{equation} \label{eq:Bv}
		B_{k+1}(x_{k+1})v_{k+1}=F_k^{-1}x_{k+1}-A_{k+1}(x_{k+1})-F_k^{-1}G_ku_k.
	\end{equation}
	As $B_{k+1}(x_{k+1})$ has full column rank everywhere 
	\[ 
	E_{k+1}(x_{k+1})=\left (B_{k+1}^{\top}(x_{k+1})B_{k+1}(x_{k+1})\right )^{-1}B_{k+1}^{\top}(x_{k+1})
	\]
	exists.
	Then \eqref{eq:Bv} implies
	\begin{align*}
		v_{k+1}&=  E_{k+1}(x_{k+1})(F_k^{-1}x_{k+1}-A_{k+1}(x_{k+1}))\\
		&- E_{k+1}(x_{k+1})F_k^{-1}G_ku_k.
	\end{align*}
	Then \eqref{vexpress} follows from  \eqref{eq:CDdef}  and \eqref{Ddef}.
	
	It remains to show that $D_{k+1}(x_{k+1})=E_{k+1}(x_{k+1})F_k^{-1}G_k$ is nonsingular everywhere.
	To do this, we shall compute two expressions for the partial derivative $\frac{\partial x_k}{\partial u_k}$, where we regard $x_k$ as a specialization of the function $g_k(x_{k+1},u_k)$ of $x_{k+1}$ and $u_k$. First, from \eqref{vecforward}, noting that $x_k=F_{k}^{-1}x_{k+1}-F_k^{-1}G_ku_k$, we see that 
	\[ 
	\frac{\partial x_k}{\partial u_k}=-F_{k}^{-1}G_k
	\]
	This matrix has rank $m$ since $G_k$ has that property by assumption.   Alternatively, under (\ref{vexpress}) and (\ref{vectreverse})
	\begin{align*}
		x_k&=A_{k+1}(x_{k+1})+B_{k+1}(x_{k+1})C_{k+1}(x_{k+1})\\
		&-B_{k+1}(x_{k+1})D_{k+1}(x_{k+1})u_{k},
	\end{align*}
	from which there follows the alternative expression
	\[ 
	\frac{\partial x_k}{\partial u_k}=-B_{k+1}(x_{k+1})D_{k+1}(x_{k+1})
	\] 
	The expression has rank $m$ for all $x_{k+1}$, implying $D_{k+1}(x_{k+1})\in \mR^{m\times m}$ is nonsingular for all $x_{k+1}$. Condition 1 is thus established.
	
	If $D_{k+1}(x_{k+1})\in \mR^{m\times m}$ is nonsingular for all $x_{k+1}$, then the condition in Lemma \ref{lsuff} is met and
	the mapping from $(x_k,u_k)$ to $(x_{k+1},v_{k+1})$ is  a bijection.
\end{proof}

\section{Formulating the reverse model}\label{sreverse}
We now put the results of sections \ref{sind} and \ref{sbij} together to obtain conditions for the existence of a  reverse model.
We first give a theorem that provides a sufficient condition for the existence of a  reverse-time model like (\ref{nonbackward}) for the general NLTV system \eqref{nonforward}.

\begin{theorem}\label{tsuff}
	Consider the forward diffusion equation (\ref{nonforward}) under Assumption \ref{ass:two}. Suppose there exists a measurable $h_k(\cdot,\cdot):\mR^{n\times m}\rightarrow \mR^m$ that satisfies the following two conditions. (I) (Parametrized bijection condition) There exists  a continuously differentiable  $\psi_{k+1}(\cdot, \cdot):\mR^{n\times m}\rightarrow \mR^m$ for which 
	\eqref{psi} holds. (II) (Conditional characteristic function condition) With the associations in Lemma \ref{lgvecind}, (\ref{gcondchar}) holds. Then with $g_{k+1}(\cdot,\cdot)$ defined in \eqref{implicitforward} and $ a_{k+1}(\cdot,\cdot)$ in (\ref{axu}) and \eqref{nonbackward}, the three equations  (\ref{nonforward}) (forward diffusion), (\ref{nonbackward}) (reverse-time diffusion) and (\ref{genv}) (reverse-time noise) meet the conditions specified in the problem statement.
\end{theorem}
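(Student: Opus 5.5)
The plan is to assemble the theorem from Lemmas \ref{lsuff}, \ref{lgvecind} and \ref{lwhite}, checking requirements 1)--3) of the problem statement one at a time.

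\textbf{Requirement 1).} Condition (I) is exactly the hypothesis of Lemma \ref{lsuff}. Combined with \eqref{implicitforward} from 2) of Assumption \ref{ass:two}, that lemma yields \eqref{axu}. Defining $a_{k+1}(x,v)=g_{k+1}(x,\psi_{k+1}(x,v))$ then gives \eqref{nonbackward}. This $a_{k+1}$ is continuously differentiable as a composition of continuously differentiable maps. Lemma \ref{lsuff} also gives the bijection between $(x_k,u_k)$ and $(x_{k+1},v_{k+1})$. Hence every state sequence generated by \eqref{nonforward}, with $v_{k+1}$ defined through \eqref{genv}, satisfies \eqref{nonbackward}, and 1) holds.

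\textbf{Requirements 2) and 3).} Condition (II) is the condition of Lemma \ref{lgvecind}, so $v_{k+1}$ and $x_{k+1}$ are independent for each $k_0\leq k\leq k_f$. This is hypothesis (ii) of Lemma \ref{lwhite}.

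\textbf{Hypothesis (i) of Lemma \ref{lwhite}.} This is the one step needing real argument: $x_k,u_k,\ldots,u_{k_f+1}$ must be jointly independent for every $k$. I will prove it by induction on $k$.
\begin{itemize}
\item The base case $k=k_0$ is 5) of Assumption \ref{ass:two}.
\item For the inductive step, use the iterated map \eqref{fki}: $x_k=\bar f_{k_0,k-k_0}(x_{k_0},u_{k_0},\ldots,u_{k-1})$ is measurable with respect to $\sigma(x_{k_0},u_{k_0},\ldots,u_{k-1})$. By 1) of Assumption \ref{ass:two}, $f_k$ is measurable, so the composition is measurable.
\item By the grouping property of independent families, this $\sigma$-algebra is independent of $\sigma(u_k,\ldots,u_{k_f+1})$, and the $u_i$ within that second family are mutually independent.
\item Factorizing the probability over Borel rectangles, exactly as in the display in the proof of Lemma \ref{lwhite}, then gives joint independence of $x_k,u_k,\ldots,u_{k_f+1}$.
\end{itemize}
The measurability of $h_k$ required by Lemma \ref{lwhite} is assumed in the theorem.

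\textbf{Conclusion.} Lemma \ref{lwhite}(A) gives that $v_{k+1}$ is independent of $x_{k+i+1}$ for $i\geq0$, which is 2). Lemma \ref{lwhite}(B) gives joint independence of $v_{k_0+1},\ldots,v_{k_f+1}$, which is 3).

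I expect the main obstacle to be making hypothesis (i) rigorous. The rest is routine bookkeeping, provided the domains of Lemma \ref{lgvecind} match. They do: that lemma requires \eqref{gcondchar} only where $p_X(x)\neq0$, and 4) of Assumption \ref{ass:two} guarantees $p_{X_{k+1}}(x)\neq0$ everywhere.
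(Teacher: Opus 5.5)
Your proposal is correct and follows the paper's proof, which likewise assembles the result from Lemma~\ref{lsuff} (requirement 1)), Lemma~\ref{lgvecind} (independence of $v_{k+1}$ and $x_{k+1}$; the paper's text says $u_k$ here, a slip you avoid) and Lemma~\ref{lwhite} (requirements 2) and 3)). Your one addition is an explicit check of hypothesis (i) of Lemma~\ref{lwhite} from 5) of Assumption~\ref{ass:two}, via measurability of the iterated map and grouping of independent $\sigma$-algebras, which the paper leaves implicit; that check is a direct argument rather than a genuine induction, since it never uses the inductive hypothesis.
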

\begin{proof}
	We need to show that with the definitions given  all of 1)-3) in the Problem statement hold. Lemma \ref{lsuff} shows that if (I) holds then with $\psi_{k+1}(\cdot,\cdot)$ defined in (\ref{psi}) and $ a_{k+1}(\cdot,\cdot)$ in (\ref{axu}), the three equations  (\ref{nonforward}), (\ref{nonbackward}) and (\ref{genv})  meet 1) of the problem statement. Further from Lemma \ref{lgvecind}, (II) implies that $v_{k+1}$ and $u_k$ are independent. Then from Lemma \ref{lwhite}, (\ref{vecforward}), (\ref{nonbackward}) and (\ref{genv}) meet 2) and 3) of the problem statement (effectively, anti-causality).
\end{proof}

Next, by way of specialization, we provide a necessary and sufficient condition for the existence of an input-affine  reverse model for the linear forward model \eqref{vecforward}. 

\begin{theorem}\label{taffine}
	Consider   (\ref{vecforward}). Under Assumption \ref{ass:two} and  dimensions  in Lemma \ref{lind}, the following are equivalent.
	\begin{itemize}
		\item[(A)] There exist  $C_{k+1}(\cdot)$, $ D_{k+1}(\cdot)$ and $v_{k+1}$ such that    an  input-affine reverse-time model \eqref{vectreverse} under an input-affine reverse-time noise formula \eqref{vexpress} meets the requirements of the Problem Statement when \eqref{nonforward} is specialized to \eqref{vecforward}.
		\item[(B)] There exist $C_{k+1}(\cdot)$ and $ D_{k+1}(\cdot)$ that is nonsingular everywhere such that under the  associations in Lemma \ref{lgvecind}, the conditional characteristic function property (\ref{gcondchar}) holds with 
        $h_{k+1}(x_{k+1},u_{k})$$=C_{k+1}(x_{k+1})-D_{k+1}(x_{k+1})u_{k}.$

	\end{itemize}
\end{theorem}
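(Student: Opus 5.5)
We prove the two implications separately. The work is done by Lemma \ref{lind}, Lemma \ref{lgvecind} and Theorem \ref{tsuff}; all that remains is to check that their hypotheses line up.

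\emph{(B) implies (A).} Take $h_{k+1}(x_{k+1},u_k)=C_{k+1}(x_{k+1})-D_{k+1}(x_{k+1})u_k$ and define $v_{k+1}$ through \eqref{vexpress}. Since $D_{k+1}(\cdot)$ is nonsingular everywhere, we can solve explicitly:
\[
u_k=\psi_{k+1}(x_{k+1},v_{k+1})=D_{k+1}^{-1}(x_{k+1})\big(C_{k+1}(x_{k+1})-v_{k+1}\big).
\]
This gives condition (I) of Theorem \ref{tsuff}. The needed continuous differentiability is inherited from $C_{k+1}$ and $D_{k+1}$; we either assume it for these functions or note that only measurability is actually used in the independence arguments. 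Condition (II) of Theorem \ref{tsuff} is exactly the hypothesis in (B). Theorem \ref{tsuff} therefore shows that the reverse model with $a_{k+1}$ given by \eqref{axu} satisfies 1)--3) of the Problem Statement. Finally, the implication from Condition 2 to Condition 1 in Lemma \ref{lind} shows that this $a_{k+1}$ has the input-affine form \eqref{vectreverse}, with $A_{k+1}$ and $B_{k+1}$ as in \eqref{eq:ABdef}--\eqref{Bx}. Indeed, substituting $\psi_{k+1}$ into $x_k=F_k^{-1}x_{k+1}-F_k^{-1}G_ku_k$ gives exactly that expression. Hence (A) holds.

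\emph{(A) implies (B).} Suppose the input-affine model \eqref{vectreverse}, with noise formula \eqref{vexpress}, meets 1)--3).

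\begin{enumerate}
\item \emph{Nonsingularity of $D_{k+1}$.} The bijection requirement 1) forces $u_k$ to be recoverable from $(x_{k+1},v_{k+1})$. From \eqref{vexpress}, $u_k$ enters only through $D_{k+1}(x_{k+1})u_k$. If $D_{k+1}(x)$ were singular at some $x$, two distinct values of $u_k$ differing by a null vector would give the same $(x_{k+1},v_{k+1})$. The full joint support in Assumption \ref{ass:two} 4) guarantees that every such pair actually occurs, which contradicts injectivity. So $D_{k+1}(\cdot)$ is nonsingular everywhere. Alternatively, if (A) is read as including rank$[B_{k+1}]=m$, Lemma \ref{lind} gives nonsingularity directly.
\item \emph{The characteristic function condition.} Requirement 2) makes $v_{k+1}$ and $x_{k+1}$ independent. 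The necessity direction of Lemma \ref{lgvecind} then gives \eqref{gcondchar} with $\Phi=\Phi_V$ for every $x$ with $p_X(x)\neq0$. By Assumption \ref{ass:two} 4), $p_X(x)\neq0$ for all $x$, so the condition holds everywhere.
\end{enumerate}

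Together these give (B).

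The main obstacle is the nonsingularity step in (A)$\Rightarrow$(B). The bijection in the Problem Statement is between pairs, so we must show that a singular $D_{k+1}(x)$ at even one point destroys injectivity on a set that actually occurs. This is exactly where the full-support assumption on $p_{X_{k+1}U_k}$ is used: it lets a pointwise rank failure become a genuine violation of requirement 1).
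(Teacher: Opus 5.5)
Your proposal is correct and is essentially the paper's proof. For (B)$\Rightarrow$(A) you use the explicit inverse $u_k=D_{k+1}^{-1}(x_{k+1})\big(C_{k+1}(x_{k+1})-v_{k+1}\big)$ together with Lemmas \ref{lgvecind}, \ref{lwhite} and \ref{lind}; you merely package the first two through Theorem \ref{tsuff}. For (A)$\Rightarrow$(B) you use requirement 2) with the necessity half of Lemma \ref{lgvecind}, and, as the paper does, you get nonsingularity of $D_{k+1}$ from the failure of injectivity along $\ker D_{k+1}(x_{k+1})$.

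One small correction to your closing remark. That non-injectivity is a property of the map on all of $\mR^n\times\mR^m$, so it needs no appeal to the full-support part of Assumption \ref{ass:two}. Full support would not in any case make a rank failure at an isolated $x_{k+1}$ a non-null event. In fact both affine relations hold identically in $u_k$, which already forces $B_{k+1}(x)D_{k+1}(x)=F_k^{-1}G_k$, a matrix of rank $m$, and hence gives nonsingularity of $D_{k+1}(x)$ directly.
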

\begin{proof}
	Suppose (A) holds. From  the Problem Statement and the discussion after it, this implies: (I)
    $v_{k+1}=C_{k+1}(x_{k+1})-D_{k+1}(x_{k+1})u_k$ is white and independent with $x_{k+i}$, for $i\geq 0$; and (II) $(x_{k+1}, v_{k+1})$ and $(x_k,u_k)$ form a bijection. From Lemma \ref{lwhite}, (I) is satisfied if $v_{k+1}=C_{k+1}(x_{k+1})-D_{k+1}(x_{k+1})u_k$ and $x_{k+1}$ are independent. As $p_{X_k}(x)\neq 0$ for all $x$, from Lemma \ref{lgvecind} this in turn is equivalent to (\ref{gcondchar})  with 
        $h_{k+1}(x_{k+1},u_{k+1})$$=C_{k+1}(x_{k+1})-D_{k+1}(x_{k+1})u_{k}.$

    Assume  in turn that (II) holds and \eqref{vectreverse} is a reverse model for \eqref{vecforward}. From Lemma \ref{lind}, if \eqref{vectreverse} is a reverse model then \eqref{vexpress} holds.  To establish a contradiction,  suppose at some $x_{k+1}$, $ D_{k+1}(x_{k+1})$ is singular. Then there is a nontrivial subspace of $u_k$ such that $ D_{k+1}(x_{k+1})u_k=0$. From \eqref{vectreverse}, $x_k=A_{k+1}(x_{k+1})+B_{k+1}(x_{k+1})C_{k+1}(x_{k+1})$. For these choices of $x_{k+1}$ and $u_k$, neither \eqref{vectreverse} nor \eqref{vexpress} have any dependence on $u_k$ i.e., $x_{k+1}$ and $v_{k+1}$ cannot uniquely specify  $u_k$. Thus,  $D_{k+1}(x_{k+1})$ is nonsingular everywhere and both conditions in (B) are met i.e., (A) implies (B).
	
	When (B) holds, we will show that (I), (II)  above and \eqref{vectreverse} hold with $A_{k+1}(\cdot)$ given by \eqref{eq:ABdef} and $B_{k+1}(\cdot)$ by \eqref{Bx}. This means (A) holds, and (B) implies (A), completing the proof. As $p_{X_k}(x)\neq 0$ for all $x$, by Lemma \ref{lgvecind}, $h_{k+1}(x_{k+1},u_{k})=v_{k+1}$ given in (B) and $x_{k+1}$ are independent. From Lemma \ref{lwhite}, this means that both conditions in  (I) above are met. Also, (B) implies that \eqref{vexpress} holds.   As $D_{k+1}(\cdot)$  is nonsingular everywhere, from Lemma \ref{lind}, with $A_{k+1}(\cdot)$  and  $B_{k+1}(\cdot)$ defined by \eqref{eq:ABdef} and \eqref{Bx} the affine reverse-time equation \eqref{vectreverse} follows from (\ref{vexpress}) and the forward time equation \eqref{vecforward}. Also from Lemma \ref{lind} (II)  holds as $(x_{k+1}, v_{k+1})$ and $(x_k,u_k)$ form a bijection.

	
\end{proof}

Both  theorems state that a  reverse diffusion exists if $v_{k+1}=h_{k+1}(x_{k+1},u_k)$ meets two conditions, viz. that $v_{k+1}$ be independent of $x_{k+1}$ and every pair $x_{k+1}, 
v_{k+1}$ uniquely determine $u_k$.  Lemma \ref{lind} and Theorem \ref{taffine} further show that for the reverse model of \eqref{vecforward} to be input-affine, one also needs  $h_{k+1}(x_{k+1},v_{k+1})$ to be affine in $v_{k+1}$. Ultimately, $h_{k+1}(\cdot,\cdot)$ alone is determinative and most of the remainder of this paper is an in-depth exploration of its existence and construction.

\section{Construction of reverse diffusion} \label{ssuff}
By Theorem  \ref{tsuff}, a sufficient condition for the reverse model (\ref{nonforward}), (\ref{nonbackward}) and (\ref{genv}) to meet the requirements of the problem statement is that (i) there exists a continuously differentiable $\psi_{k+1}(\cdot, \cdot):\mR^{n\times m}\rightarrow \mR^m$ for which \eqref{psi} holds and (ii) that $x_{k+1}$ and the  $ v_{k+1}$ given by the reverse time noise formula  (\ref{genv})  are independent random variables. 
Even though these together constitute  only a sufficient condition, we  show that one can always construct a family of $h_{k+1}(x_{k+1},u_k)$ that meet these requirements using  Cumulative Distribution Functions (cdf) of certain entries of $u_k$ conditioned on $x_{k+1}$ and some other entries of $u_k$. As the forward process defines these conditional cdfs, it  always provides such  functions. As will be seen, the case of vector noise sequences  becomes significantly simpler in the scalar case.  

To keep the notation simple we will at times drop the time indices i.e, use $x_{k+1}=x$, $v_{k+1}=v$, $h_{k+1}(\cdot,\cdot)=h(\cdot,\cdot)$ and $u_k=u$.  We will associate random variables with capital letters and their values by small ones. To avoid confusion with time indices we will denote the $i$-th entry of $u_k=u\in \mR^m$ by  $u^{(i)}$. For a set $\cD$, we will define the indicator function $I_{\cD}(v)$ as,
\begin{equation}\label{indicator}
	I_{\cD}(v)=\begin{cases}
		1 & \forall ~v\in \cD\\
		0 & \forall ~v\notin \cD
	\end{cases}.
\end{equation}

Section \ref{snecsuff} provides a necessary and sufficient condition for  the required independence  
Section \ref{sdist} gives two methods for constructing $h_{k+1}(\cdot,\cdot)$ using conditional distributions. For scalar $u_k$ the second provides all possible $h_{k+1}(\cdot,\cdot)$ that meet the independence condition.

\subsection{A necessary and sufficient condition for noise and state independence}\label{snecsuff}
Define the Jacobian of $h(x,u)$ with respect to $u$ as $J_{hu}(x,u):\mR^{n\times m}\rightarrow \mR^{m\times m}$, i.e., its $kl$-th entry obeys
\begin{equation}\label{Jacob}
	\left [J_{hu}(x,u)\right ]_{kl}  ~~=\dfrac{\partial h_k(x,u)}{\partial u^{(l)}}
\end{equation}
$h_k(x,u)$ being the $k$-th entry of $h(x,u)$. 
Observe, that the existence of the continuously differentiable implicit function in $\psi_{k+1}(\cdot,\cdot)$ defined in \eqref{psi} implies that 
\begin{equation}\label{junon}
	\det\left [J_{hu}(x,u)\right ]\neq 0, ~\forall~x\in \mR^n, ~u\in \mR^m.
\end{equation}
This subsection examines when an  $h(x,u)$ obeying \eqref{junon} is independent with $x$.
We begin with a lemma that uses the fact that by 4) of Assumption \ref{ass:two}, the joint pdf of $X_{k+1}$ and $U_k$ is everywhere nonzero.

\begin{lemma}\label{lnec}
	Suppose Assumption \ref{ass:two}, holds. Consider	random vectors $X\in \mR^n$ and $U\in \mR^m$  with conditional  pdf $p_{U|X}(u|x)$.  Consider  $v=h(x,u)$ with $h(\cdot,\cdot):\mR^{n\times m}\rightarrow \mR^m$ differentiable everywhere.   Define $\Gamma(x)$ to be the range of $h(x,u)$ for a given $x$ and all $u\in \mR$. 
	Suppose also that  $X$ and $V$ are independent and \eqref{junon} is true.
	Then with $p_V(v)$ the pdf of $V$, the conditional characteristic function
	\[ 
	\int_{\Gamma(x)}e^{j\omega^{\top}v}p_V(v)dv
	\]
	and $\Gamma(x)$ are functionally independent of $x$. 
\end{lemma}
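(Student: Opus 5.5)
The plan is to work with the conditional density of $V$ given $X$ and use independence twice: once to identify that density with $p_V$, and once more on its support. For a fixed $x$, the map $u\mapsto h(x,u)$ has nonvanishing Jacobian determinant by \eqref{junon}. Hence it is a local diffeomorphism of $\mR^m$ onto the open set $\Gamma(x)$. By the change of variables formula, summing over preimages if the map is not injective, the conditional density of $V$ given $X=x$ is
\[
p_{V|X}(v|x)=\sum_{u:\,h(x,u)=v}\frac{p_{U|X}(u|x)}{\left|\det J_{hu}(x,u)\right|}
\]
for $v\in\Gamma(x)$, and $p_{V|X}(v|x)=0$ for $v\notin\Gamma(x)$.

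By item 4) of Assumption \ref{ass:two}, $p_{XU}(x,u)\neq 0$ everywhere, and therefore $p_{U|X}(u|x)>0$ for all $u$. Every term in the sum is then strictly positive, so the support of $p_{V|X}(\cdot|x)$ is exactly $\Gamma(x)$.

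Now invoke independence of $X$ and $V$. Since $p_X(x)\neq0$ for every $x$, the argument in the proof of Lemma \ref{lgvecind} gives $p_{V|X}(v|x)=p_V(v)$ for all $x$. Strictly this holds almost everywhere in $v$, and it is upgraded to a statement about sets using openness of $\Gamma(x)$ and continuity of $h$. It follows that $\Gamma(x)=\{v:p_{V|X}(v|x)>0\}=\{v:p_V(v)>0\}$, and the right-hand side does not involve $x$. So $\Gamma(x)$ is functionally independent of $x$.

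For the integral, Lemma \ref{lgvecind} gives
\[
E\left[\left. e^{j\omega^{\top}h(x,u)}\right|x\right]=\int_{\Gamma(x)}e^{j\omega^{\top}v}p_{V|X}(v|x)\,dv=\Phi_V(\omega).
\]
Replacing $p_{V|X}$ by $p_V$ yields $\int_{\Gamma(x)}e^{j\omega^{\top}v}p_V(v)\,dv=\Phi_V(\omega)$, which is independent of $x$. Alternatively, this follows directly from $\Gamma(x)$ being constant.

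The main obstacle is the measure-theoretic step of passing from "equal almost everywhere" to equality of supports as sets. I will handle it by noting three facts. First, $\Gamma(x)$ is open, by the inverse function theorem applied via \eqref{junon}. Second, $p_{V|X}(\cdot|x)$ is strictly positive on $\Gamma(x)$. Third, if two open sets differed, the difference would contain a set of positive measure on which one density is positive and the other is zero. The identification therefore holds up to null sets, which is the sense in which $\Gamma(x)$ is "functionally independent" of $x$. If the map $u\mapsto h(x,u)$ is not globally injective, the preimage sum must be handled with care, but this does not affect positivity of the density or the support argument.
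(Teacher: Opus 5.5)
Your argument is correct in substance. It handles the integral exactly as the paper does: substitute $v=h(x,u)$, use $p_{V|X}=p_V$, and invoke Lemma~\ref{lgvecind}. For the constancy of $\Gamma(x)$, however, you take a different and more direct route. The paper regards $\int_{\Gamma(x)}e^{j\omega^{\top}v}p_V(v)\,dv$ as the Fourier transform of $p_V I_{\Gamma(x)}$ and uses uniqueness to get $p_VI_{\Gamma(x_1)}=p_VI_{\Gamma(x_2)}$. It then cancels $p_V$ using its positivity on $\Gamma(x)$, which is inherited from $p_{XU}\neq 0$. You instead read $\Gamma(x)$ off as the positivity set of $p_{V|X}(\cdot|x)=p_V$. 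This avoids the second Fourier step and identifies the common set as $\{v:p_V(v)>0\}$. Through the preimage sum it also covers non-injective $u\mapsto h(x,u)$, which the paper's substitution tacitly excludes even though the lemma assumes only \eqref{junon}. Your appeal to the inverse function theorem for openness needs $h$ to be $C^1$; the paper's setting supplies this.

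One step must be corrected: your ``third fact'' is false. Distinct open sets can differ by a null set, for example $\mR^m$ and $\mR^m\setminus\{0\}$, or the plane and a slit plane. A slit plane is star-shaped and hence diffeomorphic to $\mR^2$, so openness cannot rule this out even for injective $h(x,\cdot)$. What openness does give is weaker. If $\Gamma(x_1)\triangle\Gamma(x_2)$ is null, then $\Gamma(x_1)\setminus\overline{\Gamma(x_2)}$ is an open null set, hence empty, so the two closures coincide. Your argument therefore yields $\Gamma(x)$ independent of $x$ only up to null sets, not literally. Exact equality does follow when $m=1$, because $h(x,\cdot)$ is strictly monotone and $\Gamma(x)$ is an open interval.

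This is not a deficiency relative to the paper. Its step ``$p_VI_{\Gamma(x_1)}=p_VI_{\Gamma(x_2)}$ for all $v$'' also holds only almost everywhere, since that is all Fourier uniqueness gives. Moreover, integrals over $\Gamma(x)$, which is how the set is used in Theorem~\ref{tJu}, cannot detect a null-set difference. So drop the false claim and state the conclusion in the almost-everywhere sense, as your final sentence already does.
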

\begin{proof}
	 As $X$ and $V$ are independent, $p_{V|X}(v|x)=p_V(v)$.
	Under Assumption \ref{ass:two}, $p_X(x)$ is everywhere nonzero.
	Thus by Lemma \ref{lgvecind}, because $v=h(x,u)$ and $x$ are independent random variables,  it follows that the conditional characteristic function
	\[ 
	\int_{\mR^m} e^{j\omega^{\top} h(x,u)}p_{U|X}(u|x)du
	\]
	is functionally independent of $x$. Make the substitution $v=h(x,u)$.   Then as \eqref{junon} holds and the integration variable is $u$ alone, we can rewrite this quantity as
	\begin{align*}
		\int_{\Gamma(x)} e^{j\omega^{\top} v}\dfrac{p_{U|X}(u|x)}{\left |\det\left [J_{hu}(x,u)\right ]\right |}dv
		&=\int_{\Gamma(x)} e^{j\omega^{\top} v}p_{V|X}(v|x)dv\\
		& =\int_{\Gamma(x)} e^{j\omega^{\top} v}p_{V}(v)dv
	\end{align*}
	and Lemma \ref{lgvecind} proves  the last quantity  to be functionally independent of $x$, as required to be proved. 
	
	As $p_{XU}(x,u)\neq 0$  for all $x\in \mR^n$ and $u\in \mR^m$, so is $p_{U|X}(u|x)$, and hence \[\frac{p_{U|X}(u|x)}{|\mbox{det}[h_{xu}(x,u)]|}=p_{V|X}(v|x)=p_V(v)\neq 0\] for all $v\in \Gamma(x)$.  Further,
	\[
	\int_{\Gamma(x)} e^{j\omega^{\top} v}p_{V}(v)dv=\int_{\mR^m} e^{j\omega^{\top} v}p_{V}(v)I_{\Gamma(x)}(v)dv.
	\]
	The right hand side is the Fourier Transform of $p_{V}(v)I_{\Gamma(x)}(v)$. Since this does not vary with $x$, for every pair of distinct values  $x_1$ and $x_2$
	\begin{align*}
		&p_{V}(v)I_{\Gamma(x_1)}(v)=p_{V}(v)I_{\Gamma(x_2)}(v)~~\forall v\\ &\Leftrightarrow I_{\Gamma(x_1)}(v)=I_{\Gamma(x_2)}(v)~~\forall v,
	\end{align*}
	as $p_V(v)$ is nonzero throughout its domain. Thus, $\Gamma(x)$ does not vary with $x$.
\end{proof}
Using this Lemma, we  can now state a theorem providing a necessary and sufficient condition for the independence of $V$ and $X$.
\begin{theorem}\label{tJu}
	Consider the quantities and dimensions given in Lemma \ref{lnec} under  Assumption \ref{ass:two}. Consider a differentiable   $v=h(x,u)$ for which \eqref{junon}  holds. Then,  $v$   and $x$ are independent iff the following  conditions hold.
	
	\begin{itemize}
		\item[(i)] The domain $\Gamma(x)$, the range of $v$ for a  given $x$, is independent of $x,$ i.e. for some  $\Gamma\subset \mR^m$, $\Gamma(x)\equiv\Gamma$.
		\item[(ii)] There is a function $\xi(\cdot):\mR^m\rightarrow\mR$  such that the conditional pdf of $U|X$ obeys
		\begin{equation}\label{gentrans2}
			\dfrac{p_{U|X}(u|x)}{\left |\det\left[J_{hu}(x,u)\right]\right |}=\xi(h(x,u)).
		\end{equation}
	\item[(iii)] The function $\xi(v)I_{\Gamma}(v)$ has a  Fourier Transform.
	\end{itemize}
\end{theorem}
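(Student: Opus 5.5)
The proof splits into necessity and sufficiency, with Lemma \ref{lnec} and Lemma \ref{lgvecind} as the main tools.

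\emph{Necessity.} Suppose $V$ and $X$ are independent. Lemma \ref{lnec} applies directly, since \eqref{junon} holds and $h$ is differentiable, so $\Gamma(x)\equiv\Gamma$ is independent of $x$. This is (i).

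For (ii), use the change of variables from the proof of Lemma \ref{lnec}. For fixed $x$, the map $u\mapsto h(x,u)$ has nonvanishing Jacobian determinant, so it is locally invertible. On $\Gamma$ the conditional density is therefore
\[
p_{V|X}(v|x)=\frac{p_{U|X}(u|x)}{\left|\det\left[J_{hu}(x,u)\right]\right|},\qquad v=h(x,u).
\]
This is the same step as in Lemma \ref{lnec}. If $u\mapsto h(x,u)$ is not globally injective, one would sum over preimages; but the setting of \eqref{psi}, where $u$ is recovered from $(x,v)$, gives injectivity, and I will invoke that.

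Independence gives $p_{V|X}(v|x)=p_V(v)$. Setting $\xi:=p_V$ then yields \eqref{gentrans2} for all $(x,u)$. Finally, $\xi I_\Gamma=p_V I_\Gamma$ is integrable, with integral at most one, so its Fourier transform exists. This gives (iii).

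\emph{Sufficiency.} Assume (i)--(iii). Compute the conditional characteristic function
\[
\Psi(\omega,x)=\int_{\mR^m}e^{j\omega^{\top}h(x,u)}p_{U|X}(u|x)\,du .
\]
Substitute $v=h(x,u)$, which is justified by \eqref{junon} and injectivity in $u$. By \eqref{gentrans2}, the integrand becomes $e^{j\omega^{\top}v}\xi(v)$ over the range $\Gamma(x)=\Gamma$, so
\[
\Psi(\omega,x)=\int_{\Gamma}e^{j\omega^{\top}v}\xi(v)\,dv=\int_{\mR^m}e^{j\omega^{\top}v}\xi(v)I_{\Gamma}(v)\,dv .
\]
By (iii) this expression is well defined, and by (i) and (ii) it does not depend on $x$. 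Call it $\Phi(\omega)$. Condition \eqref{gcondchar} then holds for all $x$, because $p_X(x)\neq0$ everywhere by Assumption \ref{ass:two}. Lemma \ref{lgvecind} now gives independence of $V$ and $X$.

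\emph{Main obstacle.} The delicate step is justifying the change of variables rigorously. Local invertibility from \eqref{junon} alone does not give a global diffeomorphism from $\mR^m$ onto $\Gamma(x)$. I would resolve this by appealing to the bijection setting of \eqref{psi}, under which $u=\psi(x,v)$ is a $C^1$ inverse on $\Gamma(x)$. Failing that, I would use the multi-preimage area formula, in which case \eqref{gentrans2} must be read summed over preimages. In either case I would note that (iii) is needed only so that $\Phi$ is a genuine (possibly improper) Fourier integral, and that it holds automatically once $\xi$ is identified as a density.
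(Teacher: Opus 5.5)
Your proof is correct and follows essentially the same route as the paper: necessity uses Lemma \ref{lnec} for the range invariance, together with the change of variables $v=h(x,u)$ identifying $\xi I_{\Gamma}=p_V$, and sufficiency uses the same substitution to reduce the conditional characteristic function to $\int_{\mR^m}e^{j\omega^{\top}v}\xi(v)I_{\Gamma}(v)\,dv$ before invoking Lemma \ref{lgvecind}. Your caveat that \eqref{junon} gives only local invertibility, so the substitution needs global injectivity of $u\mapsto h(x,u)$ (e.g.\ from \eqref{psi}), is left implicit in the paper and is a worthwhile refinement.
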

\begin{proof}
	 If $V$ and $X$ are independent then  Lemma \ref{lnec} shows that (i) holds.  The proof of the lemma shows that \eqref{gentrans2} holds with $\xi(\cdot)I_{\Gamma}(\cdot)=p_V(\cdot)$; this defines $\xi$ on $\Gamma(x)$ and elsewhere it can be chosen arbitrarily, e.g. as zero. establishing (ii). Since any pdf necessarily has a Fourier transform, this establishes (iii),  proving necessity.
	 
	 Now suppose (i)-(iii) hold.  By Assumption \ref{ass:two}, $p_X(x)$ is nonzero everywhere.  Make the substitution $v=h(x,u)$. Under (i) $\Gamma(x)\equiv \Gamma$  for some constant domain $\Gamma$. Then as the integration variable is $u$ alone,
	 \begin{align*}
	 	\int_{\mR^m} e^{j\omega^{\top} h(x,u)}p_{U|X}(u|x)du&=\int_{\Gamma} e^{j\omega^{\top} v}\dfrac{p_{U|X}(u|x)}{\left |\det[J_{hu}(x,u)]\right |}dv\\
	 	&=\int_{\Gamma} e^{j\omega^{\top} v}\xi(v)dv\\
	 	&=\int_{\mR^m}e^{j\omega^{\top} v}\xi(v)I_{\Gamma}(v)dv
	 \end{align*}
	 which exists as $\xi(v)I_{\Gamma}(v) $ has a Fourier Transform. Moreover, it is a function of $\omega$ alone and not of $x$. Thus $V$ and $X$ are independent from Lemma \ref{lgvecind}.
\end{proof}

\subsection{$h_k\left(x_{k+1},u_k\right)$ from conditional distributions  }\label{sdist}
 We first show that among the choices of
$\xi(\cdot)$ and the domain $\Gamma$ eligible for applying Theorem \ref{tJu} the following pair always works: $\Gamma=[0,1]^m$ and
$\xi(v)=1, ~\forall v\in \Gamma.$
(There is actually great freedom in the choice of $\xi$). This  particular choice is used in 
 the following theorem to construct  an $h(\cdot,\cdot)$   that is a \textit{conditional Rosenblatt Transform}, \cite{rosenblatt1952remarks},  satisfying our sufficient conditions. Specifically, the first entry of $h(\cdot,\cdot)$ is the cumulative distribution function {cdf} of the first entry of $u_k$ conditioned  on $x_{k+1}$. For $i>1$,  $i$-th entry of $h(\cdot,\cdot)$ is the cdf of the $i$-th entry of $u_k$ conditioned  on $x_{k+1}$ and all entries of $u_k$ with index less than $i$. Notice this makes the Jacobian in Theorem \ref{tJu} lower triangular and its determinant the product of the diagonal entries. As shown in the proof, the first diagonal entry  is the pdf of the first entry of $u_k$  conditioned  on $x_{k+1}$ while for $i>1$, the $i$-th diagonal entry is likewise  the pdf of the $i$-th entry of $u_k$ conditioned  on $x_{k+1}$ and all entries of $u_k$ with index less than $i$. This means  the determinant of the Jacobian is the conditional pdf $p_{U|X}(u|x)$.
\begin{theorem}\label{tspec}
	Consider the various quantities with their dimensions given in Theorem \ref{tJu}. Define the vectors
	$\cU_i=\begin{bmatrix}
			U^{(1)},\cdots, U^{(i)}
		\end{bmatrix}^{\top},$ $ ~~i\in \{1,\cdots,m\}$
	with $U^{(i)}$ the $i$-th entry of $U$. Call $P_{V|W}(v|w)$ the conditional cdf  of $V|W$.
	Define $h_i(x,u)$,  the $i$-th entry of $h(x,u)$, by
	\begin{align}
		&	h_1(x,u)=P_{U^{(1)}|X}(u^{(1)}|x), \nonumber\\
		& h_i(x,u)=P_{U^{(i)}|\cU^{(i-1)},X}(u^{(i)}|u^{(1)},\nonumber\cdots , u^{(i-1)} ,x)\\
		&\forall ~~i\in \{2,\cdots, m\}.\label{nest}
	\end{align}
	Then $h(x,u)$ and $x$ are independent random vectors. 
\end{theorem}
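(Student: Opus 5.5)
The plan is to verify the three conditions of Theorem \ref{tJu} with the choice $\Gamma=[0,1]^m$ and $\xi(v)=1$ on $\Gamma$. The preamble of that theorem also asks for differentiability and \eqref{junon}, so I will check those first.

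\textbf{Step 1: the Jacobian.} By construction, $h_i(x,u)$ depends on $u$ only through $u^{(1)},\dots,u^{(i)}$. Hence $J_{hu}(x,u)$ is lower triangular. Its $i$-th diagonal entry is
\[
\frac{\partial}{\partial u^{(i)}}P_{U^{(i)}|\cU_{i-1},X}(u^{(i)}|u^{(1)},\dots,u^{(i-1)},x)=p_{U^{(i)}|\cU_{i-1},X}(u^{(i)}|u^{(1)},\dots,u^{(i-1)},x).
\]
By the chain rule for conditional densities, the product of these entries is $p_{U|X}(u|x)$. By 4) of Assumption \ref{ass:two}, $p_{XU}$ is nowhere zero, so every marginal and conditional density of $U$ given $X$ and preceding entries is positive. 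Therefore $\det[J_{hu}(x,u)]=p_{U|X}(u|x)>0$ everywhere, which gives \eqref{junon}. Differentiability of $h$ in $x$ I would take from the smoothness of the densities, which is implicit in Assumption \ref{ass:two}; only differentiability in $u$ is really used in the change of variables.

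\textbf{Step 2: the range $\Gamma(x)$.} This is the step I expect to be the main obstacle, because the map $u\mapsto h(x,u)$ has to be shown onto a fixed set. Fix $x$. Each $h_i$ is a conditional cdf with a strictly positive density, so it is a strictly increasing continuous function of $u^{(i)}$ with limits $0$ and $1$. I would argue by induction on $i$ that $u\mapsto h(x,u)$ is a bijection from $\mR^m$ onto the open cube $(0,1)^m$:
\begin{itemize}
\item Given a target $v\in(0,1)^m$, invert $h_1$ to get a unique $u^{(1)}$.
\item Given $u^{(1)},\dots,u^{(i-1)}$, invert $h_i$ in $u^{(i)}$ to get a unique $u^{(i)}$.
\end{itemize}
So $\Gamma(x)=(0,1)^m$ for every $x$, which is condition (i) with $\Gamma$ independent of $x$. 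Whether one writes $[0,1]^m$ or $(0,1)^m$ differs only by a null set and does not affect the integrals.

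\textbf{Step 3: condition (ii).} By Step 1, the left side of \eqref{gentrans2} is $p_{U|X}(u|x)/p_{U|X}(u|x)=1$. So $\xi\equiv1$ on $\Gamma$ works.

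\textbf{Step 4: condition (iii).} The function $\xi(v)I_\Gamma(v)$ is the indicator of the unit cube. It is integrable, so it has a Fourier transform.

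Theorem \ref{tJu} then gives independence of $h(x,u)$ and $x$. As a check, one can see this directly without Theorem \ref{tJu}: the conditional law of $h(X,U)$ given $X=x$ is uniform on the cube for every $x$, which is the classical Rosenblatt argument. The conditional characteristic function in \eqref{gcondchar} is then the characteristic function of the uniform distribution, independent of $x$, and Lemma \ref{lgvecind} applies.
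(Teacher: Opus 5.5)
Your proposal is correct and takes essentially the same route as the paper. You verify the conditions of Theorem \ref{tJu} with $\xi\equiv 1$ on the unit cube, using that $J_{hu}$ is lower triangular with determinant $p_{U|X}(u|x)\neq 0$. Your sequential-inversion argument in Step 2 is a bit more careful than the paper, which simply asserts that the range is $[0,1]^m$: you correctly obtain the open cube $(0,1)^m$, and you also establish the global injectivity of $u\mapsto h(x,u)$ that the change of variables in Theorem \ref{tJu} implicitly relies on.
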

\begin{proof}
	From Assumption \ref{ass:two} $h(\cdot,\cdot)$ is differentiable.
	In (\ref{nest}), $h_i(x,u)$ has no functional dependence on $u^{(i+k)}$ for $k>0$. Thus,
	$J_{hu}(x,u)$ is lower triangular and its determinant is the product of the diagonals. Hence, from (\ref{nest})
	\begin{align*}
		\det[J_{hu}(x,u)]&=\prod_{i=1}^m \frac{\partial h_i(x,u)}{\partial u^{(i)}}\\
		&=
		\frac{\partial P_{U^{(1)}|X}(u^{(1)})|x)}{\partial u^{(1)}}\\
		\times & \prod_{i=2}^m \frac{\partial P_{U^{(i)}|\cU_{i-1},X}(u^{(i)}|u^{(1)},\cdots,u^{(i-1)},x)}{\partial u^{(i)}}\\
		&=p_{U^{(1)}|X}(u^{(1)}|x)\\
		\times &\prod_{i=2}^mp_{U^{(i)}|\cU_{i-1},X}(u^{(i)}|u^{(1)},\cdots,u^{(i-1)},x)\\
		&=p_{U|X}(u|x).
	\end{align*}
	Hence  \eqref{junon} holds from Assumption \ref{ass:two}.
	As each entry of $h(x,u)$ is a conditional cdf, the range of $h(x,u)$ in (\ref{nest})  is $[0,1]^m$ for all $x$ and thus independent of $x$. In (\ref{gentrans2}) choose $\xi(v)=1, ~\forall~ v\in [0,1]^m $. Observe $\xi(v)I_{\Gamma}(v)=I_{[0,1]^m}(v)$  has a Fourier Transform. Then all conditions of Theorem \ref{tJu} are met, 
	completing the proof.
\end{proof}
Conditional distributions in (\ref{nest}) exist everywhere because of 4) of Assumption \ref{ass:two} and are strictly monotonic in $u_k$ as $p_{U_k}(u)\neq 0$ everywhere. Thus,  $v_{k+1}=h_{k+1}(x_{k+1},u_k)$  uniquely determines $u_k$ for every $x_{k+1}$. Hence, from Theorem \ref{tsuff}, $v_{k+1}=h_{k+1}(x_{k+1},u_k) $ provides the driving white noise for a reverse diffusion.  This is also a clear point of departure from the theory for reversing an SDE, where in \eqref{eq:backward} $dv$ is an increment of a Wiener process. By contrast the $v_{k+1}$ yielded by Theorem \ref{tspec} is restricted to $[0,1]^m$ and thus \textit{cannot be Gaussian even if $u_k$ is.}

The construction above has just exhibited one particular solution to the problem statement. 
Theorem \ref{tgeneralfamily}  provides a  family of $v_{k+1}=h_{k+1}(x_{k+1},u_k)$.

\begin{theorem} \label{tgeneralfamily}
	Suppose $u=[u^{(1)}, \cdots, u^{(m)}]$ and $x\in \mR^n$ are random vectors and Assumption \ref{ass:two} holds. Consider $v=h(\cdot,\cdot)$ with $h(\cdot,\cdot)$ differentiable and $h_i(\cdot,\cdot)$ its $i$-th entry, together with differentiable functions $\pi_i:\mR\rightarrow\mR$ and  $\chi_i:\mR^n\rightarrow\mR$ that obey the following: (i) With $\cU_i$ defined in Theorem \ref{tspec}
	\[ 
	\pi_1(h_1(x,u^{(1)}))=\chi_1(x)+	P_{U^{(1)}|X}(u^{(1)}|x)
	\]
	and for $i>1$
	\begin{align*}
		&\pi_i(h_i(x,u^{(1)},\cdots,u^{(i)} ))\\
		&= \chi_i(x)+	P_{U^{(i)}|\cU_{i-1},X}(u^{(i)}|u^{(1)},\cdots,u^{(i-1)},x).
	\end{align*}
	(ii) For fixed $x$, $\Gamma$ the range of $v$ in \eqref{genv} (as $u$ varies in $\mR^m$), is independent of $x$. (iii)
	With
	 $\xi_i(v_i)=\pi_i'(v_i)$ and
	 \begin{equation}\label{prod}
	 	\xi(v)= \Pi_{i=1}^m\xi_i(v_i),
	 \end{equation}
	$\xi(v)I_{\Gamma}(v)$
	has a Fourier Transform. Then 
	 $v=h(x,u)$ and $x$ are independent.
\end{theorem}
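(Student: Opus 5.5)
The plan is to verify the three hypotheses of Theorem \ref{tJu} for this $h$ and conclude independence from it, following the proof of Theorem \ref{tspec} with the extra maps $\pi_i$ and $\chi_i$ carried along. Condition (i) of Theorem \ref{tJu} is hypothesis (ii) here, and condition (iii) of Theorem \ref{tJu} is hypothesis (iii) here. The work therefore reduces to two things: checking \eqref{junon}, and establishing \eqref{gentrans2} with $\xi$ as defined in \eqref{prod}.

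\textbf{Structure of the Jacobian.} By hypothesis (i), $h_i$ depends on $u$ only through $u^{(1)},\cdots,u^{(i)}$, so $J_{hu}(x,u)$ is lower triangular and
\[
\det[J_{hu}(x,u)]=\prod_{i=1}^m \frac{\partial h_i}{\partial u^{(i)}}.
\]
To compute the diagonal entries, I differentiate the defining identity in (i) with respect to $u^{(i)}$, holding $x$ and $u^{(1)},\cdots,u^{(i-1)}$ fixed. The term $\chi_i(x)$ drops out, and the chain rule gives
\[
\pi_i'(h_i)\,\frac{\partial h_i}{\partial u^{(i)}}=p_{U^{(i)}|\cU_{i-1},X}(u^{(i)}|u^{(1)},\cdots,u^{(i-1)},x),
\]
and analogously for $i=1$. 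By 4) of Assumption \ref{ass:two}, the right-hand side is nonzero everywhere. Hence both $\pi_i'(h_i)$ and $\partial h_i/\partial u^{(i)}$ are nonzero, which establishes \eqref{junon}.

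\textbf{Establishing \eqref{gentrans2}.} Multiplying the $m$ identities above and using the chain rule of conditional densities gives
\[
\xi(h(x,u))\det[J_{hu}(x,u)]=p_{U|X}(u|x),
\]
where $\xi(v)=\prod_i\xi_i(v_i)$ with $\xi_i=\pi_i'$. Taking absolute values yields
\[
\frac{p_{U|X}(u|x)}{|\det J_{hu}|}=|\xi(h(x,u))|.
\]
This is the one delicate point, and I expect it to be the main obstacle. Each $\pi_i'(h_i)$ has the same sign as $\partial h_i/\partial u^{(i)}$, because their product is a positive density. Along any line in $u^{(i)}$ with the other coordinates fixed, $\partial h_i/\partial u^{(i)}$ is continuous and never zero, so its sign is constant there. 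I would argue that the product of these signs is constant in $(x,u)$; if it is not, I would replace $\xi$ by $|\xi|$. Either choice satisfies \eqref{gentrans2} up to sign, and the Fourier-transform hypothesis (iii) is unaffected, since $\xi I_\Gamma$ being transformable is really an integrability statement. More simply, the change-of-variables step in the proof of Theorem \ref{tJu} only uses the absolute value, so $|\xi|$ can be substituted directly, with (iii) interpreted accordingly.

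\textbf{Conclusion.} With (i)–(iii) of Theorem \ref{tJu} in hand, that theorem yields the independence of $v=h(x,u)$ and $x$. As a consistency check, taking $\pi_i$ to be the identity and $\chi_i=0$ recovers Theorem \ref{tspec}.
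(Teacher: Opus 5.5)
Your proposal is correct and is essentially the paper's proof. Both differentiate the identities in (i) with respect to $u^{(i)}$, use the lower-triangular structure of $J_{hu}(x,u)$ to obtain $p_{U|X}(u|x)=\xi(h(x,u))\det[J_{hu}(x,u)]$, and then invoke Theorem \ref{tJu}. Your treatment of the sign is a real refinement, because the paper silently replaces $\det[J_{hu}]$ by $|\det[J_{hu}]|$. No hedging is needed there: since $p_{U|X}>0$, you get $p_{U|X}(u|x)/|\det[J_{hu}(x,u)]|=|\xi(h(x,u))|$ exactly. Moreover, each $\pi_i'$ is nonzero on the interval over which $h_i$ ranges, so by Darboux's theorem it has constant sign, and hence $|\xi|=\pm\xi$ on $\Gamma$.
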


\begin{proof}
	
	From (iii)  $\xi(\cdot)$ has a Fourier Transform.  From (i) and (iii), taking partial derivatives with respect to entries of $u$ we get
	\[ 
	\xi_1(h_1(x,u^{(1)}))\dfrac{\partial h_1(x,u^{(1)})}{\partial u^{(1)}}=p_{U^{(1)}|X}(u^{(1)}|x)
	 \]
	 and for $i>1,$
	 \begin{align*}
	 	&\xi_i(h_i(x,u^{(1)},\cdots,u^{(i)}))\dfrac{\partial h_i(x, u^{(1)},\cdots, u^{(i)})}{\partial u^{(i)}}\\
	 	&=p_{U^{(i)}|\cU_{i-1},X}(u^{(i)}|u^{(1)},\cdots,u^{(i-1)},x).
	 \end{align*}
	As the $i$-th entry  of $h(x,u)$ is
	$v_i=h_i(x,u^{(1)},\cdots,u^{(i)}),$
	the Jacobian $J_{hu}(x,u)$ is lower triangular and
	\[ 
	\det[J_{hu}(x,u)]=\prod_{i=1}^m\dfrac{\partial h_i(x, u^{(1)},\cdots,u^{(i)})}{\partial u^{(i)}}.
	\]
	There thus holds,
	\begin{align}
		p_{U|X}(u|X)&=p_{U^{(1)}|X}(u^{(1)}|x)\\
		&\times
		\prod_{i=2}^mp_{U^{(i)}|\cU_{i-1},X}(u^{(i)}|u^{(1)},\cdots,u^{(i-1)},x)\nonumber\\
		&=\left (\prod_{i=1}^m \xi_i(h_i(x,u^{(1)},\cdots,u^{(i)})) \right )\\
		&\times\left (\prod_{i=1}^m\dfrac{\partial h_i(x, u^{(1)},\cdots, u^{(i)})}{\partial u^{(i)}}\right )\nonumber\\
		&=\xi(h(x,u))\det[J_{hu}(x,u)].\label{indu}
	\end{align}
	Further,  as from Assumption \ref{ass:two}, $p_{U|X}(u|x)$ is nonzero everywhere, so is $\det[J_{hu}(x,u)]$.
	Then from (ii), \eqref{indu} and the fact that $\xi(\cdot)$ has a Fourier Transform,  $v=h(x,u)$ meets all conditions of
	Theorem \ref{tJu}. Thus, $v=h(x,u)$ and $x$ are independent random variables.
\end{proof}
This thus permits distinguished functions of the $h(\cdot,\cdot)$ in Theorem \ref{tspec} to also work. The need for conditional Rosenblatt Transform to cope with the determinant of  $J_{hu}(\cdot,\cdot)$ imposes a loss of generality preventing this from being a necessary condition. \textit{However, we now show  that when $u$ is a scalar this is in fact necessary as well.}

\begin{theorem}\label{tnecc}
	Suppose under Assumption \ref{ass:two}, $u\in \mR$,  $x\in \mR^n$ and there exists a continuously differentiable $h:\mR^{n\times 1}\rightarrow \mR$ such that with $v=h(x,u)$ the following hold: (i)  $v$ and $x$ are independent and (ii)
	\begin{equation}\label{nonzero}
	\dfrac{\partial h(x,u)}{\partial u}=h_u(u,x)\neq 0, ~\forall ~x\in \mR^n, ~u\in \mR.
	\end{equation}
	 Then: (A) There   is a differentiable  $\pi:\mR\rightarrow\mR$ and a $\chi:\mR^n\rightarrow\mR$ such that
	 \[ 
	 \pi(h(x,u))=\chi(x)+P_{U|X}(u|x).
	  \]
	  (B) The range $\Gamma$ of $h(x,u)$ for $u\in \mR$ is independent of $x$.
	  (C) The function $\pi'(v)I_{\Gamma}(v)$ has a Fourier Transform.
\end{theorem}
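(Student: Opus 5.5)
The plan is to run the necessity half of Theorem \ref{tJu} and then integrate. Since $v$ and $x$ are independent and \eqref{nonzero} is exactly \eqref{junon} in the scalar case, Lemma \ref{lnec} and the necessity part of Theorem \ref{tJu} apply directly. They give (B) immediately: the range $\Gamma(x)\equiv\Gamma$ does not depend on $x$. They also give a function $\xi$ with $\xi(\cdot)I_\Gamma(\cdot)=p_V(\cdot)$ and
\[
p_{U|X}(u|x)=\xi(h(x,u))\,\left|h_u(u,x)\right| .
\]

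Next I remove the absolute value. Because $h$ is continuously differentiable, $h_u$ is continuous in $u$ and never zero, so for each fixed $x$ its sign is constant in $u$. Since $\mR^n$ is connected and $h_u$ is jointly continuous, the sign $s\in\{-1,1\}$ is also constant in $x$. Absorbing $s$ into $\xi$ gives $p_{U|X}(u|x)=\tilde\xi(h(x,u))\,h_u(u,x)$, where $\tilde\xi=s\,p_V$ on $\Gamma$. Since $\Gamma$ is the continuous image of $\mR$ under a strictly monotone map, it is an open interval.

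Now I define $\pi$ as an antiderivative of $\tilde\xi$ on $\Gamma$, namely $\pi(v)=\int_{v_0}^{v}\tilde\xi(t)\,dt$ for a fixed $v_0\in\Gamma$, and extend it outside $\Gamma$ arbitrarily, for example as a constant. By the chain rule, $\frac{\partial}{\partial u}\pi(h(x,u))=\tilde\xi(h(x,u))\,h_u=p_{U|X}(u|x)=\frac{\partial}{\partial u}P_{U|X}(u|x)$. So for each $x$ the difference $\pi(h(x,u))-P_{U|X}(u|x)$ is constant in $u$, and I call that constant $\chi(x)$. This proves (A).

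For (C), $\pi'(v)I_\Gamma(v)=s\,p_V(v)$ is $\pm$ a probability density. It is therefore absolutely integrable and has a Fourier transform, namely $s\,\Phi_V$.

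The main obstacle is the differentiability of $\pi$, which requires $\tilde\xi$ (that is, $p_V$) to be regular enough for the fundamental theorem of calculus to hold pointwise. I would handle this by noting that $p_V(h(x,u))=p_{U|X}(u|x)/|h_u|$ for a fixed $x$. Inverting $h(x,\cdot)$, which is a $C^1$ diffeomorphism onto $\Gamma$, then shows that $p_V$ inherits continuity on $\Gamma$ from $p_{U|X}(\cdot|x)$. I take that continuity as implicit in the standing assumption that the relevant conditional densities and cdfs are differentiable, as already used in Theorem \ref{tspec}. This yields $\pi\in C^1(\Gamma)$.

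A secondary subtlety is the sign of $h_u$, settled above by connectedness. Equivalently, one could simply take $\pi(v)=\pm P_V(v)$, which shows $\chi(x)$ is $0$ or $-1$ depending on the sign.
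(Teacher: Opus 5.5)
Your proposal is correct and takes essentially the paper's route. You invoke the necessity half of Theorem \ref{tJu} to get (B) and $\xi$, use continuity of $h_u$ to fix its sign, take $\pi$ as a signed antiderivative of $\xi$, integrate the chain-rule identity in $u$ to obtain (A), and read off (C) from the Fourier transform of $\xi I_\Gamma$. Your extra remarks on the continuity of $p_V$ (so the fundamental theorem of calculus holds pointwise) and on the explicit choice $\pi=\pm P_V$ with $\chi\in\{0,-1\}$ are refinements the paper leaves implicit.
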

\begin{proof}
Theorem \ref{tJu} proves    that there is a $\xi(\cdot)$ such that $\xi(v)I_{\Gamma}(v)$ has a Fourier Transform  and  \eqref{gentrans2} and (B) hold.  Existence of the Fourier Transform ensures that  $\xi(\cdot)$ is integrable over its domain $\Gamma$. As $h_u(x,u)$ is continuous, \eqref{nonzero} implies that $h_u(x,u)$ is either always positive or negative.  Define $\pi(\cdot)$ to be the integral of $\xi(\cdot)$ if $h_u(x,u)>0$. Choose it to be negative of the integral otherwise.  Thus (C) holds and from \eqref{gentrans2},
 \[ 
 \dfrac{\partial \pi(h(x,u))}{\partial u}=\xi(h(x,u))|h_u(x,u)|=p_{U|X}(u|x).
 \]
Consequently
for some $\chi(\cdot)$, (A) holds as
\begin{align*}
	\pi(h(x,u))&= \chi(x)+\int_{\mR}p_{U|X}(u|x)du.
\end{align*}
\end{proof}

\section{Linear Forward Processes and input-affine reverse diffusions}\label{sneg}
In this section we consider the LTV forward model \eqref{vecforward} with $u_k\sim N(0,I)$. 
The state covariance, $\Sigma_k$ obeys
\begin{equation}\label{eq:TVcovariance}
	\Sigma_{k+1}=F_k\Sigma_kF_k^{\top}+G_kG_k^{\top}\quad k\geq k_0.
\end{equation}
We assume $\Sigma_{k_0}$ and hence all $\Sigma_{k}$ are positive definite. First for  the case where the initial and all subsequent states are Gaussian, viz $x_k\sim N(0,\Sigma_k)$,
we show that Theorem \ref{tJu} recovers the result of \cite{Kailath}. 
Observe
\[ 
U_k|X_{k+1}\sim N\left (G_k\Sigma_{k+1}^{-1}x_{k+1}, I-G_k^{\top}\Sigma_{k+1}^{-1}G_k)^{-1}\right ).
 \]
Note that the choice 
\[ 
\xi(v_{k+1})=\dfrac{\exp\left (-\frac{1}{2}v_{k+1}^{\top}(I-G_k^{\top}\Sigma_{k+1}^{-1}G_k)^{-1} v_{k+1}\right )}{(2\pi)^{n/2}\det\left ( I-G_k^{\top}\Sigma_{k+1}^{-1}G_k\right)}
\]
has a Fourier Transform. Now in \eqref{genv} choose
\begin{equation}\label{vlin}
	v_{k+1}=G_k^{\top}\Sigma^{-1}_{k+1}x_{k+1}-u_k. 
\end{equation}
Then \eqref{vlin} meets the conditions of Theorem \ref{tJu}, as in the notation of that theorem, $J_{hu}=-I$, $\Gamma=\mR^m$,  \eqref{gentrans2} holds and $\xi(\cdot)$ has a Fourier transform. Note this  $v_k$ is \textit{Gaussian}. Further, \eqref{vlin}  conforms to \eqref{vexpress} with $D_{k+1}(x_{k+1})=I$ and $C_{k+1}(x_{k+1})=G_k^{\top}\Sigma^{-1}_{k+1}x_{k+1}$. Thus, from Theorem \ref{taffine} the reverse model is necessarily input affine. In fact, using \eqref{eq:TVcovariance}, \eqref{eq:ABdef} and \eqref{Bx}, the reverse diffusion becomes
\begin{align*}
	x_k&=F_k^{-1}x_{k+1}-F_k^{-1}G_ku_k\\ &=\Sigma_kF_k^{\top}\Sigma_{k+1}^{-1}x_{k+1}+F_k^{-1}G_kG_k^{\top}\Sigma_{k+1}^{-1}x_{k+1}\\
	&-F_k^{-1}G_ku_k\\
	&=\Sigma_kF_k^{\top}\Sigma_{k+1}^{-1}x_{k+1}+F_k^{-1}G_kv_{k+1}
\end{align*}
with the last equality following from \eqref{vlin}.
This is indeed, the reverse model,  derived by very different means, in \cite{Kailath}. 

That $v_{k+1}=E[u_k|x_{k+1}]-u_k$ in (\ref{vlin}) and $x_{k+1}$ are independent, is unsurprising, as in this  Gaussian case  $v_{k+1}$ represents the innovations in estimating $u_k$ from $x_{k+1}$.  
This reverse model though is \textit{not unique}. Indeed, Section \ref{sex} provides a second reverse model in this LTV Gaussian case using Theorem \ref{tgeneralfamily}, which is neither input-affine nor has a Gaussian process noise $v_{k+1}$.

The rest of this section examines the existence of input-affine reverse models for LTV processes, \eqref{vecforward}, with $G_k$  square as in \eqref{linearnoising}, and shows that for a wide class of state densities such a reverse diffusion does not exist. We will associate $F_kx_k=z$, $x_{k+1}=x$, $u_k=u$, $D_{k+1}(\cdot)=-D(\cdot)$, $C_{k+1}(\cdot)=C(\cdot)$ and $v_{k+1}=v$. 

In view of Theorem \ref{taffine} an input-affine reverse diffusion exists iff there are
$C(\cdot)$ and $D(\cdot)$, the latter \textit{nonsingular} everywhere, such that
\begin{equation}\label{vcd}
	V=C(X)+D(X)U
\end{equation}
and $X$ are independent. As  $G_k$ is square,  from Assumption \ref{ass:two}, it is nonsingular. Then  there is no loss of generality in assuming
that
\begin{equation}\label{xzu}
	X=Z+U,
\end{equation}
 with $Z$ and $U\sim N(0,I)$ independent. By way of summary of the remainder of the section, following recording of key background results in Section \ref{sknown},   we show in Section \ref{sscalar} that in the scalar case  independence of $V$ and $X$ necessitates that $Z$ be Gaussian. Subsequently, for the vector case Section \ref{svec} shows that for a very wide class of non-Gaussian random vectors $Z$, $V$ and $X$ cannot be independent,  proving the nonexistence of input-affine reverse models for this very wide class. 
 
 \subsection{Some known results}\label{sknown}
 We first give a  lemma associated more commonly with the fluids literature, \cite{Zheligovsky_2014}. 
 
 \begin{lemma}\label{lanalytic}
 	Consider \eqref{xzu} with $Z\in\mR^n$ having a pdf $p_Z(z)$ and independent with $U\sim N(0,I)$. Then the pdf $p_X(x)$  of $X$ is a real analytic function.
 \end{lemma}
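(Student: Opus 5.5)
Since $Z$ and $U$ are independent, $p_X$ is the convolution of $p_Z$ with the standard Gaussian kernel:
\[
p_X(x)=\int_{\mR^n} p_Z(z)\,\phi(x-z)\,dz,\qquad \phi(y)=(2\pi)^{-n/2}e^{-\|y\|^2/2}.
\]
The plan is to show that this convolution extends holomorphically to all of $\mC^n$. Its restriction to $\mR^n$ is then real analytic.

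For $w=x+jy\in\mC^n$, with $x,y\in\mR^n$, define
\[
P(w)=(2\pi)^{-n/2}\int_{\mR^n} p_Z(z)\, e^{-\frac12 (w-z)^{\top}(w-z)}\,dz.
\]
For each fixed $z$ the integrand is entire in $w$. Its modulus is
\[
|e^{-\frac12 (w-z)^{\top}(w-z)}|=e^{-\frac12\|x-z\|^2+\frac12\|y\|^2}\le e^{\frac12\|y\|^2}.
\]
Hence on any compact set $K\subset\mC^n$ the integrand is dominated by $p_Z(z)\sup_{K}e^{\|y\|^2/2}$. This bound is integrable because $\int p_Z=1$.

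Two standard facts then show that $P$ is holomorphic on $\mC^n$. First, dominated convergence gives continuity of $P$. Second, by Fubini and Morera's theorem applied in each variable separately, $P$ is holomorphic in each $w_i$, and Osgood's lemma then yields joint holomorphy. Alternatively, one can differentiate under the integral sign, again justified by the uniform domination.

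An entire function restricted to $\mR^n$ equals its convergent power series about every real point. Therefore $p_X=P|_{\mR^n}$ is real analytic, and in fact its Taylor series at any point converges on all of $\mR^n$.

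The only delicate step is the domination bound. Without any moment assumptions on $Z$, one needs the Gaussian factor to remain bounded uniformly in $z$ for complex arguments. The estimate $e^{-\|x-z\|^2/2}\le 1$ provides exactly this, so only integrability of $p_Z$ is used and no tail conditions on $Z$ are needed.
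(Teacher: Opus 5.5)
Your argument is correct and is essentially the paper's proof. Both complexify the Gaussian convolution $p_X(s)=\int p_Z(t)p_U(s-t)\,dt$, use only $p_Z\in L^1$ together with local boundedness of the complexified kernel to carry holomorphy through the integral, and restrict to $\mR^n$; the paper does this via the Cauchy--Riemann equations and differentiation under the integral sign, you via Morera/Osgood, and your explicit estimate $|e^{-\frac12(w-z)^{\top}(w-z)}|\le e^{\frac12\|y\|^2}$ supplies the domination that the paper's ``finite on every compact subset of $\mC^n$'' step leaves implicit.
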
 
 \begin{proof}
 	Since $p_U(u)$ the pdf of $U$ is Gaussian its complex extension is analytic. It thus satisfies the Cauchy-Riemann equations (CRE): i.e. for all $s\in \mC^n$ the partials of real and imaginary parts of $p_U(s)$ with respect to real and imaginary parts of entries of $s$ obey certain equalities. As $p_Z(z)$ is real for $z\in \mR^n$, these equalities must also hold for the partials of real and imaginary parts of $p_U(s-t)p_Z(t)$ with respect to real and imaginary parts of entries of $s$, for any $t\in \mR^n$.
    As $p_Z(\cdot)\in L^1$, the complex extension of $p_X(x)$,
 	\[
 	p_X(s)=\int_{\mR^n}p_Z(t)p_U(s-t)dt
 	\]
 	is finite for all $s$ in every compact subset of $\mC^n.$ Thus the integral commutes with its derivatives with respect to the entries of $s$. Thus $p_X(s)$ also satisfies the CREs, is analytic and its restriction to $\mR^n$ is therefore real analytic.
  \end{proof}
 
 The next Lemma is the consequence of the Tweedie-Robbins-Esposito identity, \cite{Esposito1968}.
 
 \begin{lemma}\label{lespositop}
 Under the condition of Lemma \ref{lanalytic} suppose $p_X(x)>0$ for all $x\in \mR^n$. Then 
 \[ 
 \mbox{Cov}(U|X=x)=I+\nabla^2\log p_X(x),
  \]
  where $\nabla^2\log p_X(x)$ is the Jacobian of $\nabla \log p_X(x)$.
 \end{lemma}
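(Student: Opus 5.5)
We prove Lemma \ref{lespositop} by writing the conditional law of $U$ given $X=x$ explicitly and differentiating $\log p_X$ twice under the integral sign. Because $Z$ and $U$ are independent and $X=Z+U$, the joint density of $(X,U)$ is $p_Z(x-u)p_U(u)$. Equivalently,
\[
p_X(x)=\int_{\mR^n}p_Z(t)\,p_U(x-t)\,dt ,\qquad p_{U|X}(u|x)=\frac{p_Z(x-u)p_U(u)}{p_X(x)} ,
\]
and the conditional density is well defined since $p_X(x)>0$ everywhere by hypothesis. The key point is that $p_U(x-t)=(2\pi)^{-n/2}e^{-\|x-t\|^2/2}$ has derivatives
\[
\nabla_x p_U(x-t)=-(x-t)p_U(x-t),\qquad \nabla_x^2 p_U(x-t)=\left((x-t)(x-t)^{\top}-I\right)p_U(x-t).
\]
In the integral defining $p_X$ the integration variable $t$ plays the role of $Z$, so $x-t$ plays the role of $U$.

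The first step is to justify differentiating twice under the integral. The proof of Lemma \ref{lanalytic} already asserts that the integral commutes with derivatives in $x$. Concretely, on any compact set of $x$ the quantities $\|x-t\|^k e^{-\|x-t\|^2/2}$ are bounded uniformly in $t$, and $p_Z\in L^1$, so dominated convergence applies. This gives
\begin{align*}
\nabla p_X(x)&=-\int (x-t)p_Z(t)p_U(x-t)\,dt=-p_X(x)\,E[U|X=x],\\
\nabla^2 p_X(x)&=p_X(x)\left(E[UU^{\top}|X=x]-I\right).
\end{align*}
The first identity is the Tweedie--Robbins formula $E[U|X=x]=-\nabla\log p_X(x)$.

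Finally we use
\[
\nabla^2\log p_X=\frac{\nabla^2 p_X}{p_X}-\frac{\nabla p_X\,\nabla p_X^{\top}}{p_X^2}
=E[UU^{\top}|X=x]-I-E[U|X=x]E[U|X=x]^{\top}.
\]
The right-hand side is $\mbox{Cov}(U|X=x)-I$, and rearranging gives the claim.

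The only delicate step is the interchange of differentiation and integration. Since no moment assumptions are placed on $Z$, the domination must come entirely from the Gaussian kernel together with the integrability of $p_Z$, and the local uniform bound above provides exactly that. Positivity of $p_X$ is needed only so that $\log p_X$ and the conditional density are well defined.
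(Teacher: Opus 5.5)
Your proof is correct, and it takes a different route from the paper. The paper's argument consists of two citations. It invokes the Tweedie--Robbins--Esposito identity for the conditional mean in terms of the score. It then invokes the Hatsell--Nolte identity, which gives the conditional covariance as the $x$-Jacobian of the conditional mean. You instead derive both facts at once by differentiating the convolution $p_X=p_Z*p_U$ twice under the integral sign. This yields $\nabla p_X=-p_X\,E[U|X=x]$ and $\nabla^2 p_X=p_X\left(E[UU^{\top}|X=x]-I\right)$, and the quotient rule for $\nabla^2\log p_X$ finishes the argument.

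The citation route buys brevity. Yours buys two things. First, it is self-contained and justifies the interchange explicitly: the bound $\sup_{r\geq 0} r^k e^{-r^2/2}<\infty$ together with $p_Z\in L^1$ suffices, and it in fact works globally in $x$, not only on compacts. Second, it gets the signs right. The identities as quoted in the paper, $E[U|X=x]=x+\nabla\log p_X(x)$ and $\mbox{Cov}(U|X=x)=\nabla_x E[U|X=x]$, are really the identities for $Z$, not for $U$. The paper's chain reaches the correct conclusion only because $\mbox{Cov}(U|X=x)=\mbox{Cov}(Z|X=x)$. Your $E[U|X=x]=-\nabla\log p_X(x)$ is the correct form for the noise.

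One minor remark: the positivity hypothesis is automatic here. Since $p_U>0$ everywhere and $p_Z$ integrates to one, the convolution is strictly positive everywhere.
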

 \begin{proof}
From Lemma \ref{lanalytic} and the fact that $p_X(x)>0$, $ \log p_X(x)$ exists and is smooth. The Tweedie-Robbins-Esposito identity, \cite{Esposito1968} states that 
\[
E[U|X=x]=x+\nabla \log p_X(x).
\]
The result follows from \cite{Hatsell1971}, that shows \footnote{A reference yielding both formulas in the scalar case is \cite{Tweedie}.}:
 \[ 
 \mbox{Cov}(U|X=x)= \nabla_x E(U|X=x).
  \]
 \end{proof}
 The last result in this subsection links the Gaussian nature of $Z$ to Cov$(U|X=x)$ being constant on a nonempty open set.
 \begin{lemma}\label{ltweedie}
	Under the conditions of Lemma \ref{lespositop}, if Cov$(U|X=x)$  is constant on a nonempty open set then $Z$ is Gaussian.
\end{lemma}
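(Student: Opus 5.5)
The plan is to convert the constancy of the conditional covariance into a statement about the Hessian of $\log p_X$, extend that statement from the open set to all of $\mR^n$ by real analyticity, and then deconvolve the Gaussian $U$ in the Fourier domain. Let $\cD$ be the nonempty open set on which Cov$(U|X=x)\equiv M$ for some constant matrix $M$.

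\textbf{Step 1 (Hessian is constant on $\cD$).} By Lemma \ref{lespositop}, $\nabla^2\log p_X(x)=M-I$ for all $x\in\cD$.

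\textbf{Step 2 (extend to $\mR^n$).} By Lemma \ref{lanalytic}, $p_X$ is real analytic, and by hypothesis $p_X>0$ everywhere. Hence $\log p_X$ is real analytic on $\mR^n$, and so is every entry of $\nabla^2\log p_X(x)-(M-I)$. Each such entry is real analytic on the connected set $\mR^n$ and vanishes on the nonempty open set $\cD$. By the identity theorem for real analytic functions it vanishes identically. Integrating twice then gives, for some $b\in\mR^n$ and $c\in\mR$,
\[
\log p_X(x)=-\tfrac12 x^{\top}Qx+b^{\top}x+c,\qquad Q=I-M .
\]
Integrability of $p_X$ forces $Q$ to be positive definite. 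Indeed, if $Q$ had a nonpositive eigenvalue, $p_X$ could not decay along the corresponding eigendirection and $\int p_X=\infty$. Thus $X\sim N(\mu,Q^{-1})$ with $\mu=Q^{-1}b$.

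\textbf{Step 3 (deconvolve the Gaussian $U$).} From \eqref{xzu} and independence of $Z$ and $U$, $\Phi_X(\omega)=\Phi_Z(\omega)\Phi_U(\omega)$. Since $\Phi_U(\omega)=e^{-\|\omega\|^2/2}$ never vanishes, we can divide to get
\[
\Phi_Z(\omega)=\exp\!\left(j\mu^{\top}\omega-\tfrac12\omega^{\top}(Q^{-1}-I)\omega\right).
\]
It remains to check that $S=Q^{-1}-I$ is positive semidefinite. If $\omega^{\top}S\omega<0$ for some $\omega$, then $|\Phi_Z(t\omega)|\to\infty$ as $t\to\infty$, contradicting $|\Phi_Z|\le 1$. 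Hence $\Phi_Z$ is the characteristic function of $N(\mu,S)$, and by uniqueness of characteristic functions $Z$ is Gaussian. Because $Z$ is assumed to have a pdf, $S$ is in fact positive definite. Alternatively, one could invoke Cramér's decomposition theorem directly once $X$ is known to be Gaussian.

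\textbf{Main obstacle.} The delicate step is Step 2: passing from local constancy of the Hessian on $\cD$ to global constancy. This is exactly where Lemma \ref{lanalytic} and the strict positivity of $p_X$ are needed, since positivity is what makes $\log p_X$ real analytic on all of $\mR^n$. The definiteness checks on $Q$ and $Q^{-1}-I$ are routine by comparison.
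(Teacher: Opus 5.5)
Your proof is correct. Steps 1 and 2 are exactly the paper's argument: real analyticity of $p_X$ (Lemma \ref{lanalytic}) together with $\mbox{Cov}(U|X=x)=I+\nabla^2\log p_X(x)$ carries the constancy of the Hessian from the open set to all of $\mR^n$, so $\log p_X$ is quadratic and $X$ is Gaussian.

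The genuine difference is the final step. The paper invokes Cram\'er's decomposition theorem: if a sum of independent summands is Gaussian, so are the summands. You instead deconvolve directly. You divide $\Phi_X$ by the nowhere-vanishing $\Phi_U$, then use $|\Phi_Z|\le 1$ to certify that the resulting exponent defines a legitimate covariance. Your route is elementary and self-contained, whereas Cram\'er's theorem is a deep result resting on the theory of entire functions. It is also far more than is needed here, since one summand is already known to be Gaussian. Your argument also avoids relying on a multivariate form of Cram\'er's theorem, which the paper quotes only in words. It gives the parameters explicitly as well, $Z\sim N(\mu,Q^{-1}-I)$. Finally, you supply the integrability argument for $Q\succ 0$, which the paper skips when it passes from ``$\log p_X$ is quadratic'' to ``$X$ is Gaussian''.

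Two small remarks. First, $S=Q^{-1}-I\succeq 0$ is automatic: $Q=I-M$ with $M$ a covariance matrix gives $0\prec Q\preceq I$, hence $Q^{-1}\succeq I$. Second, the phrase ``cannot decay along the eigendirection'' is best made rigorous. Write $x=se+w$ with $w\perp e$ and apply Tonelli; the $s$-integral then visibly diverges.
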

\begin{proof}
	From Lemma \ref{lanalytic}, $p_X(x)$ and hence $\nabla^2_x\log p_X(x) $ are real analytic. Thus from Lemma \ref{lespositop},  Cov$(U|X=x)$ is also real analytic.  Hence, if it is  constant on a  nonempty open set,  it is so everywhere, as also is
	$\nabla^2_x\log p_X(x) $. In turn,  $\log p_X(x) $ is quadratic and so $X$ is Gaussian. Cramér, \cite{cramer1936}, states if the sum of two independent random variables is Gaussian then so are the summands.   Thus, as $Z$ and $U$ are independent $Z$ must be Gaussian.
\end{proof}
In view of this last result,  the rest of this section is devoted to proving that when $V$ in \eqref{vcd} and $X$ are independent then for a wide variety of cases Cov$(U|X=x)$ is constant on a nonempty open set. The scalar case is much more straightforward, and has a more complete result. Therefore we treat it first and separately.

 \subsection{The scalar case}\label{sscalar}
 In this section we consider the scalar case of \eqref{vcd} and \eqref{xzu} and make the following assumption. 
 
 \begin{assumption}\label{ass:one}
 		In \eqref{xzu}, $z\in \mR$ has a thrice differentiable pdf that is positive everywhere and 
 	 $u\sim N(0,1)$ and $z$ are independent.
 	
 \end{assumption}
 The main result of this subsection (which is intuitively reasonable but lacks at this point a short proof)  is as follows.
 
 \begin{theorem}\label{tscalargauss}
 	Suppose Assumption \ref{ass:one} holds and there exist thrice differentiable functions $c(\cdot):\mR\rightarrow \mR$ and   $ d(\cdot):\mR\rightarrow \mR$, with $d(x)$ nonzero everywhere such that with $x=z+u$, the random variables
 	\begin{equation}\label{v}
 		v=c(x)+d(x)u
 	\end{equation} 
 	and $x$
 	are independent. Then $z$ is Gaussian.
 \end{theorem}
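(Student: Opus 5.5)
The plan is to convert independence of $v$ and $x$ into identities among the conditional cumulants of $u$ given $x$. These cumulants can be written through derivatives of $\ell(x)=\log p_X(x)$. I will then show that these identities force $\mbox{Cov}(u|x)$ to be constant, so that Lemma \ref{ltweedie} yields that $z$ is Gaussian.

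\textbf{Step 1: conditional cumulants of $u$.} Write $\phi$ for the $N(0,1)$ density. Since $u\sim N(0,1)$ and $z$ are independent, $p_{U|X}(u|x)=\phi(u)p_Z(x-u)/p_X(x)$. Completing the square gives, for real $t$,
\[
E[e^{tu}|x]=e^{t^2/2}\,p_X(x-t)/p_X(x),
\]
so the conditional cumulant generating function is $K_x(t)=t^2/2+\ell(x-t)-\ell(x)$. By Lemma \ref{lanalytic}, $p_X$ is real analytic and positive (Assumption \ref{ass:one}), so $\ell$ is real analytic. Reading off cumulants consistently with Lemma \ref{lespositop}, the conditional variance is $s(x)=1+\ell''(x)$ and the third cumulant is $\kappa_3(x)=-\ell'''(x)$. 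Moreover $s(x)>0$ for every $x$, because $u|x$ has an everywhere positive density and so is nondegenerate. Only the second and third cumulants are needed; the mean is not used.

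\textbf{Step 2: consequences of independence.} If $v$ and $x$ are independent, the law of $v$ given $x=x_0$ does not depend on $x_0$. Given $x$, $v=c(x)+d(x)u$ is an affine image of $u|x$. Hence its conditional variance $d(x)^2 s(x)$ equals a constant $\sigma^2$, and its third cumulant $d(x)^3\kappa_3(x)$ equals a constant $\kappa$. Here $\sigma^2>0$ because $d\neq 0$ and $s>0$. Since $d$ is continuous and never zero, it has constant sign $\epsilon=\pm1$, and so $d(x)=\epsilon\sigma s(x)^{-1/2}$. Substituting into the third-cumulant identity gives
\[
s'(x)\,s(x)^{-3/2}=k
\]
for a constant $k$, using $\kappa_3=-\ell'''=-s'$.

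\textbf{Step 3: solving the ODE.} Integrating gives $s(x)^{-1/2}=-kx/2+c_0$ for all $x\in\mR$. The left side is strictly positive everywhere. A nonconstant affine function of $x$ takes nonpositive values somewhere on $\mR$, so $k=0$ and $s$ is constant. Thus $\mbox{Cov}(u|x)$ is constant on all of $\mR$, and Lemma \ref{ltweedie} (through Lemma \ref{lespositop} and Cramér's theorem) gives that $z$ is Gaussian.

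The step needing most care is Step 1: justifying that the conditional moment generating function exists and can be differentiated term by term. I expect this to follow because the Gaussian factor dominates, together with the analyticity argument of Lemma \ref{lanalytic}, which gives finiteness of $p_X(x-t)$ and smoothness in $t$. The thrice-differentiability hypotheses on $p_Z$, $c$ and $d$ are then more than enough for the derivatives used.
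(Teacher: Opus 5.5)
Your argument is correct, and it takes a genuinely different route from the paper's.

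The paper works at the level of densities. It writes $p_{XV}=p_Xp_V$ in terms of $p_Z$ and $p_U$ (Lemma \ref{lbasic}) and compares two fixed values $x_1,x_2$. Differentiating the logarithm three times in $v$ gives the functional equation $w(s)=a^3w(as+b)$ for $w=(\ln p_Z)'''$. Lemma \ref{laffine} then yields the dichotomy ``$z$ Gaussian or $d$ constant'', and only in the constant-$d$ branch does the paper use the conditional variance $1+(\ln p_X)''$.

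You instead use the scalar case of the conditional CGF identity of Lemma \ref{lint}, which the paper deploys only in the vector section. You exploit the second and third conditional cumulants together. The variance identity fixes $|d|=\sigma s^{-1/2}$, and substituting this into the third-cumulant identity gives $s's^{-3/2}=k$. The only solutions of this that stay positive on all of $\mR$ are constants.

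Your route is shorter and needs no case split or functional-equation lemma. It also needs much less regularity. All smoothness comes from $p_X$, which is automatically real analytic, so you use only continuity and non-vanishing of $d$, nothing about $c$, and no smoothness of $p_Z$. In particular you avoid the paper's application of Lemma \ref{laffine} to $(\ln p_Z)'''$, whose continuity is not guaranteed by ``thrice differentiable''. In exchange, the paper's route keeps its main step at the density level, without moment or cumulant computations.

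Your device of solving for $d$ from the variance is specifically scalar. In the vector case, $D(x)\Sigma_U(x)D^{\top}(x)=I$ determines $D(x)$ only up to an orthogonal factor, so the device does not transfer directly. This is consistent with the paper needing Assumption \ref{aray} there.

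One bookkeeping point: independence gives equality of the conditional laws of $v$ only for almost every $x$. You should state that $d^2s=\sigma^2$ and $d^3\kappa_3=\kappa$ hold a.e.\ and then everywhere by continuity.
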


 We remark that in the above theorem hypothesis, there is no restriction on the sign of $d(x)$. The theorem would however continue to be true if there is such a sign restriction. If it has been proved for the case $d(x)>0$, in a situation where $d(x)<0$, the independence of $v$ and $x$ would immediately imply independence of $-v$ and $x$, and then the theorem would apply.

 With the aid of several lemmas, we will first show that either $z$ is Gaussian or  $d(x)$ is a constant. Then we will argue that a constant $d(x)$ also implies $z$ is Gaussian.
 We have the following lemma.

 \begin{lemma}\label{lbasic}
 	Suppose the conditions of Theorem \ref{tscalargauss} hold. Then with $g(x)=1/d(x)$, 
 \begin{align}
 	&p_Z\big(x-g(x)(v-c(x))\big)p_U\big(g(x)(v-c(x))\big)|g(x)\nonumber\\
 	&=p_X(x)p_V(v) \label{pxv}
 \end{align}
 \end{lemma}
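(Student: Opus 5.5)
The plan is a change of variables from $(z,u)$ to $(x,v)$, followed by factorization of the joint density using independence of $v$ and $x$.

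\emph{Joint density of $(Z,U)$.} By Assumption \ref{ass:one}, $z$ and $u$ are independent, so
\[
p_{ZU}(z,u)=p_Z(z)\,p_U(u).
\]

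\emph{Inverting the map.} Consider $T:(z,u)\mapsto(x,v)$ with $x=z+u$ and $v=c(x)+d(x)u$. Since $d(x)\neq 0$ everywhere, $g(x)=1/d(x)$ is well defined and thrice differentiable. The map is invertible:
\[
u=g(x)\bigl(v-c(x)\bigr),\qquad z=x-g(x)\bigl(v-c(x)\bigr).
\]
So $T$ is a $C^1$ bijection of $\mR^2$ onto its image, with $C^1$ inverse.

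\emph{Jacobian of the inverse.} Compute $\partial(z,u)/\partial(x,v)$ with entries
\[
\frac{\partial u}{\partial v}=g(x),\qquad \frac{\partial z}{\partial v}=-g(x),
\]
\[
\frac{\partial u}{\partial x}=g'(x)\bigl(v-c(x)\bigr)-g(x)c'(x),\qquad \frac{\partial z}{\partial x}=1-\frac{\partial u}{\partial x}.
\]
The determinant is
\[
\frac{\partial z}{\partial x}\frac{\partial u}{\partial v}-\frac{\partial z}{\partial v}\frac{\partial u}{\partial x}
= g(x)\Bigl(1-\frac{\partial u}{\partial x}\Bigr)+g(x)\frac{\partial u}{\partial x}=g(x).
\]
The $\partial u/\partial x$ terms cancel, which is the one point to verify carefully. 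Its absolute value is $|g(x)|$.

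\emph{Change of variables and factorization.} The standard change-of-variables formula gives
\[
p_{XV}(x,v)=p_Z\bigl(x-g(x)(v-c(x))\bigr)\,p_U\bigl(g(x)(v-c(x))\bigr)\,|g(x)|.
\]
This holds for $(x,v)$ in the image of $T$, and the density is zero elsewhere. Independence of $x$ and $v$, assumed in Theorem \ref{tscalargauss}, gives $p_{XV}(x,v)=p_X(x)p_V(v)$, which is \eqref{pxv}.

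\emph{Support.} The only subtlety is off the image of $T$. There is in fact none: for each $x$, as $u$ ranges over $\mR$, $v=c(x)+d(x)u$ ranges over all of $\mR$ because $d(x)\neq 0$. Hence $T$ is onto $\mR^2$ and \eqref{pxv} holds everywhere. This is consistent with $\Gamma=\mR$ in Theorem \ref{tJu}.
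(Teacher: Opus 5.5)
Your proposal is correct and follows essentially the same route as the paper: invert the map $(z,u)\mapsto(x,v)$, apply the change-of-variables formula, and factor the joint density using the independence of $x$ and $v$. The differences are cosmetic. You compute the Jacobian of the inverse map, getting $g(x)$, whereas the paper computes the forward Jacobian $d(z+u)$ and divides by it. Your explicit check that the map is onto $\mR^2$ is a useful detail the paper leaves implicit.
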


 \begin{proof}
 	
 	As $u$ is Gaussian, from Lemma \ref{lanalytic}, $p_Z(z)$ is real analytic. The mapping we are working with is:
 	\begin{align}
 		x&=z+u\\\nonumber
 		v&=c(x)+d(x)u=c(z+u)+d(z+u)u
 	\end{align}
 	for which the inverse mapping is
 	\begin{align}
 		u&=g(x)[v-c(x)]\\\nonumber
 		z&=x-g(x)[v-c(x)]=x+g(x)c(x)-g(x)v
 	\end{align}
 	Note,
 	\begin{align*}
\frac{\partial v}{\partial u}-\frac{\partial v}{\partial z}&=c'(z+u)+d'(z+u)u+d(z+u)\\
&-c'(z+u)-d'(z+u)u\\
&=d(z+u).
 	\end{align*}
 	Thus  
 	\begin{align*}
 		\det J(z,u)&=\begin{vmatrix} \frac{\partial x}{\partial z}&\frac{\partial x}{\partial u}\\
 			\frac{\partial v}{\partial z}&\frac{\partial v}{\partial u}
 		\end{vmatrix}=\begin{vmatrix}
 			1&1\\
 			\frac{\partial v}{\partial z}&\frac{\partial v}{\partial u}
 		\end{vmatrix}\\
 		&=\frac{\partial v}{\partial u}-\frac{\partial v}{\partial z}=d(z+u)
 	\end{align*}
 	Then it is standard that
 	\begin{equation}
 		p_{XV}(x,v)=p_{ZU}(z,u)/|J(z,u)|
 	\end{equation}
 	with 
 	\begin{align*}
 	&	p_{ZU}(z,u)=p_Z(z)p_U(u)\\
 		&=p_Z(x+g(x)c(x)-g(x)v)p_U\big(g(x)(v-c(x))\big)
 	\end{align*}
 	Hence
 	\begin{align*}
 		p_{XV}(x,v)&=p_Z\big(x+g(x)c(x)-g(x)v\big)\\
 		&\times p_U\big(g(x)(v-c(x))\big)|g(x)|
 	\end{align*}
 	Since $x$ and $v$ are independent by the lemma hypothesis, the lemma is proved. 
 \end{proof}

 We will now obtain a functional equation for the second derivative of the score function associated with $z$. Take two fixed but arbitrary distinct values of $x$, viz $x_1$ and $x_2$, but retain $v$ as a variable in \eqref{pxv}. Make the definitions
 \begin{equation}\label{simp}
 	g(x_i)=a_i,~\mbox{ and }x_i+g(x_i)c(x_i)=b_i.
 \end{equation}
 Taking the logarithm yields the two equations for $i=1,2$:
 \begin{align}
 &	\ln[p_Z(b_i-a_iv)]+\ln p_U\left(a_iv-a_ic(x_i)\right)+\ln |a_i|\notag\\
 &=\ln p_X(x_i)+\ln p_V(v)
 \end{align}
 Now subtract one equation from the other, and use the fact that $u$ is Gaussian. Certain quantities are of course constant while $\ln p_U\left(a_iv-a_ic(x_i)\right)$ is quadratic in $v$. There results
 \begin{equation*}
 	\ln p_Z(b_1-a_1v)-\ln p_Z(b_2-a_2v)=\pi(v)
 \end{equation*}
 where, crucially, $\pi(v)$ is  quadratic in $v$. Define
 \begin{equation}
 	w(\cdot ) =[\ln p_Z]^{(3)}(\cdot).
 \end{equation}
 Then,
 \begin{equation}\label{third}
 	a_1^3w(b_1-a_1v)=a_2^3w(b_2-a_2v), ~\forall ~v\in \mR.
 \end{equation}

 We now provide the following lemma, which effectively solves this functional equation.
 
 \begin{lemma}\label{laffine}
 	Suppose $\ell:\mR\rightarrow\mR$  is continuous and  there exist real $a\neq 0$ and $b$ such that for all $s\in \mR$
 	\[ 
 	\ell(s)=a^3 \ell(as+b).
 	\]
 	Then either $| a|=1$  or $\ell(s)\equiv 0$.
 \end{lemma}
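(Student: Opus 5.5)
The idea is to iterate the functional equation towards the fixed point of the affine map $T(s)=as+b$ and use continuity there. Throughout, assume $|a|\neq 1$; the goal is to show $\ell\equiv 0$.

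\emph{Case $|a|<1$.} Here $T$ has the unique fixed point $s^*=b/(1-a)$. Iterating $\ell(s)=a^3\ell(T(s))$ gives, for every $N\ge 1$,
\[
\ell(s)=a^{3N}\ell\big(T^N(s)\big).
\]
Since $|a|<1$, we have $T^N(s)\to s^*$. By continuity, $\ell(T^N(s))\to \ell(s^*)$, so it is bounded. Also $a^{3N}\to 0$, so letting $N\to\infty$ gives $\ell(s)=0$ for every $s$.

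\emph{Case $|a|>1$.} Rewrite the equation through the inverse map $T^{-1}(t)=(t-b)/a$, which is a contraction with factor $1/|a|<1$. Substituting $s=T^{-1}(t)$ gives $\ell(T^{-1}(t))=a^3\ell(t)$, that is,
\[
\ell(t)=a^{-3}\ell\big(T^{-1}(t)\big).
\]
This has the same form as the first case, with $a$ replaced by $a^{-1}$ and the affine map replaced by $T^{-1}$. Since $|a^{-1}|<1$, the first case applies and gives $\ell\equiv 0$.

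If $|a|=1$ the dichotomy already holds, so the proof is complete. There is no real obstacle. The only point needing care is that the first case uses only continuity of $\ell$ at the single point $s^*$, which guarantees boundedness along the convergent sequence $T^N(s)$.
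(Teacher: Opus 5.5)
Your proof is correct and follows essentially the same route as the paper. Like the paper, you iterate to $\ell(s)=a^{3N}\ell(T^N(s))$, use convergence of $T^N(s)$ (to the fixed point $b/(1-a)$, which the paper just calls $K$) together with continuity to force $\ell\equiv 0$ when $|a|<1$, and handle $|a|>1$ by rewriting the equation as $\ell(t)=a^{-3}\ell\big((t-b)/a\big)$ and reapplying the same argument.
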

 \begin{proof}
 	Let $s\in\mathbb R$ be arbitrary but fixed. For integer $n>0$ there holds:
 	\begin{equation}\label{recurse}
 		\ell(s)=a^{3n} \ell(a^ns+b\sum_{i=0}^{n-1}a^i)
 	\end{equation}
 	If $|a|< 1$ then 
 	for some $K$ 
 	\[
 	\lim_{n\rightarrow \infty}a^ns+b\sum_{i=0}^{n-1}a^i=K
 	\]
 	and 
 	using the continuity of $\ell$ we can conclude that the right side of \eqref{recurse} goes to zero as $n$ tends to infinity. Since $s$ is arbitrary, this implies $\ell(s)\equiv 0.$  If $| a|>1$ then  with $z=as+b$,
 	\[
 	\ell(z)=\frac{1}{a^3}\ell\left(\frac{z-b}{a}\right )
 	\]
 	and the same argument applies. Hence again, $\ell(s)\equiv 0.$
 \end{proof}
 
  We need one last lemma.
 \begin{lemma}\label{ldconst}
 	Under the conditions of Lemma \ref{lbasic}, either $d(x)$ is a constant or $z$ is Gaussian.
 \end{lemma}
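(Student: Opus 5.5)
We work directly from the functional equation \eqref{third}, obtained from Lemma \ref{lbasic} by fixing two distinct points $x_1,x_2$ and differentiating the log identity three times in $v$. With $w=[\ln p_Z]^{(3)}$, which is continuous because $p_Z$ is positive and thrice differentiable (indeed real analytic by Lemma \ref{lanalytic}), set $a_i=g(x_i)$ and $b_i=x_i+g(x_i)c(x_i)$. The idea is to reduce \eqref{third} to the form in Lemma \ref{laffine}. The substitution $s=b_1-a_1v$ gives $v=(b_1-s)/a_1$, hence
\[
b_2-a_2v=\frac{a_2}{a_1}s+b_2-\frac{a_2b_1}{a_1}.
\]
Then \eqref{third} becomes $w(s)=a^3w(as+b)$ with $a=a_2/a_1\neq 0$ and $b=b_2-a_2b_1/a_1$; here $a\ne0$ because $d$, and hence $g$, is finite and nonzero everywhere. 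Lemma \ref{laffine} therefore gives the dichotomy: either $|g(x_2)|=|g(x_1)|$, or $w\equiv 0$.

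The case split is as follows. Suppose some pair $x_1,x_2$ has $|g(x_1)|\neq|g(x_2)|$. Then $w\equiv0$, so $\ln p_Z$ is a polynomial of degree at most two. Integrability forces it to be a concave quadratic with a strictly negative leading coefficient, and a degree-one or constant log-density is impossible. Hence $z$ is Gaussian. Otherwise $|g|$, and so $|d|$, is constant on $\mR$. Since $d$ is continuous and nonzero everywhere it cannot change sign, so $d$ itself is constant. That is exactly the required alternative.

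The main obstacle is making sure the degenerate possibilities do not slip through. First, Lemma \ref{laffine} needs continuity of $w$ on all of $\mR$; this follows from the thrice-differentiability assumption only if $[\ln p_Z]'''$ is continuous. I would secure this by invoking the real analyticity of $p_Z$ from Lemma \ref{lanalytic}, exactly as in the proof of Lemma \ref{lbasic}, together with $p_Z>0$. Second, I will check that the quadratic $\pi(v)$ in the differenced equation genuinely vanishes after three derivatives, which holds because $p_U$ is Gaussian, and that the chain rule produces the factors $a_i^3$ with the signs shown in \eqref{third}. Third, the sign argument that passes from constant $|d|$ to constant $d$ uses the intermediate value theorem, and so rests on the continuity of $d$, which is assumed.
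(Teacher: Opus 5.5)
Your argument is the paper's argument. It uses the same substitution $s=b_1-a_1v$ to turn \eqref{third} into $w(s)=a^3w(as+b)$, the same appeal to Lemma~\ref{laffine}, and the same case split. The only reordering is where the sign of $d$ enters. The paper uses continuity and nonvanishing of $d$ first, to get $a=a_2/a_1>0$, so that $|a|=1$ means $a=1$. You apply Lemma~\ref{laffine} first and then pass from constant $|d|$ to constant $d$.

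The one step that does not work as stated is your justification that $w=(\ln p_Z)'''$ is continuous. Lemma~\ref{lanalytic} gives real analyticity of $p_X=p_Z*p_U$, not of $p_Z$. The sentence in the proof of Lemma~\ref{lbasic} that you are copying contains the same slip. A positive, merely thrice-differentiable $p_Z$ need not have a continuous third log-derivative, so Lemma~\ref{laffine} cannot be applied to $w$ on that basis.

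An easy repair is to differentiate the differenced log identity only twice.
\begin{itemize}
\item Set $L=(\ln p_Z)''$. It is differentiable, hence continuous, under the stated hypotheses.
\item Since $\pi''(v)=a_1^2-a_2^2$, you get $a_1^2L(b_1-a_1v)-a_2^2L(b_2-a_2v)=a_1^2-a_2^2$.
\item With $s=b_1-a_1v$ this becomes $M(s)=a^2M(as+b)$, where $M=L-1$.
\item The iteration in the proof of Lemma~\ref{laffine} works verbatim with exponent $2$. So if $|a|\neq 1$, then $M\equiv 0$.
\item $M\equiv 0$ makes $\ln p_Z$ a convex quadratic, which is impossible for a density. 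Hence $|a|=1$ for every pair $x_1,x_2$.
\end{itemize}
Your sign argument then gives that $d$ is constant. This proves the lemma, in fact with the slightly stronger conclusion that $d$ is always constant, and it needs no regularity beyond what is assumed.
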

 \begin{proof}
 	From Lemma \ref{lbasic} and the arguments after it \eqref{third} holds. Then by choosing  $s=-a_1v+b_1$,
 	\[ 
 	w(s)=a^3w(as+b), ~\forall s\in\mR
 	\]
 	for some  $b$ and $a=a_2/a_1.$  Note that because $d(x)$ is continuous and nonzero for all $x$, $a_2$ and $a_1$ have the same sign, and $a>0$.  If for any choice of $x_1$ and $x_2$, it holds that $a\neq 1$ then
 	from Lemma \ref{laffine} the third derivative of  $\ln p_Z(z)$ is a constant. Thus $\ln p_Z(z)$   is quadratic and $p_Z(z)$ is Gaussian. 
 	
 	Otherwise, $a=1$ implying $a_1=a_2$ or by \eqref{simp} $g(x_1)=g(x_2)$ or  $d(x_1)=d(x_2)$. This argument applies for all choices of $x_1=x_2$.   i.e.  $d(x)$ is a constant.
 \end{proof}
 We now provide the proof of Theorem \ref{tscalargauss}.
 
 \noindent
 {\bf Proof of Theorem \ref{tscalargauss}:}
 The independence of $x=z+u$ and $v=c(x)+d(x)u$  ensures that if $z$ is non-Gaussian, then $d(x)\equiv d.$  Then Lemma \ref{ltweedie}  yields   
 \[ 
 \mbox{Var}[u|x] =1+[\ln(p_X)]''(x)
 \]
 Thus
 \[ 
 \mbox{Var}[v|x]=d^2(1+[\ln(p_X)]''(x))
 \]
 
 Using again the independence of $v$ and $x$, it follows that Var$(V|X=x)$ and hence $[\ln(p_X)]''(x)$ 
 is a constant.   Thus from Lemma \ref{ltweedie}, $z$ is Gaussian.
 
 \subsection{The vector case with square $G_k$}\label{svec}
 The previous section we effectively proved that for the scalar case of \eqref{vecforward}, an input-affine reverse process exists iff the initial state is Gaussian. We now turn to the vector case with $G_k$ square as in \eqref{linearnoising} 
 with the following assumption.
 
 \begin{assumption}\label{akey}
 	The random vectors 
 	$Z\in \mR^n$ and $U\in \mR^n$ are independent, with $U\sim N(0,I)$ and $Z$ possessing a pdf. There holds $X=Z+U$. Moreover, in \eqref{vcd} for 
 	$C:\mR^n\rightarrow \mR^n$ and $D:\mR^n\rightarrow \mR^{n\times n}$, there exist $M_i$,  such that for all $x\in \mR^n$   $\|D(x)\|\leq M_2^{-1/2}$,
 	 $\|D^{-1}(x)\|\leq M_1$, Cov$(V)=I$; and $V$ and $X$ are independent. 
 \end{assumption}
 The assumptions on $D(\cdot)$ ensure that  $D(\cdot)$ is not just nonsingular but well-conditioned. As in Theorem \ref{taffine}  the reverse diffusion uses $D^{-1}(\cdot)$,  they thus ensure  the numerical stability of the input-affine reverse diffusion.
 
 The assumption of unity covariance of $V$ is without loss of generality as long as $\Sigma_V$, the covariance of $V$, is positive definite. This is so, as  $ V$, $C(x)$ and $D(x)$ can be replaced by $\Sigma_V^{-1/2}V$, $\Sigma_V^{-1/2}C(x)$ and $\Sigma_V^{-1/2}D(x)$ respectively, while maintaining  independence between $\Sigma_V^{-1/2}V$ and $X.$
 The first lemma we need is as follows. 
 
 \begin{lemma}\label{lpos}
 	Under Assumption \ref{akey}, $\Sigma_U(x):=\mbox{Cov}(U|X=x)$ is bounded and positive definite, obeying the equation
 	\begin{equation}\label{sqD}
 		\Sigma_U(x)=D^{-1}(x)D^{-\top}(x)
 	\end{equation}. 
 \end{lemma}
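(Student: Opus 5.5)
The plan is to compute $\mbox{Cov}(V|X=x)$ in two ways and compare. Since $V=C(X)+D(X)U$, conditioning on $X=x$ turns $C(x)$ and $D(x)$ into constants, so
\[
\mbox{Cov}(V|X=x)=D(x)\,\Sigma_U(x)\,D^{\top}(x).
\]
Because $V$ and $X$ are independent under Assumption \ref{akey}, the conditional law of $V$ given $X=x$ equals the unconditional law for almost every $x$. This is the content of Lemma \ref{lgvecind}, applied with $p_X(x)>0$ everywhere. Hence $\mbox{Cov}(V|X=x)=\mbox{Cov}(V)=I$. Setting the two expressions equal gives
\[
D(x)\Sigma_U(x)D^{\top}(x)=I,
\]
and since $D(x)$ is nonsingular this is equivalent to \eqref{sqD}. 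Positive definiteness is immediate, because $D^{-1}(x)D^{-\top}(x)$ is a Gram matrix of a nonsingular matrix. Boundedness follows from $\|D^{-1}(x)\|\le M_1$, which gives $\|\Sigma_U(x)\|\le M_1^2$.

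Two technical points need care. The first is that the conditional covariances must exist. $\mbox{Cov}(V)=I$ is assumed finite. For $U$, the conditional second moments are finite because $p_{U|X}(u|x)=p_Z(x-u)p_U(u)/p_X(x)\le \sup_u p_U(u)\,p_Z(x-u)/p_X(x)$. This bound alone does not give moments, however. It is cleaner to use Lemma \ref{lespositop}, or to note that $E[\|U\|^2|X=x]$ is finite whenever $p_X(x)>0$, since $p_X(x)E[\|U\|^2|X=x]=\int \|u\|^2p_U(u)p_Z(x-u)du$. This integral is finite because $\|u\|^2p_U(u)$ is bounded and $p_Z\in L^1$. $p_X(x)>0$ holds everywhere since it is a Gaussian convolution.

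The second point, which I expect to be the main obstacle, is passing from ``almost every $x$'' to ``every $x$''. Independence only identifies conditional distributions up to null sets. To handle this, I would use continuity. $\Sigma_U(x)$ is continuous, indeed real analytic, by Lemmas \ref{lanalytic} and \ref{lespositop}. The identity then holds on a dense set, and if $D(\cdot)$ is continuous (it is differentiable in the setting of Lemma \ref{lind}) it extends to all $x$. Alternatively, one can argue directly via Lemma \ref{lgvecind}, whose conclusion \eqref{gcondchar} holds for every $x$ with $p_X(x)\neq0$. Differentiating the conditional characteristic function twice at $\omega=0$ then yields the conditional covariance of $V$ pointwise, equal to $\mbox{Cov}(V)=I$. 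Differentiation under the expectation is justified by the finite conditional second moments shown above.
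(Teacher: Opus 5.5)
Your argument is correct and is essentially the paper's proof. The paper simply notes that $\mbox{Cov}(V|X=x)=D(x)\Sigma_U(x)D^{\top}(x)$ by \eqref{vcd}, and that this equals $\mbox{Cov}(V)=I$ by independence. Your extra care about finite conditional second moments and about upgrading the identity from almost every $x$ to every $x$ goes beyond the paper's one-line argument. The continuity of $D(\cdot)$ you need for that step comes from the continuous differentiability required of $h_{k+1}$ in \eqref{genv}, since $D(x)=\partial h(x,u)/\partial u$; it is not stated in Lemma \ref{lind} or Assumption \ref{akey}.
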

 \begin{proof}
 	From \eqref{vcd} 
 	$\mbox{Cov}(V|X)=D(x)\Sigma_U(x)D^{\top}(x)$. The result holds as   $\mbox{Cov}(V|X)=\mbox{Cov}\;V=I$. 
 \end{proof}

 We next provide an intermediate lemma. It introduces a conditional cumulant generating function, from which subsequently covariance information can be obtained. 
 
 \begin{lemma}\label{lint}
 	Adopt Assumption \ref{akey}. Then the conditional cumulant generating function (CCGF)
 	\begin{equation}\label{ccgf}
 		K_{U|X=x}(t,x)=\log E\left [\left . e^{t^{\top}U}\right | X=x\right ],
 	\end{equation} 
 	obeys for all $x\in \mR^n$ and $t\in \mR^n$, 
 	\[ 
 	K_{U|X=x}(t,x)=\log p_X(x-t)-\log p_X(x)+\frac12\|t\|^2,
 	\]
 	where $p_X(\cdot)$ is the pdf of $X$.
 \end{lemma}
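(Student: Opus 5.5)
The plan is a direct computation of the conditional moment generating function, using Bayes' rule and the Gaussian form of $p_U$. Since $Z$ and $U$ are independent and $X=Z+U$, the joint density of $(U,X)$ is $p_Z(x-u)p_U(u)$. Hence, for every $x$ with $p_X(x)>0$ (which by Lemma \ref{lanalytic} and the positivity established earlier is all $x$), the conditional density is
\[
p_{U|X}(u|x)=\frac{p_Z(x-u)p_U(u)}{p_X(x)}.
\]
Therefore
\[
E\left[\left. e^{t^{\top}U}\right|X=x\right]=\frac{1}{p_X(x)}\int_{\mR^n} e^{t^{\top}u}p_U(u)\,p_Z(x-u)\,du.
\]

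The key step is completing the square in the Gaussian factor. With $p_U(u)=(2\pi)^{-n/2}e^{-\|u\|^2/2}$, we have
\[
e^{t^{\top}u}p_U(u)=e^{\frac12\|t\|^2}p_U(u-t).
\]
Substituting this and changing variables $w=u-t$ turns the integral into
\[
e^{\frac12\|t\|^2}\int_{\mR^n} p_U(w)\,p_Z\bigl((x-t)-w\bigr)\,dw=e^{\frac12\|t\|^2}p_X(x-t),
\]
because $p_X=p_Z*p_U$. Taking logarithms then gives exactly
\[
K_{U|X=x}(t,x)=\log p_X(x-t)-\log p_X(x)+\tfrac12\|t\|^2.
\]

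The only points needing care are technical. First, the integral must be finite for every real $t$. This holds because the integrand is nonnegative and the shifted-Gaussian identity exhibits it as a convolution value, so Tonelli's theorem justifies the manipulation. The finiteness of $p_X(x-t)$ follows from Lemma \ref{lanalytic}, where $p_X$ is shown to be an everywhere-finite real-analytic function. Second, the logarithms must be well defined. $\log p_X(x)$ requires $p_X(x)>0$, and this is automatic since $p_U>0$ and $p_Z$ integrates to one. Once these are confirmed, the identity holds for all $x,t\in\mR^n$. Assumption \ref{akey}'s conditions on $C$, $D$ and $V$ are not needed here; only the structure $X=Z+U$ with Gaussian $U$ independent of $Z$ is used.
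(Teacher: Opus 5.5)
Your proposal is correct and follows essentially the paper's route: Bayes' rule for $p_{U|X}$, completing the square in the Gaussian exponent, and recognizing the resulting integral as the convolution $p_Z*p_U$ evaluated at $x-t$. You substitute in $u$ where the paper substitutes $z=x-u$, which is cosmetic, and your remarks on positivity and finiteness of $p_X$ are correct and make explicit what the paper leaves implicit.
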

 \begin{proof}
 	Note that with $X=Z+U$ where $Z$ and $U$  are independent, there holds
 	$p_{X,U}(x,u)=p_{Z}(x-u)p_U(u)$,
 	implying that 
 	\[
 	p_{U|X}(u|x)=\frac{p_Z(x-u)p_U(u)}{p_X(x)}
 	\]
 	From the Gaussian nature of $U=X-Z$  
 	\begin{align*}
 	&E\left [e^{t^{\top }U}|X=x\right ]=\int_{\mR^n}e^{t^{\top}u}\frac{p_Z(x-u)p_U(u)}{p_X(x)}du\\
 		&=\frac{1}{p_X(x)\sqrt{(2\pi)^n}}\int_{\mR^n}e^{t^{\top}(x-z)}p_Z(z)e^{-\frac12 \|x-z\|^2 }dz\\
 		&=\frac{e^{\frac12 \|t\|^2}}{p_X(x)\sqrt{(2\pi)^n}} \int_{\mR^n}e^{-\frac12 \left (\|t\|^2-2t^{\top}(x-z)+ \|x-z\|^2\right ) }p_Z(z)dz\\
 		&=\frac{e^{\frac12 \|t\|^2}}{p_X(x)\sqrt{(2\pi)^n}} \int_{\mR^n}e^{-\frac12 \|x-z-t\|^2 }p_Z(z)dz\\
 		&=\frac{e^{\frac12 \|t\|^2}p_X(x-t)}{p_X(x)}.
 	\end{align*}
 	The result follows by taking logarithms on both sides.
 \end{proof}

 We now establish a key recursion.
 \begin{lemma}\label{lkey}
 	Adopt Assumption \ref{akey}. Define the  cumulant generating function (CGF) of $V$:
 	\begin{equation}\label{cgf}
 		K_V(s)=\log E\left [e^{s^{\top}V}\right ].
 	\end{equation}
 	Then with $\Sigma_U(x)$ as defined in \eqref{sqD}
 	and for all $x\in \mR^n$ and $t\in \mR^n$, there holds
 	\begin{equation}\label{recurs}
 		\Sigma_U(x-t)=D^{-1}(x)\left(\nabla^2_t K_V(D^{-\top}(x)t) \right) D^{-\top}(x).
 	\end{equation}
 \end{lemma}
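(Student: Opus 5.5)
The idea is to compute the conditional cumulant generating function of $V$ given $X=x$ in two ways, and then differentiate twice in $t$. Here $\nabla^2_t K_V(D^{-\top}(x)t)$ is read as the Hessian of $K_V$ evaluated at the point $D^{-\top}(x)t$.

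\textbf{Preliminaries.} First note that $p_X(x)=\int p_Z(z)p_U(x-z)\,dz>0$ for every $x$, because the Gaussian $p_U$ is strictly positive. By Lemma \ref{lanalytic}, $p_X$ is real analytic. Hence Lemma \ref{lint} applies: $K_{U|X=x}(t,x)$ is finite and smooth in $t$ for all $x,t$. Lemma \ref{lespositop} also applies and gives $\Sigma_U(y)=I+\nabla^2\log p_X(y)$ for every $y$.

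\textbf{Step 1: conditional CGF of $V$.} Because $V$ and $X$ are independent, the conditional law of $V$ given $X=x$ is the law of $V$. Therefore
\[
K_V(s)=\log E\left[\left. e^{s^{\top}V}\right|X=x\right]\quad\mbox{for every } x.
\]
Substitute $V=C(x)+D(x)U$ from \eqref{vcd}. On the event $X=x$, $C(x)$ and $D(x)$ are constants, so
\[
K_V(s)=s^{\top}C(x)+K_{U|X=x}(D^{\top}(x)s,x).
\]
The right side is finite for every $s$ by Lemma \ref{lint}, so $K_V$ is finite and smooth on all of $\mR^n$.

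\textbf{Step 2: change of variables.} Since $D(x)$ is invertible, put $s=D^{-\top}(x)t$. This gives, for all $t$,
\[
K_V(D^{-\top}(x)t)=t^{\top}D^{-1}(x)C(x)+\log p_X(x-t)-\log p_X(x)+\tfrac12\|t\|^2,
\]
where the last three terms come from Lemma \ref{lint}.

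\textbf{Step 3: differentiate twice in $t$.} Hold $x$ fixed. The linear term and $\log p_X(x)$ disappear. The right side has Hessian
\[
\nabla^2\log p_X(x-t)+I=\Sigma_U(x-t),
\]
using Lemma \ref{lespositop} at the point $x-t$. On the left, the chain rule for the linear map $t\mapsto D^{-\top}(x)t$ gives
\[
D^{-1}(x)\,\big[\nabla^2 K_V\big](D^{-\top}(x)t)\,D^{-\top}(x).
\]
Equating the two sides is exactly \eqref{recurs}.

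\textbf{Main obstacle.} The delicate point is justifying that $K_V$ is finite and twice differentiable everywhere, since $V$ is not assumed to have any moment generating function a priori. The independence identity in Step 1 settles this: $K_V$ equals an expression that Lemma \ref{lint} shows is finite and smooth, because $p_X$ is positive and analytic. This identity is the one place where Assumption \ref{akey}'s independence of $V$ and $X$ is essential, so I would state it carefully.
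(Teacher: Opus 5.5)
Your proof is correct and is essentially the paper's argument. The paper likewise uses independence and $U=D^{-1}(x)(V-C(x))$ to write $K_{U|X=x}(t,x)=-t^{\top}D^{-1}(x)C(x)+K_V(D^{-\top}(x)t)$, equates this with Lemma~\ref{lint}, and takes Hessians in $t$ using Lemma~\ref{lespositop}; your Steps 1--2 are the same identity read in the other direction. Your added observation that this identity forces $K_V$ to be finite and smooth on all of $\mR^n$ is a worthwhile piece of rigor that the paper leaves implicit.
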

 \begin{proof}
 	
 	Using the independence of $V$ and $X$ we obtain a different expression for the conditional cumulative generating function introduced above: 
 	\begin{align*}
 		K_{U|X=x}(t,x)&=\log E\left [e^{t^{\top}D^{-1}(x)(-C(x)+V)}\right ]\\
 		&=-t^{\top}D^{-1}(x)C(x)+K_V(D^{-\top}(x)t).
 	\end{align*}
 	Now differentiate this equality twice with respect to $t$ (i.e. form the Hessian of each side), and then use Lemma \ref{lint} and Lemma \ref{lespositop}. We indeed obtain
 	\begin{align*}
 	&	D^{-1}(x)\left(\nabla^2_t K_V(D^{-\top}(x)t) \right) D^{-\top}(x)\\
 	&=\nabla^2_t K_{U|X=x}(x,t)\\
 		&= \nabla^2_t \log p_X(x-t)+I\\
 		&=\nabla^2_x \log p_X(x-t)+I\\
 		&=\Sigma_U(x-t).
 	\end{align*}
 \end{proof}
 
 We now make a second assumption that involves a condition on the behavior of the conditional covariance matrix $\Sigma_U(x)$, roughly speaking requiring there to be a particular direction (defined by a vector $\theta$ in the assumption) such that $\Sigma_U(x)$ evaluated for values of $x$ along a ray (defined by varying $a$ in the assumption) in this direction but commencing from any point ($b$ in the assumption) in a nontrivial fixed ball is asymptotically constant. The assumption is motivated by showing in the Appendix that it is satisfied when the pdf of $Z$ is a finite Gaussian mixture. 
 
 \begin{assumption}\label{aray}
 	There is an $M>0$,  a nonzero $\theta\in \mR^n $, and a $\Sigma_U^*(\theta)\in \mR^{n\times n}$ so that the following holds. For  every $\epsilon>0$, there is an $\eta(\epsilon,M)$, such that for all  $a>\eta(\epsilon,M)$ and all $b\in \mR^n$, with $\|b\|\leq M$,
 	\begin{equation}\label{rayconv}
 		\|\Sigma_U^*(\theta)-	\Sigma_U(a\theta +b)\|\leq \epsilon.
 	\end{equation}	
 \end{assumption}
 Observe $\eta$ depends on $M$ rather than on the precise affine shift $b$ and the limit point is independent of $b$.
 We now establish a key consequence of \eqref{rayconv}.
 
 \begin{lemma}\label{lKvlim}
 	Suppose Assumption \ref{aray} holds and the conditions of Lemma \ref{lkey} apply. Then  there exists $R>
 	0$ such that for all $s\in \mR^n$ in the ball $\|s\|\leq R$
 	\[ 
 	\nabla_s^2K_V(s)=I.
 	\]
 \end{lemma}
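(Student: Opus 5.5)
The plan is to invert the recursion \eqref{recurs} of Lemma \ref{lkey}, so that $\nabla_s^2K_V(s)$ is written as a conjugation of $\Sigma_U$ evaluated at a shifted point. We then push the base point out to infinity along the ray of Assumption \ref{aray}, where $\Sigma_U$ becomes asymptotically constant. Writing $H(s):=\nabla_s^2K_V(s)$, we substitute $s=D^{-\top}(x)t$, i.e. $t=D^{\top}(x)s$, in \eqref{recurs}. Since $D(x)$ is nonsingular everywhere, this gives, for every $x\in\mR^n$ and $s\in\mR^n$,
\[
H(s)=D(x)\,\Sigma_U\big(x-D^{\top}(x)s\big)\,D^{\top}(x).
\]
At $s=0$ this reads $H(0)=D(x)\Sigma_U(x)D^{\top}(x)=I$, which is consistent with \eqref{sqD} and with Cov$(V)=I$. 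This is the identity against which $H(s)$ will be compared.

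Regarding well-definedness, we note that $K_V$ is finite and twice differentiable on all of $\mR^n$. The proof of Lemma \ref{lkey} gives $K_V(D^{-\top}(x)t)=K_{U|X=x}(t,x)+t^{\top}D^{-1}(x)C(x)$. By Lemma \ref{lint}, $K_{U|X=x}$ equals $\log p_X(x-t)-\log p_X(x)+\frac12\|t\|^2$, which is smooth in $t$ by Lemma \ref{lanalytic} and the positivity of $p_X$. So the Hessian manipulations are legitimate.

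For the main step, let $M$, $\theta$ and $\Sigma_U^*(\theta)$ be as in Assumption \ref{aray}, and set $R=M\,M_2^{1/2}$. Fix $s$ with $\|s\|\le R$ and choose $x=a\theta$ with $a>0$ large. Put $b=-D^{\top}(a\theta)s$. Then $\|b\|\le\|D(a\theta)\|\,\|s\|\le M_2^{-1/2}R=M$, so $b$ lies in the admissible ball uniformly in $a$; the point $b=0$ lies there as well. For any $\epsilon>0$ and every $a>\eta(\epsilon,M)$, Assumption \ref{aray} applied at $b$ and at $0$ yields $\|\Sigma_U(a\theta+b)-\Sigma_U(a\theta)\|\le 2\epsilon$. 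Subtracting the two representations of $H(s)$ and $H(0)=I$ at the same base point $x=a\theta$ then gives
\[
\|H(s)-I\|\le\|D(a\theta)\|^2\,\|\Sigma_U(a\theta+b)-\Sigma_U(a\theta)\|\le 2\epsilon M_2^{-1}.
\]
The left side does not depend on $a$ or $\epsilon$, so letting $\epsilon\to0$ gives $H(s)=I$ for all $\|s\|\le R$.

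The main obstacle is making the shift $b$ range over a fixed ball independent of $a$. This is exactly where the uniform bound $\|D(x)\|\le M_2^{-1/2}$ of Assumption \ref{akey} and the uniformity of $\eta$ in $b$ in Assumption \ref{aray} are used, and why $R$ must be tied to $M$ and $M_2$. The bound on $D^{-1}$ is needed only to guarantee invertibility for the change of variables.
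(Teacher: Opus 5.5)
Your proof is correct and takes essentially the same route as the paper. Both fix the base point $x=a\theta$ in \eqref{recurs}, compare $\Sigma_U(a\theta+b)$ with $\Sigma_U(a\theta)=D^{-1}(a\theta)D^{-\top}(a\theta)$ using Assumption \ref{aray} at $b$ and at $0$, and use $\|D\|\le M_2^{-1/2}$ both to keep $b=-D^{\top}(a\theta)s$ in the ball of radius $M$ and to control the conjugation. Your version conjugates by $D(a\theta)$ directly instead of going through the paper's \eqref{lowerbound}, gives the explicit radius $R=MM_2^{1/2}$, and justifies the finiteness of $K_V$, so it is if anything slightly cleaner.
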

 \begin{proof}
 	First observe that for any  $\Delta\in \mR^{n\times n}$, 
 	because of the upper bound on $\|D(x)\|$ in Assumption \ref{akey} one has
 	\begin{align*}
 		\| \Delta\|&= \| D(x)[D^{-1}(x)\Delta D(x)^{-\top}]D(x)^{\top}\|\\
 		&\leq \|D(x)\| \|D^{-1}(x)\Delta D(x)^{-\top}\| \|D(x)^{\top}\|\\&\leq M_2^{-1}\|D^{-1}(x)\Delta D(x)^{-\top}\|.
 	\end{align*}
 	Thus, 
 	\begin{equation}\label{lowerbound}
 		\|D^{-1}(x)\Delta D(x)^{-\top}\|\geq M_2\|\Delta\|.
 	\end{equation}
 	Let $M>0$ and $\theta\in\mR$ be as in Assumption \ref{aray}. 	For  every $\epsilon>0$, there is  $\eta(\epsilon,M)$, such that for all  $a\geq \eta(\epsilon,M)$ and $b\in\mathbb R^n$ with $\|b\|\leq M$,
 	\begin{align*}
 		\epsilon&>\|\Sigma_U^*(\theta)-	\Sigma_U(a\theta +b)\|.
 	\end{align*} 
 	In addition, using the choice $b=0$, there holds 
 	\[
 	\epsilon>\| \Sigma_U^*(\theta)-\Sigma_U(a\theta)\|
 	\]
 	Consequently, with the given $M>0$ and $\theta\in\mathbb R^n$, there holds for all $a\geq \eta(\epsilon, M)$ and $b$ with $\|b\|\leq M$, there holds the inequality
 	\[
 		2\epsilon	>\|\Sigma_U(a\theta)-	\Sigma_U(a\theta +b)\|
    \]
    Now use from Lemma \ref{lpos} the equation \eqref{sqD} and from Lemma \ref{lkey} the equation \eqref{recurs} to conclude
    \begin{align*}
 		2\epsilon&=\|D^{-1}(a\theta)\left [I-\nabla^2_bK_V(-D^{-\top}(a\theta)b)\right ]D^{-\top}(a\theta)\|\\
 		&\geq M_2\|I-\nabla^2_bK_V(-D^{-\top}(a\theta)b)\|,
 	\end{align*}
 	where we have used \eqref{lowerbound} for the inequality.
 	 Also from  Assumption \ref{akey}, it is clear that for some suitably small ball $\|s\|\leq R$, the set of associated $b=D^{\top}(a\theta)s$ will lie in $\|b\|\leq M$. Thus, for every $\epsilon>0$ and all $s$ with $\|s\| \leq R$, 
 	\[ 
 	\|\nabla^2_s(K_V(s))-I\|\leq O(\epsilon).
 	\]
 	The lemma holds as  $\epsilon$ can be arbitrarily small.
 \end{proof}
 We  now give the main result of this subsection.
 \begin{theorem}\label{tconj}
 	Under Assumptions \ref{akey} and \ref{aray}, $Z$ is Gaussian.
 \end{theorem}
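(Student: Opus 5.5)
The plan is to combine Lemma \ref{lKvlim} with the recursion \eqref{recurs} of Lemma \ref{lkey} to show that $\Sigma_U(\cdot)$ is constant on a nonempty open set. Lemma \ref{ltweedie} then forces $Z$ to be Gaussian.

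\textbf{Step 1.} By Lemma \ref{lKvlim} there is an $R>0$ with $\nabla_s^2 K_V(s)=I$ for all $\|s\|\leq R$. The positivity of $p_X$ needed to invoke Lemmas \ref{lespositop} and \ref{ltweedie} follows from $p_X$ being a Gaussian convolution.

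\textbf{Step 2.} Fix any $x_0\in\mR^n$, say $x_0=0$. Apply \eqref{recurs} at $x=x_0$. The Hessian there is taken with respect to $t$ of $K_V(D^{-\top}(x_0)t)$, so by the chain rule it equals $D^{-1}(x_0)\,(\nabla^2_sK_V)(D^{-\top}(x_0)t)\,D^{-\top}(x_0)$. This is the reading that makes the proof of Lemma \ref{lKvlim} consistent, and I will make it explicit.

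For all $t$ with $\|D^{-\top}(x_0)t\|\leq R$ the argument lies in the ball of Step 1. The bound $\|D^{-1}\|\le M_1$ gives this for every $\|t\|\le R/M_1$, a nonempty open ball. On that ball
\[
\Sigma_U(x_0-t)=D^{-1}(x_0)D^{-\top}(x_0),
\]
which does not depend on $t$ (and agrees with \eqref{sqD} at $t=0$). Hence $\Sigma_U(x)=\mbox{Cov}(U|X=x)$ is constant on the open ball centered at $x_0$ of radius $R/M_1$.

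\textbf{Step 3.} Lemma \ref{ltweedie} applies, since real analyticity of $\Sigma_U(\cdot)$ follows from Lemmas \ref{lanalytic} and \ref{lespositop}. It makes $\Sigma_U$ constant everywhere and $\log p_X$ quadratic, so $X$ is Gaussian, and by Cramér's theorem $Z$ is Gaussian.

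The main obstacle is Step 2's bookkeeping: the Hessian in \eqref{recurs} must be interpreted correctly via the chain rule. Care is also needed that $K_V$ is finite and twice differentiable near the origin. This holds because, by Lemma \ref{lint}, $K_{U|X=x}$ is finite for all $t$, and $K_V(D^{-\top}(x)t)$ differs from it only by a linear term. Everything else is immediate from the cited lemmas.
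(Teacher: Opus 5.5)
Your proposal is correct and follows the paper's proof: Lemma~\ref{lKvlim} gives $\nabla_s^2K_V=I$ near the origin. Substituting this into \eqref{recurs} at a fixed $x$ shows that $\Sigma_U$ is constant, equal to $D^{-1}(x)D^{-\top}(x)$, on a ball around $x$, and Lemma~\ref{ltweedie} finishes the argument. Your reading of the middle factor in \eqref{recurs} as $\nabla_s^2K_V$ evaluated at $D^{-\top}(x)t$, and your ball condition $\|D^{-\top}(x)t\|\le R$ (secured by $\|t\|\le R/M_1$), are slightly more careful than the paper's stated condition $\|D^{-1}(x)t\|\le R$.
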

 \begin{proof}
 	From Lemma \ref{lKvlim}, \eqref{recurs} and \eqref{sqD}, for some $R>0$, some $x\in\mR^n$, and all $t\in \mR^n$ having the property that $\|D^{-1}(x)t\|\leq R$,
 	\begin{align*}
 		\Sigma_U(x-t)&=D^{-1}(x)\left(\nabla^2_t K_V(D^{-\top}(x)t) \right) D^{-\top}(x)\\
 		&=D^{-1}(x) D^{-\top}(x)=\Sigma_U(x).
 	\end{align*}
 	Thus, Cov$(U|X=x)$ is constant in a nonempty open set containing this $x\in \mR^n$.  Lemma \ref{ltweedie} yields the result.
 \end{proof}
 
 Thus the key conclusion is that under Assumption \ref{aray} a forward linear diffusion \eqref{vecforward} diffusion with square $G_k$ will only have an input-affine reverse model  when there is a Gaussian initial state.  We prove in the appendix that  Assumption \ref{aray} is indeed satisfied when $Z$ is a Gaussian Mixture. If an initial state in \eqref{vecforward} is a Gaussian Mixture so are all subsequent states. Thus for such states and square $G_k$ an input affine reverse model does not exist.
 

\section{A Gaussian Mixture Example}\label{sex}
An important class of state pdfs are Finite Gaussian Mixture Models (GMM), described by
\begin{equation}\label{GMM}
	\sum_{i=1}^N \alpha_i\, N(\mu_i,S_i),
	\qquad
	\alpha_i>0,\ \sum_{i=1}^N \alpha_i=1.
\end{equation} 
In \eqref{vecforward} for example, if $u_k\sim N(0,I)$ and the initial state is a GMM then so are all subsequent states. Specifically, if $x_k$ has the pdf in \eqref{GMM}, then 
\[ 
X_{k+1}\sim \sum_{i=1}^N \alpha_i\, N\left (F_k\mu_i,F_kS_iF_k^{\top}+G_kG_k^{\top}\right ).
 \]
 
 Such pdf modeling is important for  a nonlinear forward process as well. This is so as under reasonable assumptions most  pdfs can be approximated by GMMs.
The work of Parzen \cite{parzen1962estimation} pioneered the approximation of pdfs using convex combinations of pdfs of standard forms, including Gaussian pdfs. In much more detail, 
\cite{li1999mixture} for multivariate densities argued that approximation using a Gaussian density mixture was best measured using a Kullback-Leibler distance measure, and argued that the approximation error goes to zero at a rate $1/k$ where $k$ is the number of Gaussian densities making up the mixture. Recently, \cite{van2024mixtures} provided algorithms for finding the weights, means and variances of such a mixture minimizing the Kullback-Leibler distance to the prescribed multivariate density. (Of course  certain reasonable side conditions on the prescribed density must be satisfied).

Section \ref{sneg} shows that when $x_{k+1}$ is modeled as a GMM the reverse process cannot be input affine even for  \eqref{vecforward} at least when $G_k$ is square. As a consequence,   formulating the reverse model should involve conditional cdfs using Theorem \ref{tspec} or  Theorem \ref{tgeneralfamily}.

Such  conditional cdfs are easily  found convex combinations of Q-functions (complementary cdfs of Gaussians) with state dependent weights. To avoid   distractions from the main ideas, we illustrate with a  scalar example: $x_{k+1}=ax_k+u_k$
with $u_k\sim N(0,1)$ and
\begin{equation}\label{gm}
	X_k\sim 0.5\left (N\left (0,\sigma^2/a^2\right )+N \left (\mu/a,\sigma^2/a^2\right )\right)
\end{equation}
with $\mu\neq 0$. Call $x=x_{k+1}$ and $u=u_k$. 
Define:
\begin{equation} \label{alpha}
	0\leq\alpha(x)=\dfrac{e^{-\frac{x^2}{2(\sigma^2+1)}}}{e^{-\frac{x^2}{2(\sigma^2+1)}}+e^{-\frac{(x-\mu)^2}{2(\sigma^2+1)}}}\leq 1.
\end{equation}
Then  with $s=\sigma/\sqrt{\sigma^2+1}$,  $p_{U|X}(u|x)$ equals
\begin{equation}\label{pux}
	\dfrac{1}{\sqrt{2\pi}~s}\left (\gamma(x)e^{-\frac{\left (u-\frac{x}{\sigma^2+1}\right )^2}{2s^2}}+(1-\gamma(x))e^{-\frac{\left (u-\frac{x-\mu}{\sigma^2+1}\right )^2}{2s^2}}\right )
\end{equation}
Theorem \ref{tspec}  shows that
\[ 
v_{k+1}=P_{U|X}(u_k|x_{k+1})
\]
allows definition of a  reverse diffusion. We can define this function more explicitly. 
Recall the \textit{Q-function}, \cite{papoulis1965probability}
\[ 
Q(u)=1-\dfrac{1}{\sqrt{2\pi} }\int_{-\infty}^ue^{-\frac{z^2}{2}}dz \mbox{ and }
Q(-u)=1-Q(u).
\]
Then using \eqref{pux} and the definition of $s$ above it $v_{k+1}=P_{U|X}(u_k|x_{k+1})$ is given by
\begin{align*}
	v_{k+1}&=\gamma(x_{k+1})Q\left (\frac{x_{k+1}-(\sigma^2+1)u_k}{\sigma\sqrt{\sigma^2+1}}\right )\\
	&+(1-\gamma(x_{k+1}))Q\left (\frac{x_{k+1}-(\sigma^2+1)u_k-\mu}{\sigma\sqrt{\sigma^2+1}}\right )
\end{align*}
and regarding this as an implicit equation defining $u_k=\psi_{k+1}(x_{k+1},v_{k+1})$ in terms of $v_{k+1}$ and $x_{k+1}$, we conclude for the reverse model
\begin{equation*}
	x_k=a^{-1}x_{k+1}-a^{-1}\psi_{k+1}(x_{k+1},v_{k+1})
\end{equation*}
The expression for $\psi_{k+1}$ must be numerically determined from the values of $x_{k+1}$ and $v_{k+1}$.

Consider $\mu=0,$  i.e., when $p_{X_k}(x)$ collapses to a Gaussian. In this case $v_{k+1}=Q((x_{k+1}-(\sigma^2+1)u_k)/(\sigma\sqrt{\sigma^2+1}))$, which  is very different from  the solution given in Section \ref{sneg}, underscoring the fact that reverse diffusions are not unique.

\section{Conclusion}\label{sconc}
Generative AI is hampered by the absence of a theory for directly reversing a stochastic difference equation. We show that a  nonlinear forward diffusion, with possibly non-Gaussian nonzero mean initial density, will have a reverse diffusion if in (\ref{genv}) every $x_{k+1}$, $v_{k+1}$ pair uniquely determines $u_k$, and $v_k$ and $x_k$ are independent, and that the reverse diffusion of an LTV forward process will be input-affine only if \eqref{genv} is affine in $v_{k+1}$.  

We focus on the technical challenge of meeting the independence requirement. For that we provide a necessary and sufficient condition and two means of meeting it using  cdfs of entries of $u_k$ conditioned on $x_{k+1}$ and some other entries of $u_k$. We show that scalar linear forward processes cannot have an input-affine reverse diffusion unless the initial state is Gaussian. When the input matrix $G_k$ in \eqref{vecforward} is square, we also show that an input-affine reverse process does not exist for a wide class of state densities including finite Gaussian mixtures.

We are pursuing several areas in current and future work that have been opened up. First, we conjecture that an LTV forward process has an input-affine reverse diffusion if and only if the initial state is Gaussian. This paper goes a long way toward proving that fact, but gaps remain. It would also be important to consider input-affine nonlinear forward processes, viz. the discrete time counterparts of  SDEs like \eqref{eq:forward}, considered in \cite{anderson1982reverse}. When the state sequence is Gaussian in \eqref{vecforward}, an analytically tractable reverse process is available. Otherwise, the ones we have found require numerical solutions, and lack a closed form. Whether, there are non-Gaussian initial densities for which such a simple forward diffusion as  \eqref{linearnoising}, gives rise to a closed form solution for the reverse diffusion is worth exploring though not critical. We also wish to consider ideas which are known to be valid in the continuous-time case \cite{anderson1982reverse}. When is a reverse-time model of a reverse-time model the original forward model? If a forward model is stationary on an infinite-time interval, does the reverse-time model have the same property? And if a forward model forgets initial probabilities, does a reverse-time model exhibit similar forgetting in the reverse direction. Another issue  is the connection between the continuous-time and discrete-time constructions when a continuous-time model is time-discretized with a very short discretization interval. 

Perhaps the most important future direction to be explored is to make this theory directly applicable to Generative AI, which lacks knowledge of the cdfs  involving $u_k$ conditioned on $x_{k+1}$. Rather,  Generative AI algorithms use the score, $\nabla_x \log p_{X_k}(x)$ of the state sequence, which is  obtained using neural networks, \cite{NEURIPS2020_DDPM} from empirical data. One approach could be to focus on \eqref{linearnoising} and formulate  neural networks that find  score functions of entries of $u_k$ conditioned on $x_{k+1}$ and other of its entries. Such a network can perhaps be used to directly find $\psi_{k+1}(\cdot,\cdot)$ in \eqref{psi} using Theorem \ref{tspec}. When the state sequence is Gaussian, papers like \cite{ZAMAN2026109380} effect approximate reversal in one step instead of a thousand steps. Given that we have a theory for exact reversal, it is perhaps possible to obtain such a one step reversal through our approach even in the non-Gaussian case. In all this we should exploit the fact that \eqref{linearnoising}, widely used in the Generative AI literature, is really a set of scalar equations carrying obvious simplifications even though the entries of the initial state are dependent.

\ifCLASSOPTIONcaptionsoff
  \newpage
\fi



%


\bibliographystyle{IEEEtran}
\bibliography{discretereversetime}

\appendix
We now show that $Z$ is a finite Gaussian Mixture then Assumption \ref{aray} holds.
We begin with  Lemma \ref{lSdiff} that will help establish the asymptotic dominance of a solitary term in $\Sigma(x)$, along a set of affine rays. The pairs $(\mu_k,Q_k)$ used in the Lemma are actually the mean and inverse covariance of distinct Gaussian densities arising later in this section. 

\begin{lemma}\label{lSdiff}
	Consider for $k\in \{1,\cdots, N\}$,  a set of distinct pairs $(\mu_k,Q_k)$ where $\mu_k\in \mR^n$ and  $Q_k\in \mR^{n\times n}$ is positive definite symmetric.
	Then  there exist a set $\Omega\subset \{1,\cdots, N\}$, an index $i\in \Omega$ and a vector    $y\in \mR^n$ such that, first, for all $j\in \Omega$, $Q_j=Q$; second,  
	\begin{equation}\label{nonstrict}
		y^{\top}Qy < y^{\top}Q_ky ,~\forall ~  k\notin \Omega
	\end{equation}
	and third,  either $\Omega=\{i\}$ or
	\begin{equation}\label{mustrict}
		y^{\top}Q\mu_i> y^{\top}Q\mu_j, ~\forall j\in \Omega\setminus \{i\}.
	\end{equation}
\end{lemma}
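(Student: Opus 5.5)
The plan is to choose $y$ generically, so that all the relevant quadratic and linear forms take distinct values, and then read off $\Omega$ and $i$ as the minimizers of two successive criteria.

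\textbf{Step 1 (choice of $y$).} Let $\mathcal{Q}=\{Q_1,\dots,Q_N\}$ be the set of distinct matrices occurring among the pairs. For two distinct symmetric matrices $Q_k\neq Q_l$, the set $\{y: y^{\top}(Q_k-Q_l)y=0\}$ is the zero set of a nonzero polynomial, because a nonzero symmetric matrix gives a nonzero quadratic form. It is therefore a closed set of measure zero.

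Now consider pairs with equal matrices, $Q_k=Q_l$. Since the pairs are distinct, $\mu_k\neq\mu_l$. Because $Q_k$ is nonsingular, $Q_k(\mu_k-\mu_l)\neq 0$, and so $\{y: y^{\top}Q_k(\mu_k-\mu_l)=0\}$ is a hyperplane.

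The union of these finitely many measure-zero sets is not all of $\mR^n$. Hence I pick $y$ outside all of them.

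\textbf{Step 2 (defining $\Omega$).} With this $y$, the values $y^{\top}Q y$ for $Q\in\mathcal{Q}$ are pairwise distinct. Let $Q$ be the unique element of $\mathcal{Q}$ minimizing $y^{\top}Qy$. Define $\Omega=\{k: Q_k=Q\}$.

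Then every $j\in\Omega$ has $Q_j=Q$, which is the first property. For $k\notin\Omega$ we have $Q_k\neq Q$, so $y^{\top}Q_ky\neq y^{\top}Qy$, and minimality makes this a strict inequality. This gives \eqref{nonstrict}.

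\textbf{Step 3 (choosing $i$).} If $\Omega$ is a singleton, take $i$ to be its element and we are done. Otherwise, all $j\in\Omega$ share $Q$ and have distinct means $\mu_j$. By the choice of $y$, the numbers $y^{\top}Q\mu_j$ for $j\in\Omega$ are pairwise distinct. Let $i$ be the index attaining the unique maximum. Then \eqref{mustrict} holds strictly.

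\textbf{Main obstacle.} The only delicate point is making sure that a single $y$ works simultaneously for the quadratic separations and the linear separations. The avoidance of a finite union of proper algebraic sets handles this. The key facts needed are that distinct pairs with equal $Q_k$ must have distinct $\mu_k$, and that $Q$ is nonsingular, so $Q(\mu_k-\mu_l)\neq0$.

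If one wanted a constructive $y$ instead, one could first take $y$ generic for the quadratic forms. One would then perturb it slightly in the direction needed for the linear forms: strict inequalities are preserved under small perturbations, while the linear conditions fail only on hyperplanes. I would present the measure-zero argument, as it is cleaner.
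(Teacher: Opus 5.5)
Your proof is correct and follows the paper's route. You choose $y$ generically so that the quadratic forms $y^{\top}Q_ky$ separate the distinct matrices, take $\Omega$ to be the class of the minimizing $Q$, and use distinctness of the means within $\Omega$ to pick $i$ as the maximizer of $y^{\top}Q\mu_j$.

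The only difference is in how $y$ is chosen. You impose both genericity conditions at once by avoiding a single finite union of null sets, which also absorbs the paper's separate ``all $Q_k$ distinct'' case. The paper instead first fixes $y$ for the quadratic separation and then moves within the open set where \eqref{nonstrict} holds to separate the linear forms; this is exactly the alternative you mention at the end.
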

\begin{proof}
	We consider two cases, the first where the $Q_k$ are all distinct, and the second where some (or even all) may be equal. If the $Q_k$ are distinct then the set of $y\in \mR^n$ where for $i\neq j$ the pairwise equalities  $y^{T}Q_iy=y^{T}Q_jy$ hold  are hypersurfaces of dimension at most $n-1$ each. As there are a finite number of such surfaces, there must be a $y$ for which all $y^{T}Q_ky$  are distinct. Then for some $i$ 
	\[ 
	y^{T}Q_iy<y^{T}Q_jy, ~\forall~j\neq i.
	\]
	The result thus holds when $\Omega=\{i\}$.
	
	Suppose then the $Q_i$ are not distinct. Then there is a partition  $\Omega_1, \cdots, \Omega_L$ of $\{1,\cdots, N\}$ such that all the $Q_k$ indexed by elements of a given   $\Omega_l$ are equal, i.e. 
	\[ 
	Q_k=Q_j, ~\forall \{k,j\}\subset \Omega_l.
	\] Arguing as above there must be one subset of the partition, $\Omega=\Omega_J$ say, and a $y\in\mathbb R^n$ for which \eqref{nonstrict} holds. Recalling that $Q_j=Q$ say, for all  $j\in \Omega$ and  that the pairs $(\mu_j, Q)$ are distinct, so then are the $\mu_j$ for $j\in \Omega.$ Now the set of $y$ satisfying \eqref{nonstrict} is in fact an open set and because $Q$ is positive definite symmetric,   the corresponding set of $Qy$ is also open. Consequently. as the  $\mu_j$ are distinct there must be a $Qy$ in this open set for which $y^{\top}Q\mu_j$ for $j\in \Omega$ are distinct. Then \eqref{mustrict} holds.
	
\end{proof}

We now take a further step towards establishing the dominance  motivating  Lemma \ref{lSdiff}, by introducing the framework allowing consideration of a weighted finite sum of distinct Gaussian densities. 
\begin{lemma}\label{lweights}
	Adopt the hypothesis of Lemma \ref{lSdiff}. With $\alpha_i>0$ and $\sum_{i=1}^{m}\alpha_i=1$,
	define 
	\[
	\gamma_i(x)
	=
	\alpha_i\,\phi\left (x;\mu_i,Q_i^{-1}\right )/\sum_{k=1}^N \alpha_k\,\phi\left (x;\mu_k,Q_k^{-1}\right ).
	\]
	where $\phi\left (x;\mu_i,Q_i^{-1}\right )$ is the probability density function
	\[ 
	\phi\left (x;\mu_i,Q_i^{-1}\right )=e^{-\frac{1}{2}(x-\mu_i)^{\top}Q_i(x-\mu_i)}/\sqrt{\det(2\pi Q_i^{-1})}.
	\]
	Then there exists a single $i\in \{1,\cdots,N\}$,  and a vector $y\in \mR^n$, with the following property:  for some  $M>0$ , there exist $\delta(M)>0$, $\nu(M)>0$, and $r(M)\in \mR$ such that for all $b\in\mR$ with $\|b\|\leq M,$ and $a>\delta(M)$,  
	\begin{equation}\label{gammabound}
		1-	\gamma_i(ay+b)=\sum_{k\neq i}\gamma_k(ay+b)\leq  (N-1)e^{-\nu(M)a-r(M)}.
	\end{equation}
	
\end{lemma}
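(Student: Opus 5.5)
We take the index $i$, the set $\Omega$ and the vector $y$ supplied by Lemma \ref{lSdiff}, and compare the Gaussian exponents along the ray $x=ay+b$. Since
\[
\frac{\gamma_k(x)}{\gamma_i(x)}=\frac{\alpha_k\,\phi(x;\mu_k,Q_k^{-1})}{\alpha_i\,\phi(x;\mu_i,Q^{-1})}\geq \frac{\gamma_k(x)}{1}\cdot\frac{1}{\gamma_i(x)^{-1}}
\]
and $\gamma_i\le 1$, we have $\gamma_k(x)\leq \gamma_k(x)/\gamma_i(x)$. It therefore suffices to bound each ratio, for $k\neq i$, by $e^{-\nu(M)a-r(M)}$ uniformly in $\|b\|\leq M$ once $a>\delta(M)$. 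Summing the $N-1$ ratios then gives \eqref{gammabound}, and the identity $1-\gamma_i=\sum_{k\neq i}\gamma_k$ holds trivially because the $\gamma_k$ sum to one.

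The logarithm of the ratio equals a constant, $\log(\alpha_k/\alpha_i)+\tfrac12\log(\det Q_k/\det Q)$, plus $E_k(a,b)$, where
\[
E_k(a,b)=-\tfrac12 (ay+b-\mu_k)^{\top}Q_k(ay+b-\mu_k)+\tfrac12(ay+b-\mu_i)^{\top}Q(ay+b-\mu_i).
\]
Expanding in powers of $a$ gives three terms:
\begin{itemize}
\item the coefficient of $a^2$ is $-\tfrac12 y^{\top}(Q_k-Q)y$;
\item the coefficient of $a$ is $-y^{\top}Q_k(b-\mu_k)+y^{\top}Q(b-\mu_i)$;
\item the $a$-independent remainder is bounded in absolute value by a constant $c_1(M)$ for $\|b\|\le M$.
\end{itemize}
The linear coefficient is likewise bounded by some $c_2(M)$.

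\textbf{Case $k\notin\Omega$.} By \eqref{nonstrict} the quadratic coefficient equals $-\kappa_k a^2$ with $\kappa_k>0$. Hence $E_k\leq -\kappa_k a^2+c_2(M)a+c_1(M)$. For $a$ large this is at most $-a+c$, in fact at most $-\nu a-r$ for any chosen $\nu$ after enlarging $\delta(M)$.

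\textbf{Case $k\in\Omega\setminus\{i\}$.} Here $Q_k=Q$, so the quadratic terms cancel. The dependence on $b$ in the linear coefficient also cancels, leaving exactly $y^{\top}Q(\mu_k-\mu_i)$, which is strictly negative by \eqref{mustrict}. Call it $-\beta_k$. Then $E_k= -\beta_k a+\text{const}(b)$, where the constant equals $-\tfrac12(b-\mu_k)^{\top}Q(b-\mu_k)+\tfrac12(b-\mu_i)^{\top}Q(b-\mu_i)$. This constant is bounded by $c_3(M)$.

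We then take $\nu(M)$ to be the minimum of the $\beta_k$ and of $1$. We take $r(M)$ to absorb the negatives of $c_1,c_3$ and the $\alpha$ and $\det$ constants. Finally we choose $\delta(M)$ so large that in the first case $-\kappa_k a^2+c_2a\leq -\nu a$ for every $k\notin\Omega$.

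The only delicate point is the cancellation in the second case. That step is exactly why Lemma \ref{lSdiff} arranges equal $Q$ within $\Omega$ and a strict mean ordering: without equal $Q$ the $b$-dependence would enter at order $a$, and without the strict ordering the decay rate would be lost. Since $M$ is arbitrary, the statement holds for any $M>0$, and in particular for some $M$.
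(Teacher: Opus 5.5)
Your proposal is correct and follows essentially the same route as the paper. You take $i$, $\Omega$ and $y$ from Lemma \ref{lSdiff} and bound $\log(\gamma_k/\gamma_i)$ along $x=ay+b$, which is the paper's difference $\rho_k-\rho_i$, in two cases: for $k\notin\Omega$ you get quadratic decay via \eqref{nonstrict}, and for $k\in\Omega\setminus\{i\}$ the $a^2$ term and the $b$-dependence of the linear term cancel, leaving the strictly negative slope $y^{\top}Q(\mu_k-\mu_i)$ from \eqref{mustrict}; you then sum over $k\neq i$ and use $\gamma_i\le 1$.

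The only blemish is your first display: its right-hand side reduces to $\gamma_k\gamma_i$ and plays no role. The inequality you actually need, $\gamma_k\le\gamma_k/\gamma_i$, follows directly from $0<\gamma_i\le 1$, exactly as you state in words.
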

\begin{proof}
	The equality part of \eqref{gammabound} follows (actually for any $y,a$ and $b$) from the fact that the $\gamma_i(x)>0$ sum to one.
	 Define $\rho_i(x)=\log \gamma_i(x)$ and set \(ay+b=x\). For each \(i\), 
	\begin{align*}
		\rho_i(ay+b)
		&= \log \alpha_i
		-\dfrac{1}{2}\log\left [\det(2\pi Q_i^{-1})\right ]\\
		&
		-\frac12(ay+b-\mu_i)^T Q_i (ay+b-\mu_i)\\
		&-\log\sum_{k=1}^N \alpha_k\phi(ay+b;\mu_k,Q_k^{-1}).
	\end{align*}
	The last term, being common to all $i$, cancels in the expression for any differences of two $\rho_i(x)$ and will henceforth be  able to be ignored. The third term, a quadratic form in $y$, becomes $\beta_i(a)$ given by
    \begin{small}
	\[
	\beta_i(a)=-\frac12 a^2\, y^T Q_i y
	-a\, y^T Q_i (b-\mu_i)
	-\frac12 (b-\mu_i)^T Q_i (b-\mu_i).
	\]
    \end{small}
	Since the pairs $ (\mu_i, Q_i)$ are distinct and the $Q_i$ are positive definite, the conditions of Lemma \ref{lSdiff} hold. The Lemma implies there  exists a set $\Omega$,  a particular index $i\in\Omega$ and a $y\in\mathbb R^n$ for which the inequalities of the Lemma hold. Use these values and through renumbering if necessary, suppose the particular index is 1. It is now necessary to consider two cases, one where there exists $j\notin \Omega$, and one for which there exists no such $j$, i.e. all $Q_j$ are the same. 
	
	In the first case,  for $j\notin \Omega$,  \eqref{nonstrict} ensures that
	there are $\kappa_j>0$, $\theta_j(b)$, affine in $b$ and $\chi_j(b)$, quadratic in $b$ such that for all $a>0$
	\[
	\beta_1(a)-\beta_j(a)\geq \kappa_j a^2+ \theta_j(b) a +\chi_j(b).
	\]
	Thus for all $j\notin\Omega$,
	\begin{align*}
		&\rho_1(ay+b)-\rho_j(ay+b)\geq \kappa_j a^2+ \theta_j(b) a +\chi_j(b)+\log \alpha_1\\
		&-\dfrac{1}{2}\log\left [\det(2\pi Q_1^{-1})\right ]-\log \alpha_j+\dfrac{1}{2}\log\left [\det(2\pi Q_j^{-1})\right ].
	\end{align*}

	Note one can find a sufficiently large $a$, such that for all $\|b\|\leq M$ and $j\notin\Omega$,  $\rho_j a+ \theta_j(b)>0$. More precisely, one can find a  $\delta_1(M)>0$ and a $\nu_1(M)>0$  such that for all $a>\delta_1(M)$ and all $\|b\|\leq M$
	\begin{align*}
		\rho_j a^2+ \theta_j(b) a =a(\rho_j a+ \theta_j(b))\geq a\nu_1(M).
	\end{align*}
	In other words, 
	with  $\|b\|\leq M$, there are   $\delta_1(M)>0$, $r_1(M)$ and  $\nu_1(M)>0$ such that for all $a>\delta_1(M) $ 
	\[
	\rho_1(ay+b)\geq \rho_j(ay+b)+ a\nu_1(M)+r_1(M),~\forall j\notin\Omega.
	\]
	The second case to consider arises when $j\in\Omega$, implying  $Q_j=Q$. Forming $\beta_1(a)-\beta_j(a)$ as before, the quadratic term in $a$ no longer appears, and the affine term in $a$ has sign determined by the inequality of \eqref{mustrict}. Moreover, this affine term is no longer dependent on $b$. In particular, for all $j\in \Omega\setminus \{1\}$, 
	there are $\rho_j>0$, and $\chi_j(b)$ quadratic in $b$ such that for $a>0$
	\[ 
	\beta_1(a)-\beta_j(a)\geq \rho_j a +\chi_j(b).
	\] 
	
	In this case with  $\|b\|\leq M$, there are   $\delta_2(M)>0$, $r_2(M)$ and  $\nu_2(M)>0$ such that for all $a>\delta_2(M) $
	\[ \rho_1(ay+b)\geq \rho_j(ay+b)+ a\nu_2(M)+r_2(M),~\forall j\\\in\Omega\setminus \{1\} \]
	It is then straightforward to combine the two cases to conclude that there exist $\delta(M), r(M)>0$ and $\nu(M)$ such that for all $a>\delta(M) $
	\begin{equation}\label{relgam2}
		\rho_1(ay+b)\geq \rho_j(ay+b)+ a\nu(M)+r(M),~\forall j\neq 1.
	\end{equation}
	This inequality is equivalent to
	\[ 
	\gamma_1(ay+b)\geq \gamma_j(ay+b)e^{\nu(M)a+r(M)},~\forall j\neq 1,
	\]
	or equivalently
	\[ 
	\gamma_j(ay+b)\leq \gamma_1(ay+b)e^{-\nu(M)a-r(M)},~\forall j\neq 1,
	\]
	from which through summation we obtain
	\begin{align*}
		1-	\gamma_1(ay+b)&=\sum_{j=2}^N\gamma_j(ay+b)\\
		&\leq (N-1)\gamma_1(ay+b)e^{-\nu(M)a-r(M)}\\
		&\leq (N-1)e^{-\nu(M)a-r(M)},
	\end{align*}
	as $\gamma_1(ay+b)\leq 1.$ The result   follows with $i=1$.
\end{proof}

We need one last Lemma that provides an expression for Cov$(U|X=x)$ when $Z$ is a finite Gaussian mixture.
A very similar expression is used in \cite{bar-shalom}, though we could not find a derivation there.

\begin{lemma} \label{lcondcov}
	Under Assumption \ref{akey} suppose that the random variable $Z$ has a density $\sum_{i=1}^N\alpha_iN(\mu_i,S_i)$ with $\alpha_i>0,\sum_i\alpha_i=1$. as in \eqref{GMM},
	with $\mu_i\in \mR^n$, positive definite  $S_i=S_i^{\top}\in \mR^{n\times n}$.  Consider   $\gamma_i(x)$ and $\phi(x;\mu_i,Q_i)$ as in Lemma \ref{lweights} with $Q_i=(S_i+I)^{-1}$. Define
	\begin{equation}\label{mui}
		m_i(x)=(S_i+I)^{-1}(x-\mu_i),~\Delta_i=S_i(S_i+I)^{-1},
	\end{equation}
  and
    \begin{equation}\label{mbar}
    \bar m(x)=\sum_{k=1}^{N}\gamma_k(x)(m_k(x).
	\end{equation} 
	Then $\Sigma_U(x)=Cov(U|X=x)$ is given by
   
	\begin{align}
		\Sigma_U(x)&=\sum_{k=1}^{N}\gamma_k(x)\Delta_k\notag\\
		&+\sum_{k=1}^{N}\gamma_k(x)\big(m_k(x)-\bar m(x)\big)\big(m_k^{\top}(x)-\bar{m}^{\top}(x)\big)\label{Sigma}
	\end{align}
\end{lemma}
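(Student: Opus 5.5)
The plan is to compute the joint density of $(Z,U)$ component by component and then combine the pieces with the law of total covariance. Introduce a latent label $L\in\{1,\dots,N\}$ with $\Pr(L=k)=\alpha_k$, so that conditional on $L=k$ we have $Z\sim N(\mu_k,S_k)$, independent of $U\sim N(0,I)$. Conditional on $L=k$, the pair $(X,U)$ with $X=Z+U$ is jointly Gaussian, with $X\sim N(\mu_k,S_k+I)$, $U\sim N(0,I)$ and $\mbox{Cov}(U,X)=I$.

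The first step is the standard Gaussian conditioning formula. It gives
\[
E[U\mid X=x,L=k]=(S_k+I)^{-1}(x-\mu_k)=m_k(x)
\]
and
\[
\mbox{Cov}(U\mid X=x,L=k)=I-(S_k+I)^{-1}=S_k(S_k+I)^{-1}=\Delta_k.
\]
These match \eqref{mui}, and the identity $I-(S_k+I)^{-1}=S_k(S_k+I)^{-1}$ follows directly because $S_k$ and $(S_k+I)^{-1}$ commute.

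The second step identifies the posterior weights. By Bayes' rule,
\[
\Pr(L=k\mid X=x)=\frac{\alpha_k\,\phi(x;\mu_k,S_k+I)}{\sum_l\alpha_l\,\phi(x;\mu_l,S_l+I)}.
\]
With $Q_k=(S_k+I)^{-1}$, the density $\phi(x;\mu_k,Q_k^{-1})$ is exactly the one written in Lemma \ref{lweights}, so $\Pr(L=k\mid X=x)=\gamma_k(x)$. Here I read the notation $\phi(x;\mu_i,Q_i)$ in the statement as the density with covariance $Q_i^{-1}$.

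The final step applies the law of total covariance, conditioning on $X=x$ and mixing over $L$:
\[
\Sigma_U(x)=E\big[\mbox{Cov}(U\mid X,L)\mid X=x\big]+\mbox{Cov}\big(E[U\mid X,L]\mid X=x\big).
\]
The first term equals $\sum_k\gamma_k(x)\Delta_k$. For the second term, the conditional mean of $m_L(x)$ is $\bar m(x)$, which gives the weighted scatter term in \eqref{Sigma}.

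The only point that needs care is the legitimacy of the latent-variable representation. A finite mixture density for $Z$ can always be realized by such an $L$ on an enlarged probability space, independent of $U$, and this does not change the joint law of $(Z,U)$. So conditioning on $L$ is harmless, and I do not expect any real obstacle here.

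As a fallback, the same formula can be verified directly from $p_{U\mid X}(u\mid x)=p_Z(x-u)p_U(u)/p_X(x)$. This expression expands into $\sum_k\gamma_k(x)\,N(m_k(x),\Delta_k)$ in $u$, and the covariance of a Gaussian mixture gives \eqref{Sigma} immediately.
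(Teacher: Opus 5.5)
Your proposal is correct and follows essentially the same route as the paper: introduce a latent component label, apply Gaussian conditioning within each component to get $m_k(x)$ and $\Delta_k$, identify the posterior weights as $\gamma_k(x)$, and combine these with the law of total covariance. Your version is slightly more careful than the paper's, since you verify $I-(S_k+I)^{-1}=\Delta_k$ explicitly and keep the weights $\gamma_k(x)$ in the scatter term, which the paper's final displayed line drops.
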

\begin{proof}

Introduce a latent variable $\cI$ corresponding to each of the $N$ components of the mixture. 	Conditioned on $\cI=k$, the joint gaussianity of $X,Z$ and $U$ yields
\begin{align*}
\mbox{Cov}(U|x,\cI=k)&=\Delta_k\\
E[U|x,\cI=k]&=Q_k(x-\mu_k)=m_k(x)
\end{align*}
and the posterior probability of component $k$ is 
\[
\gamma_k(x)=P(\cI=k|X=x)=\frac{\alpha_kN(\mu_k,Q_k^{-1})}{\sum_{k=1}^N\alpha_kN(\mu_k,Q_k^{-1})}
\]
By the law of total covariance,
\begin{align*}
    \Sigma_U(x)&=E[\mbox{Cov}(U|(x,\cI)]+\mbox{Cov}(E[U|x,\cI])]\notag\\
    &=\sum_{k=1}^N\gamma_k\Delta_k+\sum_{k=1}^N[m_k(x)-\bar m(x)][m_k^{\top}(x)-\bar m(x)^{\top}]
\end{align*}

\end{proof}
We now show that if in Assumption \ref{akey}, $Z$ and hence $X$ is a Gaussian Mixture then Assumption \ref{aray} holds.

\begin{theorem}\label{tray}
	Under Assumption \ref{akey} suppose the pdf of  $Z$ is as in Lemma \ref{lcondcov}.
	Then Assumption \ref{aray} holds with $\Sigma_U(x)=\mbox{Cov}(U|X=x)$ as defined in Lemma \ref{lpos}.
\end{theorem}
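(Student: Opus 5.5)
We will verify Assumption \ref{aray} directly from the closed-form expression \eqref{Sigma} of Lemma \ref{lcondcov}, taking $\theta$ to be the vector $y$ supplied by Lemma \ref{lweights}. First we reduce to distinct pairs. The density of $X=Z+U$ is $\sum_i\alpha_i\phi(x;\mu_i,Q_i^{-1})$ with $Q_i=(S_i+I)^{-1}$. Components sharing the same pair $(\mu_i,S_i)$ are merged by adding their weights. This leaves $\Sigma_U(x)$ unchanged, since such components have identical $m_k(x)$ and $\Delta_k$. The pairs $(\mu_k,Q_k)$ are then distinct and $Q_k$ is positive definite, so Lemma \ref{lSdiff} and Lemma \ref{lweights} apply. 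Fix any $M>0$ and let $i$ (renumbered as $1$) and $y$ be as in Lemma \ref{lweights}. Set $\theta=y$, which is nonzero, since $y=0$ would make all $y^{\top}Q_ky$ equal, contradicting \eqref{nonstrict} or \eqref{mustrict}. Set $\Sigma_U^*(\theta)=\Delta_1=S_1(S_1+I)^{-1}$. Then for $a>\delta(M)$ and $\|b\|\le M$ we have $1-\gamma_1(ay+b)\le (N-1)e^{-\nu(M)a-r(M)}$ and $\gamma_k(ay+b)\le (N-1)e^{-\nu(M)a-r(M)}$ for $k\ne1$.

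Next we bound the two pieces of \eqref{Sigma} at $x=ay+b$. The first piece satisfies
\[
\Big\|\sum_k\gamma_k\Delta_k-\Delta_1\Big\|\le (1-\gamma_1)\|\Delta_1\|+\sum_{k\ne 1}\gamma_k\|\Delta_k\|,
\]
and each $\|\Delta_k\|\le 1$, so this decays like $e^{-\nu a}$ uniformly in $\|b\|\le M$. For the second piece, write $m_k-\bar m=\sum_{l}\gamma_l(m_k-m_l)$. The $k=1$ term becomes $\gamma_1\|\sum_{l\ne1}\gamma_l(m_1-m_l)\|^2$, and each $k\ne1$ term is at most $\gamma_k\max_l\|m_k-m_l\|^2$. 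Every $m_k(ay+b)=Q_k(ay+b-\mu_k)$ is affine in $a$, so $\|m_k-m_l\|\le c_1a+c_2(M)$ uniformly in $\|b\|\le M$. The whole second piece is therefore bounded by a constant times $(a+c)^2 e^{-\nu(M)a}$.

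The main obstacle is this second piece. The mismatch $m_k-m_l$ grows linearly in $a$ along the ray, so we must make sure the exponential weight decay dominates it. The product bound above settles this: a polynomial times $e^{-\nu a}$ tends to $0$, which is exactly where the linear-in-$a$ rate $\nu(M)a$ in \eqref{gammabound} is needed, rather than mere convergence of $\gamma_1$ to $1$. Combining the two bounds, for every $\epsilon>0$ we choose $\eta(\epsilon,M)\ge\delta(M)$ so large that $C(M)(1+(a+c)^2)e^{-\nu(M)a}\le\epsilon$ for all $a>\eta$. This gives \eqref{rayconv}, with limit $\Delta_1$ independent of $b$, and establishes Assumption \ref{aray}.
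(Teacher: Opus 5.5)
Your proposal is correct and follows essentially the same route as the paper. You invoke Lemma \ref{lweights} to get exponential dominance of one component along the rays $ay+b$, set $\theta=y$ and $\Sigma_U^*(\theta)=\Delta_1$, and bound both terms of \eqref{Sigma} by a quadratic in $a$ times $e^{-\nu(M)a}$, uniformly over $\|b\|\le M$. Merging repeated components and using the identity $m_k-\bar m=\sum_l\gamma_l(m_k-m_l)$ are tidier than the paper's bookkeeping.

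One small caveat: if only one distinct component remains after merging, \eqref{nonstrict} and \eqref{mustrict} are vacuous, so they do not rule out $y=0$ as you claim. In that case $\Sigma_U\equiv\Delta_1$, and any nonzero $\theta$ works.
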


\begin{proof}
	Adopt the notation of Lemma \ref{lweights} with $\gamma_i(x)$ as given there. Identifying $Q_k^{-1}=(S_k+I)$,  by Lemma \ref{lweights} there is an $i\in \{1,\cdots, N\}$,   $y\in\mR^n$ and $M>0$, for which the following holds:
	With $a\in \mR$,    there are $ \delta(M)>0$, $r(M)$ and $\nu(M)>0$, such that for all $a\geq\delta(M)$, and $b\in \mR^n$ with $\|b\|\leq M$, \eqref{gammabound} holds.
	
	Suppose $a, y,$ and $b$ are such that \eqref{gammabound} holds. Relabel  $i$ as 1. Consider the quantities defined in Lemma \ref{lcondcov}. Use Lemma \ref{lweights} and the fact that $m_k(x)$ has an affine dependence on $x$.
	For $k\neq 1$, $x=ay+b$ as in \eqref{gammabound}, $\|b\|\leq M$ one has from  \eqref{mui} and \eqref{mbar} the following: For all $a>\delta(M)$ and some $\nu(M)>0$, $r(M)$, $K_i(M)>0$,
	\begin{align*}
		&\left \|\sum_{k=1}^{m}\gamma_k(x)m_k(x)m_k^{\top}(x)-\bar{m}(x)\bar{m}^{\top}(x)\right \|\\
		&\leq\gamma_1(x)(1-\gamma_1(x))\left \|m_k(x)m_k^{\top}(x)\right \|\\
		&
		+\left \|\sum_{k=2}^{m}\gamma_k(x)m_k(x)m_k^{\top}(x)\right \|\\
		&+\left  \|\sum_{k=2}^{m}\sum_{j=1}^{m}\gamma_k(x)\gamma_j(x)m_k(x)m^{\top}_j(x)\right \|\\
		&\leq 	 K_1(M)e^{-a\nu(M)}(K_2(M)a^2+K_3(M)a+K_4(M)).
	\end{align*}
	Thus from \eqref{Sigma}  with $y$ and $b$ as in Lemma \ref{lweights}, for all $a>\delta(M)$ and some $\nu(M)>0$, $K_i(M)>0$
	\begin{align*}
	&	\|\Sigma_U(ay+b)	-\Delta_1\|\\
	&\leq (1-\gamma_1(ay+b))\|\Delta_1\|+\left\|\sum_{k=2}^{m}\gamma_k(ay+b)\Delta_k\right\|\\
		&+K_5(M)e^{-a\nu(M)}(K_6(M)a^2+K_7(M)a+K_8(M))\\
		&\leq K_{9}(M)e^{-a\nu(M)}(K_{10}(M)a^2+K_{11}(M)a+K_{12}(M)).
	\end{align*}
	Thus Assumption \ref{aray} holds, with $\Sigma_U^*(y)=\Delta_1$.
\end{proof}

\end{document}